\documentclass{article}

\usepackage{arxiv}

\usepackage{natbib}

\usepackage{microtype}
\usepackage{graphicx}
\usepackage{subcaption}
\usepackage{booktabs} 
\usepackage{placeins}

\usepackage{hyperref}

\usepackage{algorithm}
\usepackage{algorithmic}

\usepackage{hyperref}
\usepackage{url}

\usepackage{amsmath}
\usepackage{amssymb}
\usepackage{mathtools}
\usepackage{amsthm}

\usepackage{algorithm}
\usepackage{algorithmic}

\usepackage{multirow}

\theoremstyle{plain}

\theoremstyle{definition}

\newtheorem{assumption}{Assumption}

\title{Breaking the Shot-Noise Barrier: Cached Recycled Variance-Reduced Gradients for Quantum Optimization}

\author{Lam M. Nguyen$^{1}$
, Sumanta Mukherjee$^{1}$, Dzung T. Phan$^{1}$ 
\\
$^{1}$ IBM Research \\
\\
\texttt{LamNguyen.MLTD@ibm.com},
\texttt{sumanm03@in.ibm.com},
\texttt{phandu@us.ibm.com}
}

\usepackage{pifont}

\usepackage{epsfig}
\usepackage{amssymb}
\usepackage{amsmath}
\usepackage{amsthm}
\usepackage{amsfonts}
\usepackage{bbding}
\usepackage{array}
\usepackage{paralist}
\usepackage{xargs}                      
\usepackage{caption}

\hypersetup{
  colorlinks   = true, 
  urlcolor     = blue, 
  linkcolor    = blue, 
  citecolor   = blue 
}

\newtheorem{thm}{Theorem}
\newtheorem{lem}{Lemma}

\newcolumntype{C}[1]{>{\centering\let\newline\\\arraybackslash\hspace{0pt}}m{#1}}

\newcommand{\norm}[1]{\left\Vert#1\right\Vert}

\usepackage{xcolor}

\newcommand{\R}{\mathbb{R}}

\newcommand{\zero}[1]{{\boldsymbol{0}}}

\newcommand{\bra}[1]{\langle #1 |}
\newcommand{\ket}[1]{| #1 \rangle}

\newcommand{\calO}{\mathcal{O}}

\usepackage{multirow}


\begin{document}

\maketitle

\begin{abstract}
Variational Quantum Algorithms (VQAs) rely heavily on classical optimization routines to navigate high-dimensional, noisy parameter landscapes. However, the evaluation of analytical gradients via the parameter-shift rule requires $\mathcal{O}(p)$ circuit executions, rendering this approach computationally prohibitive for deep circuits. While simultaneous perturbation methods provide an $\mathcal{O}(1)$ alternative, their gradient estimates suffer from severe $\mathcal{O}(p)$ spatial variance, which impedes convergence. To resolve this tension, we introduce the Cached Recycled Variance-Reduced Gradient (CRVG) by leveraging variance reduction techniques. We formulate two variants: a 3-Circuit (1-Sided) Non-Recursive CRVG and a Recursive CRVG. By introducing a center-point caching mechanism, the 3-Circuit CRVG mathematically recycles quantum measurements to reduce the inner-loop circuit complexity by 25\% across both variants. We establish a theoretical separation between the two routing paths to achieve an $\epsilon^2$-stationary point: the non-recursive variant reaches a sample complexity of $\mathcal{O}(\max\{p/\epsilon^2, p^{2/3}/\epsilon^{10/3}\})$, while the recursive variant achieves an improved complexity of $\mathcal{O}(\max\{p/\epsilon^2, \sqrt{p}/\epsilon^3\})$. Empirically, extensive evaluations on MaxCut and Maximum Independent Set (MIS) benchmarks validate these findings. While standard simultaneous perturbation remains a robust baseline across varying depths, CRVG demonstrates targeted advantages in specific amortizable regimes, such as the non-recursive variant on EfficientSU2 circuits and both variants on depth-three QAOA, yielding superior energy minimums and tighter output variances. Ultimately, CRVG establishes a principled bias-throughput trade-off for scaling near-term quantum optimization.
\end{abstract}

\section{Introduction}

The promise of Noisy Intermediate-Scale Quantum (NISQ) devices~\citep{preskill2018quantum} hinges on the success of Variational Quantum Algorithms (VQAs), hybrid quantum-classical protocols in which a parameterized quantum circuit $U(\theta)$ prepares a trial state $\ket{\psi(\theta)}$, and a classical optimizer tunes the parameters $\theta\in\R^p$ to minimize the expectation value of a target Hamiltonian~\citep{peruzzo2014variational,farhi2014quantum,cerezo2021variational}:
\begin{equation}\label{eq:vqa_obj}
    f(\theta) = \bra{0} U^\dagger(\theta)\, H\, U(\theta) \ket{0}.
\end{equation}

The Variational Quantum Eigensolver (VQE)~\citep{peruzzo2014variational} and the Quantum Approximate Optimization Algorithm (QAOA)~\citep{farhi2014quantum} are canonical instances that have been demonstrated on present-day hardware for problems in quantum chemistry~\citep{kandala2017hardware} and combinatorial optimization~\citep{harrigan2021quantum}.

However, the efficacy of the classical optimizer is fundamentally constrained by the quantum hardware. In classical machine learning, backpropagation computes the full gradient of a neural network in a single backward pass, with variance typically stemming only from data subsampling or mini-batching. In contrast, quantum circuits are treated as black-box oracles that suffer from two distinct, compounding sources of noise: physical shot noise (the inherent variance of projecting a quantum state into a classical bitstring) and algorithmic spatial noise (the variance injected by stochastic gradient approximation). 

The most prevalent method for obtaining exact analytical quantum gradients, the Parameter-Shift Rule (PSR)~\citep{mitarai2018quantum,schuld2019evaluating}, computes each partial derivative $\partial f/\partial\theta_j$ by evaluating the circuit at two shifted parameter configurations. This requires $2p$ distinct circuit evaluations per gradient step, scaling linearly with the number of parameterized gates $p$. Furthermore, because quantum states collapse upon measurement, the expectation value must be approximated by taking a finite number of shots, meaning the ``exact'' PSR gradient is inherently corrupted by physical shot noise \citep{sweke2020stochastic}.



To avoid the prohibitive $\mathcal{O}(p)$ execution scaling, the quantum computing community frequently turns to gradient-free or simultaneous perturbation methods~\citep{spall2002multivariate}. These estimators compute a descent direction using a constant number of circuit evaluations, regardless of the dimension $p$. However, this $\mathcal{O}(1)$ circuit complexity comes at a severe cost: simultaneous perturbation injects massive spatial variance into the gradient estimate that scales as $\mathcal{O}(p)$~\citep{sweke2020stochastic, gacon2021simultaneous}. In deep circuits, this variance often overwhelms the true descent signal, resulting in random-walk behavior. By coupling these quantum variance penalties with foundational non-convex bounds~\citep{ghadimi2013stochastic}, we observe a fundamental dimensionality barrier: despite their different per-step circuit costs, both PSR and SPSA ultimately suffer from the exact same asymptotic sample complexity of $\mathcal{O}(p/\epsilon^4)$ to reach an $\epsilon^2$-accurate gradient. The injected variance penalty of SPSA precisely cancels its per-step efficiency advantage. This creates a significant structural bottleneck in the field: no existing method simultaneously achieves an $\mathcal{O}(1)$ per-step circuit cost, tightly bounded spatial variance, and compatibility with NISQ hardware constraints. The parameter-shift rule is precise but expensive and SPSA is computationally cheap but excessively noisy.

\paragraph{The Key Insight: Temporal State Caching.}
In this work, we resolve this trilemma by observing that the structure of variance reduction~\citep{johnson2013accelerating,nguyen2017sarah} creates a natural opportunity for a \emph{classical-quantum memory bridge} that has been entirely overlooked in the quantum optimization literature. By utilizing cheap classical RAM to store a quantum state's expectation value, we can systematically bypass the dominant latency bottleneck of NISQ devices: Quantum State Preparation and Measurement (SPAM) overhead.
Specifically:
\begin{enumerate}
    \item The \emph{one-sided} (forward) finite-difference gradient estimator explicitly evaluates the objective function at the \emph{current} parameter point $\theta_t$, unlike the two-sided symmetric estimator which only evaluates at shifted points $\theta_t \pm \epsilon\Delta$.
    \item In a variance-reduced update framework, gradient estimates share a common reference point: the estimator requires the current center-point $f(\theta_t)$, along with a reference center-point that was already physically evaluated, either statically at the epoch start $f(\theta_0)$ (for non-recursive routing) or dynamically at the previous step $f(\theta_{t-1})$ (for recursive routing).
    \item By \emph{caching} these reference measurements in classical memory, we eliminate one full circuit execution and measurement per inner step, yielding a strict 25\% reduction in quantum hardware utilization for both variants.
\end{enumerate}

This insight leads to the \emph{Cached Recycled Variance-Reduced Gradient} (CRVG), a quantum-native optimization framework that achieves robust and rapid convergence by leveraging structure-aware shifts, noise cancellation, and mathematical state-caching.

\subsection{Contributions}
We summarize our contributions as follows:
\begin{enumerate}
    \item We demonstrate that the historical preference for two-sided symmetric estimators \citep{mitarai2018quantum, schuld2019evaluating, mari2021estimating} creates an unnecessary hardware bottleneck within variance-reduction frameworks. By strategically adopting the one-sided estimator, we unlock temporal state caching. We propose a framework that exploits the one-sided estimator's asymmetric structure to actively recycle unperturbed quantum state evaluations across update steps, reducing the per-step circuit cost from 4 to exactly 3 for both fixed-anchor and history-dependent routing paths.
    \item We develop a non-recursive variance-reduction routing path for quantum optimization. Through rigorous Lyapunov analysis, we establish that this variant is structurally bottlenecked at a sample complexity of $\mathcal{O}(\max\{p/\epsilon^2, p^{2/3}/\epsilon^{10/3}\})$ to reach an $\epsilon^2$-stationary point when restricted to hardware-efficient single-sample inner loops.
    \item We establish a recursive routing path that systematically overcomes the structural limitations of the non-recursive approach. Just as recursive tracking advanced classical optimization by breaking sample-size dependence, we prove that our recursive variant removes the dimensionality bottleneck, converging to an $\epsilon^2$-stationary point with an improved sample complexity of $\mathcal{O}(\max\{p/\epsilon^2, \sqrt{p}/\epsilon^3\})$.
    \item Our numerical experiments demonstrate robust performance on extensive benchmarks, validating the theoretical hardware savings and identifying the specific circuit regimes where each variant provides a tangible advantage.
\end{enumerate}

\textbf{The Bias-Throughput Trade-off: A Paradigm Shift for NISQ Optimization}. Historically, the quantum machine learning community has heavily favored 2-sided symmetric estimators to strictly eliminate first-order finite-difference bias~\citep{mitarai2018quantum,schuld2019evaluating,mari2021estimating}. Consequently, 1-sided estimators have been largely disregarded in Variational Quantum Algorithms. 

However, we demonstrate that this pursuit of strict unbiasedness creates a severe hardware bottleneck when deployed within recursive variance-reduction frameworks. By strategically re-introducing the 1-sided estimator, we exploit its asymmetric structure to enable Temporal State Caching across recursive steps. We prove that trading a marginal $\mathcal{O}(\epsilon)$ mathematical bias for a 25\% reduction in physical circuit executions strictly dominates standard symmetric methods in wall-clock convergence time. This represents a principled bias-throughput trade-off: in the resource-constrained NISQ era, the ability to perform more optimization steps per unit of quantum hardware time outweighs the cost of a controllable, user-specified bias floor.

\subsection{Related Work}

We position CRVG at the intersection of several rapidly evolving research threads in quantum-classical optimization:

\textbf{Stochastic Quantum Optimization.} The realization that quantum expectation values are inherently probabilistic has led to the formalization of ``shot-frugal'' optimization algorithms. \citet{sweke2020stochastic} proved that VQAs can be trained by viewing finite-shot measurements as a form of Stochastic Gradient Descent, formally establishing the severe $\mathcal{O}(p)$ spatial variance penalty of simultaneous perturbation compared to the parameter-shift rule. When coupled with standard non-convex convergence bounds, this variance penalty creates a fundamental dimensionality barrier: both methods ultimately suffer from the exact same asymptotic sample complexity of $\mathcal{O}(p/\epsilon^4)$. To fundamentally break this tie, it is necessary to actively suppress the spatial variance of the $\mathcal{O}(1)$ gradient estimator itself.

\textbf{Simultaneous Perturbation in Quantum Computing.} Simultaneous Perturbation Stochastic Approximation (SPSA)~\citep{spall2002multivariate} is highly favored in quantum computing due to its fixed evaluation cost~\citep{gacon2021simultaneous}. While SPSA successfully reduces the inner-loop overhead, our framework departs from the standard symmetric implementation, explicitly favoring the 1-sided structure to enable classical memory caching and offset the spatial variance penalties.

\textbf{Adaptive and Shot-Frugal Optimizers.} Recent works have sought to mitigate quantum estimator variance by dynamically allocating measurement resources. For example, \citet{kubler2020adaptive} proposed adjusting shot counts dynamically based on gradient variances, while \citet{arrasmith2020operator} focused on operator sampling to save shots across Hamiltonian terms. CRVG is highly complementary to these approaches: whereas they focus on shot distribution budgets, CRVG structurally eliminates entire circuit executions across the optimization trajectory.

\textbf{Analytical Coordinate Methods.} Analytical line-search techniques, such as Sequential Minimal Optimization (SMO)~\citep{nakanishi2020sequential}, exploit the trigonometric nature of Pauli landscapes by evaluating circuits at three macroscopic shift points. However, SMO updates parameters sequentially, requiring $\mathcal{O}(3p)$ circuits per sweep. CRVG provides a stark contrast: it updates the entire parameter vector simultaneously in exactly 3 circuit evaluations total per inner step.

\textbf{Quantum Natural Gradient \& Second-Order Methods.} Methods like Quantum Natural Gradient (QNG)~\citep{stokes2020quantum} precondition the gradient using the Fubini-Study metric tensor to adapt to the landscape geometry, achieving rapid iteration progress. However, computing the Quantum Fisher Information metric demands $\mathcal{O}(p^2)$ circuits. CRVG offers a strict zeroth- or first-order alternative, achieving rapid convergence via variance reduction using only an $\mathcal{O}(1)$ inner-loop overhead without tensor estimation.

\textbf{Variance Reduction in Zeroth-Order Optimization.} In classical optimization, SVRG~\citep{johnson2013accelerating} and SARAH~\citep{nguyen2017sarah} are used to accelerate SGD. \citet{liu2018zeroth} mapped SVRG to zeroth-order optimization, which closely mirrors the quantum black-box paradigm, though their analysis revealed a persistent bias tied to finite approximations. Recently, \citet{sidford2023quantum} explored quantum speedups for stochastic variance reduction, but their approach requires deep, fault-tolerant subroutines such as quantum multivariate mean estimation. In contrast, our CRVG framework is explicitly designed for near-term (NISQ) hardware. It achieves variance reduction strictly through classical control logic, exploiting step-to-step center-point caching to minimize physical quantum evaluations without requiring additional coherence time.
\section{Problem Setup and Background}

\subsection{Optimization Problem}

We consider the unconstrained non-convex optimization problem
\begin{equation}\label{eq:problem}
    \min_{\theta \in \R^p} f(\theta),
\end{equation}
where $f:\R^p \to \R$ is the VQA objective~\eqref{eq:vqa_obj} representing the expectation value of a Hamiltonian $H$ with respect to a parameterized quantum state: $f(\theta) = \langle 0 | U^\dagger(\theta)\, H\, U(\theta) | 0 \rangle$. Here $U(\theta)$ is a unitary operator parameterized by $\theta$.
The parameterized unitary takes the standard layered form
\begin{equation}\label{eq:ansatz}
    U(\theta) = \prod_{j=1}^{p} W_j \, e^{-i\theta_j P_j/2},
\end{equation}
where $W_j$ are fixed entangling unitaries and $P_j \in \{I, X, Y, Z\}^{\otimes n}$ are Pauli-string generators satisfying $P_j^2 = I$ and $\norm{P_j} = 1$.
The function $f$ is treated as a black-box oracle, that is, we can query $f(\theta)$ for any $\theta$ via circuit execution and measurement, but we do \emph{not} have access to $\nabla f(\theta)$.

\subsection{The Gradient Estimation Challenge}

The fundamental challenge in quantum optimization is that the gradient $\nabla f(\theta)$ cannot be computed via backpropagation. Instead, one must rely on:
\begin{itemize}
    \item \textbf{Parameter-Shift Rule (PSR)}~\citep{mitarai2018quantum,schuld2019evaluating}: Computes $\partial f/\partial\theta_j = \frac{1}{2}[f(\theta + \frac{\pi}{2}\mathbf{e}_j) - f(\theta - \frac{\pi}{2}\mathbf{e}_j)]$, requiring $2p$ circuit evaluations per gradient, scaling as $\mathcal{O}(p)$.
    \item \textbf{Simultaneous Perturbation Stochastic Approximation (SPSA)}~\citep{spall2002multivariate}: Computes $\hat{\mathbf{g}} = \frac{f(\theta + \nu\Delta) - f(\theta - \nu\Delta)}{2\nu}\Delta$ with $\Delta \in \{-1,+1\}^p$ drawn uniformly and $\nu > 0$, requiring only $\mathcal{O}(1)$ circuit evaluations but introduces $\mathcal{O}(p)$ spatial variance~\citep{sweke2020stochastic, gacon2021simultaneous}.
\end{itemize}

\subsection{The Bias-Throughput Trade-off}

The current paradigm in Variational Quantum Algorithms is dominated by a pursuit of ``unbiased'' gradient estimators. This is standard in classical optimization, but in the noisy of NISQ devices, this pursuit has become a performance bottleneck. Classical variance reduction (e.g., SVRG~\citep{johnson2013accelerating} and SARAH~\citep{nguyen2017sarah}) suppresses estimator variance via periodic ``snapshot'' gradients and recursive tracking.
However, the snapshot requires an exact full gradient, costing $\calO(p)$ circuits.
When the inner-loop estimator is two-sided symmetric, every recursive update requires four circuit evaluations, two for the current-step estimate and two for the other estimate, and no evaluation can be reused.

We propose a paradigm shift: by transitioning to a one-sided estimator, the circuit evaluation at the current parameter point $f(\theta_t)$ appears explicitly in the estimator and can be cached and recycled across consecutive recursive steps. This center-point caching mechanism reduces the inner-loop cost from 4 to 3 circuits per step, a saving that compounds over the entire optimization trajectory.

\section{The CRVG Algorithm}
\label{sec:algorithm}

The Cached Recycled Variance-Reduced Gradient (CRVG) algorithm resolves a fundamental impasse in quantum optimization: how to achieve the tight variance control of classical variance-reduction methods without sacrificing the highly efficient $\mathcal{O}(1)$ circuit complexity of simultaneous perturbation. 

We achieve this through a simple but powerful hardware-aware design principle called Temporal State Caching. By abandoning the traditional two-sided symmetric estimator in favor of a one-sided approach, CRVG exposes the unperturbed quantum state, allowing us to mathematically recycle expensive quantum measurements across optimization steps. Below, we formalize this mechanism and introduce two distinct routing paths: the Non-Recursive CRVG and the Recursive CRVG.

\subsection{The 1-Sided Estimator and Center-Point Caching}

For a given parameter point $\theta \in \mathbb{R}^p$ and a random perturbation vector $\Delta \in \{-1, +1\}^p$ drawn uniformly (i.e., each component is an independent Rademacher random variable) and $\nu > 0$, we define the \textit{1-sided (forward) simultaneous perturbation gradient estimator}:
\begin{equation}
\label{eq:1sided}
g(\theta, \Delta) := \frac{f(\theta + \nu\Delta) - f(\theta)}{\nu} \Delta.
\end{equation}

Historically, the quantum computing community has favored the 2-sided symmetric estimator, $\frac{f(\theta+\nu\Delta)-f(\theta-\nu\Delta)}{2\nu}\Delta$, to eliminate first-order bias. However, the symmetric estimator shifts in both directions, leaving the actual center point $\theta$ unmeasured. 

The 1-sided estimator requires exactly two circuit evaluations (one at $\theta + \epsilon\Delta$ and one at $\theta$), but it possesses a critical structural advantage: the center-point evaluation $f(\theta)$ appears explicitly. This explicitness enables our core mechanism, as the center-point can be evaluated once, stored in classical memory, and reused. How this caching manifests depends on the chosen variance-reduction routing path:
\begin{itemize}
    \item \textbf{Anchor Caching (Non-Recursive CRVG):} The tracking signal relies on a fixed epoch anchor. We evaluate the anchor's center-point $f(\theta_0)$ exactly once at the start of the epoch and cache it for the entirety of the inner loop.
    \item \textbf{Trajectory Caching (Recursive CRVG):} The tracking signal relies on the immediate previous step. We dynamically cache the current center-point $f(\theta_t)$ at step $t$ so that it serves as the required previous center-point $f(\theta_{t-1})$ for step $t+1$.
\end{itemize}

By intentionally extracting and caching these exact measurements in classical memory, we entirely eliminate the need to re-evaluate them on the quantum hardware during the inner loop. 

\paragraph{The 25\% Efficiency Guarantee.} 
In a standard zeroth-order variance-reduction framework, computing the tracking signal requires two finite-difference gradient estimates along a shared random direction $\Delta$: one at the current point $\theta_t$, and a reference estimate at either the previous point $\theta_{t-1}$ (Recursive) or the static anchor point $\theta_0$ (Non-Recursive). A naive 1-sided implementation requires 4 distinct circuit executions per step: $f(\theta_t)$, $f(\theta_t + \nu\Delta)$, $f(\theta_{\text{ref}})$, and $f(\theta_{\text{ref}} + \nu\Delta)$, where $\theta_{\text{ref}} \in \{\theta_{t-1}, \theta_0\}$. 

With Temporal State Caching, we recognize that the reference expectation value $f(\theta_{\text{ref}})$ was already physically evaluated either dynamically in the previous iteration or statically at the epoch start. By recalling this ``center-point'' directly from classical memory, we only need to execute the two newly perturbed circuits plus the new center-point $f(\theta_t)$ (which is then cached for the subsequent step). This mathematical recycling drops the inner-loop circuit cost from 4 to exactly 3, yielding a strict 25\% reduction in quantum hardware utilization. 

Consequently, when utilizing a stochastic mini-batch of size $B$ for the anchor gradient over an outer loop of $S$ epochs with an inner loop length of $m$, the total number of distinct circuit configurations is evaluated as:
\begin{itemize}
    \item \textbf{Anchor evaluation:} $S(B + 1)$ circuits (evaluating one shared anchor center point and $B$ independent stochastic perturbations at the start of each epoch).
    \item \textbf{Inner loop:} $3Sm$ circuits (exactly 3 per inner step via center-point caching).
    \item \textbf{Total structural cost:} $S(B + 1 + 3m)$ distinct circuit configurations.
\end{itemize}

\subsection{Variant 1: 3-Circuit Non-Recursive CRVG}

The Non-Recursive CRVG variant utilizes an SVRG-type update rule~\citep{johnson2013accelerating}. It anchors the variance reduction to a fixed snapshot parameter, $\theta_0^{(s)}$, which is evaluated once at the beginning of the epoch. The non-recursive tracking estimator is defined as:
\begin{equation}
\label{eq:non-recursive}
v_t = g(\theta_t, \Delta_t) - g(\theta_0^{(s)}, \Delta_t) + v_0^{(s)}.
\end{equation}
By caching the anchor evaluation $f(\theta_0^{(s)})$ for the entire duration of the epoch, we repeatedly recycle this baseline measurement against the shifting inner-loop parameters, significantly compounding our hardware savings.

\begin{algorithm}[h]
\caption{3-Circuit Non-Recursive CRVG with Center-Point Caching}
\label{alg:nonrecursive_crvg}
\begin{algorithmic}[1]
\REQUIRE Initial parameters $\theta_0$, learning rate $\eta > 0$, shift $\nu > 0$, epoch length $m$, number of epochs $S$.
\FOR{$s = 0, 1, \ldots, S-1$}
    \STATE Compute anchor gradient: $v_0^{(s)} = \hat{g}(\theta_0^{(s)})$ \hfill ($\mathcal{O}(1)$ zeroth-order estimate or mini-batch)
    \STATE Update: $\theta_1^{(s)} = \theta_0^{(s)} - \eta v_0^{(s)}$
    \STATE Evaluate and cache: $C_{\text{anchor}} = f(\theta_0^{(s)})$ \hfill // \textbf{Cached for entire epoch}
    \STATE Set $\theta_t = \theta_1^{(s)}$
    \STATE Evaluate and cache: $C_{\text{curr}} = f(\theta_t)$ \hfill // Circuit 1
    \FOR{$t = 1, \ldots, m-1$}
        \STATE Sample $\Delta_t \in \{-1, +1\}^p$ uniformly at random
        \STATE Evaluate: $C_{\text{curr+}} = f(\theta_t + \nu\Delta_t)$ \hfill // Circuit 2
        \STATE Evaluate: $C_{\text{anc+}} = f(\theta_0^{(s)} + \nu\Delta_t)$ \hfill // Circuit 3
        \STATE $g_{\text{curr}} = \frac{1}{\nu}(C_{\text{curr+}} - C_{\text{curr}})\Delta_t$
        \STATE $g_{\text{anchor}} = \frac{1}{\nu}(C_{\text{anc+}} - C_{\text{anchor}})\Delta_t$
        \STATE Anchored update: $v_t = g_{\text{curr}} - g_{\text{anchor}} + v_0^{(s)}$
        \STATE Parameter update: $\theta_{t+1} = \theta_t - \eta v_t$
        \STATE Evaluate: $C_{\text{curr}} = f(\theta_{t+1})$ \hfill // Circuit 1 (for next step)
    \ENDFOR
    \STATE Set $\theta_0^{(s+1)} = \theta_m^{(s)}$
\ENDFOR
\end{algorithmic}
\end{algorithm}

\subsection{Variant 2: 3-Circuit Recursive CRVG}

The Recursive CRVG variant takes caching a step further by employing a SARAH-type update rule~\citep{nguyen2017sarah}. Instead of anchoring to a static epoch snapshot, it draws on the immediate history of the optimization trajectory to bound the spatial variance. The recursive tracking estimator $v_t$ is defined as:
\begin{equation}
\label{eq:recursive}
v_t = g(\theta_t, \Delta_t) - g(\theta_{t-1}, \Delta_t) + v_{t-1}.
\end{equation}
Because the identical perturbation vector $\Delta_t$ is utilized for both shifted evaluations, the spatial variance is tightly correlated and mathematically cancels out across sequential steps. The center-point caching mechanism operates as a continuous rolling window.

\begin{algorithm}[h]
\caption{3-Circuit Recursive CRVG with Center-Point Caching}
\label{alg:recursive_crvg}
\begin{algorithmic}[1]
\REQUIRE Initial parameters $\theta_0$, learning rate $\eta > 0$, shift $\nu > 0$, epoch length $m$, number of epochs $S$.
\FOR{$s = 0, 1, \ldots, S-1$}
    \STATE Compute anchor gradient: $v_0^{(s)} = \hat{g}(\theta_0^{(s)})$ \hfill ($\mathcal{O}(1)$ zeroth-order estimate or mini-batch)
    \STATE Update: $\theta_1^{(s)} = \theta_0^{(s)} - \eta v_0^{(s)}$
    \STATE Evaluate and cache: $C_{\text{prev}} = f(\theta_0^{(s)})$, $\theta_{\text{prev}} = \theta_0^{(s)}$
    \STATE Set $\theta_t = \theta_1^{(s)}$
    \STATE Evaluate and cache: $C_{\text{curr}} = f(\theta_t)$ \hfill // Circuit 1
    \FOR{$t = 1, \ldots, m-1$}
        \STATE Sample $\Delta_t \in \{-1, +1\}^p$ uniformly at random
        \STATE Evaluate: $C_{\text{curr+}} = f(\theta_t + \nu\Delta_t)$ \hfill // Circuit 2
        \STATE Evaluate: $C_{\text{prev+}} = f(\theta_{\text{prev}} + \nu\Delta_t)$ \hfill // Circuit 3
        \STATE $g_{\text{curr}} = \frac{1}{\nu}(C_{\text{curr+}} - C_{\text{curr}})\Delta_t$
        \STATE $g_{\text{prev}} = \frac{1}{\nu}(C_{\text{prev+}} - C_{\text{prev}})\Delta_t$
        \STATE Recursive update: $v_t = g_{\text{curr}} - g_{\text{prev}} + v_{t-1}$
        \STATE $\theta_{\text{prev}} = \theta_t$, \quad $C_{\text{prev}} = C_{\text{curr}}$ \hfill // \textbf{Rolling cache transfer}
        \STATE Parameter update: $\theta_{t+1} = \theta_t - \eta v_t$
        \STATE Evaluate: $C_{\text{curr}} = f(\theta_{t+1})$ \hfill // Circuit 1 (for next step)
    \ENDFOR
    \STATE Set $\theta_0^{(s+1)} = \theta_m^{(s)}$
\ENDFOR
\end{algorithmic}
\end{algorithm}

\section{Convergence Analysis}
\label{sec:convergence}



We now establish the rigorous theoretical convergence and oracle complexities for both variants of the 3-Circuit CRVG. Let $f(\theta)$ be the unconstrained objective function representing the expectation value of a target Hamiltonian $H$, bounded below by $f^*$.

\subsection{Assumptions}

\begin{assumption}[Lower Boundedness]
\label{ass:bounded}
The objective function $f: \mathbb{R}^p \to \mathbb{R}$ is bounded below, i.e., there exists $f^* > -\infty$ such that $f(\theta) \geq f^*$ for all $\theta \in \mathbb{R}^p$.
\end{assumption}

\textit{Justification.} In VQAs, $f(\theta) = \langle\psi(\theta)|H|\psi(\theta)\rangle$ is the expectation value of a Hermitian operator $H$ with finite spectrum. Since $H$ has a minimum eigenvalue $\lambda_{\min}(H)$, we have $f(\theta) \geq \lambda_{\min}(H) > -\infty$ for all $\theta$.

\begin{assumption}[Bounded norm Hermitian $H$]
\label{ass:H_norm_bounded}
The norm of the Hermitian operator $H$ is bounded, i.e., there exists $G > 0$ such that $\| H \| \leq G$.
\end{assumption}

\textit{Justification.} In the theoretical analysis of Variational Quantum Algorithms (VQAs), the parameter landscape's Lipschitz smoothness is governed by the spectral norm of the target observable, $\|H\|$. In classical optimization, assuming a globally bounded norm for the objective function's generator can sometimes be considered a restrictive or artificial constraint. However, in the quantum setting, the upper bound of $\|H\|$ is a strict mathematical guarantee arising directly from the finite-dimensional nature of the qubit system.

\textbf{1. Finite-Dimensional Spectral Bound}

Consider a quantum system composed of $n$ qubits. The corresponding Hilbert space $\mathcal{H}$ has a finite dimension of $d = 2^n$. The target Hamiltonian $H$ is a Hermitian operator acting on this finite-dimensional space, represented by a $2^n \times 2^n$ matrix. 

A foundational theorem in functional analysis states that every linear operator on a finite-dimensional vector space is bounded. The spectral norm of $H$, denoted as $\|H\|_2$ (or simply $\|H\|$), is defined as the maximum absolute value of its eigenvalues:
\begin{equation}
\|H\| = \max_{1 \le i \le 2^n} |\lambda_i|. 
\end{equation}
Because the matrix has exactly $2^n$ eigenvalues, the spectrum is discrete and finite. Therefore, a finite maximum must always exist, guaranteeing that $\|H\| < \infty$.

\textbf{2. Analytical Upper Bound via Pauli Decomposition}

Beyond the existential guarantee, the upper bound of $\|H\|$ can be strictly calculated in practice. In VQE and QAOA formulations, the Hamiltonian is typically decomposed into a linear combination of $K$ Pauli strings (tensor products of Pauli matrices $I, X, Y, Z$):
\begin{equation}
H = \sum_{i=1}^K c_i P_i, 
\end{equation}
where $c_i \in \mathbb{R}$ are the scalar coefficients and $P_i \in \{I, X, Y, Z\}^{\otimes n}$. 

Because every Pauli string $P_i$ is both unitary and Hermitian, its eigenvalues are strictly $\pm 1$, meaning its spectral norm is exactly exactly unity ($\|P_i\| = 1$ for all $i$). By applying the triangle inequality and the scalar multiplication property of norms, we can establish a strict, analytically computable upper bound for the Hamiltonian:
\begin{align}
\|H\| &= \left\| \sum_{i=1}^K c_i P_i \right\| \le \sum_{i=1}^K \| c_i P_i \| = \sum_{i=1}^K |c_i| \|P_i\| = \sum_{i=1}^K |c_i|. 
\end{align}

The Hamiltonian norm is strictly bounded from above by the $L_1$-norm of its Pauli coefficients ($\|H\| \le \sum |c_i|$).

Consequently, the worst-case Lipschitz smoothness constant derived in our analysis ($L \le p\|H\|$) is fundamentally constrained by the physics of the parameterized circuit. The landscape is mathematically guaranteed to be globally $L$-smooth without requiring any artificial bounds or regularization, ensuring the rigorous applicability of the CRVG convergence theorems.


\begin{lem}[Smooth]\label{lem_smooth}
For Variational Quantum Eigensolver (VQE) and Quantum Approximate Optimization Algorithm (QAOA) objective functions parameterized by Pauli generators, the gradient $\nabla f(\theta)$ is Lipschitz continuous with respect to the Hamiltonian norm and parameter dimension $p$:
\begin{equation}
\|\nabla f(x) - \nabla f(y)\| \le p\|H\| \|x - y\| \quad \text{for all } x, y \in \mathbb{R}^p.
\end{equation}
\end{lem}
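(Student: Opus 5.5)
We bound the Hessian of $f$ and then apply the mean value theorem. Since $f$ is a finite composition of smooth matrix exponentials and $H$ is bounded (Assumption~\ref{ass:H_norm_bounded}), $f$ is $C^\infty$. It therefore suffices to show
\[
\|\nabla^2 f(\theta)\|_2 \le p\|H\| \quad \text{for all } \theta \in \R^p .
\]
Given this, integrating $\nabla^2 f$ along the segment $y + t(x-y)$, $t \in [0,1]$, yields the claimed Lipschitz inequality.

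\textbf{Entrywise Hessian bound.} Fix indices $j,k$ and write $V_j(\theta_j) = e^{-i\theta_j P_j/2}$. Because $P_j^2 = I$, we have
\[
\partial_{\theta_j} V_j = -\tfrac{i}{2} P_j V_j ,
\]
and $V_j$ commutes with $P_j$. Hence differentiating $U(\theta)$ with respect to $\theta_j$ inserts a factor $-\tfrac{i}{2}P_j$ at position $j$ of the product~\eqref{eq:ansatz}, with all other factors unitary. Write $f = \bra{0}U^\dagger H U\ket{0}$ and apply the product rule twice. Then $\partial_j\partial_k f$ is a sum of four terms of the types $\bra{0}(\partial\partial U)^\dagger H U\ket{0}$, $\bra{0}(\partial U)^\dagger H (\partial U)\ket{0}$, and their conjugates. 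Each term is a product of unitaries, at most two Pauli factors of norm $1/2$, and $H$. Each term is therefore bounded by $\tfrac14\|H\|$, which gives
\[
|\partial_j\partial_k f(\theta)| \le \|H\| .
\]
An alternative route gives the same constant more cleanly. By the parameter-shift structure, $f$ restricted to any pair of coordinates $(\theta_j,\theta_k)$ is a trigonometric polynomial of degree one in each variable. Applying the parameter-shift rule twice then expresses $\partial_j\partial_k f$ as $\tfrac14$ times a signed sum of four values of $f$, each bounded in absolute value by $\|H\|$.

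\textbf{From entries to the spectral norm.} For a symmetric $p\times p$ matrix $A$ with $|A_{jk}| \le \|H\|$, we have
\[
\|A\|_2 \le \|A\|_F \le \sqrt{p^2\|H\|^2} = p\|H\| .
\]
An equivalent route is $\|A\|_2 \le \max_j \sum_k |A_{jk}| \le p\|H\|$, since $A$ is symmetric. Either way this gives the Hessian bound stated above, and the mean value theorem step concludes the proof.

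\textbf{Main obstacle.} The main difficulty is the entrywise second-derivative bound when $j=k$. In this case the term $\bra{0}(\partial_j^2 U)^\dagger H U\ket{0}$ involves $P_j^2/4 = I/4$, and we must make sure the four product-rule pieces do not add up to more than $\|H\|$. I would handle this with the trigonometric representation
\[
f = a + b\cos\theta_j + c\sin\theta_j ,
\]
which holds because each generator has eigenvalues $\pm1$. Here $|\partial_j^2 f| = |f - a|$, and this is at most $\|H\|$ because
\[
a = \tfrac12\bigl[f(\theta_j) + f(\theta_j+\pi)\bigr]
\]
and $f(\theta_j) - a = \tfrac12\bigl[f(\theta_j) - f(\theta_j+\pi)\bigr]$, which is bounded by $\|H\|$. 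In particular, the constant $p\|H\|$ in Lemma~\ref{lem_smooth} is loose, but it suffices for the lemma as stated.
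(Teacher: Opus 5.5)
Your proof is correct and follows essentially the same route as the paper: bound every Hessian entry by $\|H\|$, pass to the spectral norm through the Frobenius norm to get $\|\nabla^2 f(\theta)\|_2 \le p\|H\|$, and conclude. You also make the final integration step explicit, which the paper leaves implicit. The only difference is how you obtain the entrywise bound---the product rule (four terms, each at most $\|H\|/4$) or a double parameter shift, instead of the paper's nested-commutator estimate $\frac14\cdot 2\cdot 2\,\|H\|$---and your product-rule count already covers $j=k$ (the inserted factor $P_j^2/4 = I/4$ has norm $1/4$), so the diagonal case is not actually an obstacle.
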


\textbf{Proof}. Deferred to Appendix~\ref{sec:appendix_proofs_lem_smooth}. 

This formally proves that VQE and QAOA objective functions satisfy the smoothness assumption required for the CRVG convergence analysis without relying on arbitrary constants. 


\subsection{1-Sided Estimator Bias Bound}

Let $g(\theta, \Delta) = \frac{f(\theta+\nu\Delta)-f(\theta)}{\nu}\Delta$ be the 1-sided gradient estimator with shift $\nu > 0$ and $\Delta \in \{-1,+1\}^p$ drawn uniformly. Define the \emph{expected estimator}:
\[
\bar{g}(\theta) := \mathbb{E}_\Delta[g(\theta, \Delta)].
\]

\begin{thm}
\label{thm_bias}
The 1-sided estimator satisfies:
\begin{equation}
\|\bar{g}(\theta) - \nabla f(\theta)\| \le \frac{\nu p^2 \|H\|}{2}.
\end{equation}
\end{thm}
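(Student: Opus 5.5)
The proof will be a second-order Taylor expansion combined with the second-moment structure of Rademacher vectors, using the smoothness from Lemma~\ref{lem_smooth} with $L = p\|H\|$. The first step is to write, for fixed $\Delta$,
\[
f(\theta+\nu\Delta) - f(\theta) = \nu \langle \nabla f(\theta), \Delta\rangle + R(\theta,\Delta),
\]
where the remainder $R(\theta,\Delta) = \int_0^1 \langle \nabla f(\theta+s\nu\Delta) - \nabla f(\theta), \nu\Delta\rangle\, ds$ satisfies $|R| \le \frac{L}{2}\nu^2\|\Delta\|^2$. This is the standard descent-lemma bound obtained from Lemma~\ref{lem_smooth}. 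Since $\|\Delta\|^2 = p$ for every $\Delta \in \{-1,+1\}^p$, this gives $|R| \le \frac{p^2\|H\|\nu^2}{2}$.

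Substituting into the definition of $g$ gives
\[
g(\theta,\Delta) = \Delta\Delta^\top \nabla f(\theta) + \frac{R(\theta,\Delta)}{\nu}\Delta .
\]
Taking the expectation over $\Delta$, the independence and zero mean of the Rademacher components give $\mathbb{E}[\Delta\Delta^\top] = I$. The first term therefore averages to $\nabla f(\theta)$ exactly, and
\[
\bar g(\theta) - \nabla f(\theta) = \frac{1}{\nu}\,\mathbb{E}_\Delta\big[R(\theta,\Delta)\Delta\big].
\]

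The final step, which needs the most care, is bounding the norm of this vector. The naive estimate $\|\mathbb{E}[R\Delta]\| \le \mathbb{E}[|R|\,\|\Delta\|] \le \frac{p^2\|H\|\nu^2}{2}\sqrt{p}$ gives $\frac{\nu p^{5/2}\|H\|}{2}$, which overshoots the claim by a factor of $\sqrt{p}$. I will remove this factor with a duality argument. Write $\|\mathbb{E}[R\Delta]\| = \sup_{\|u\|=1}\mathbb{E}[R\langle u,\Delta\rangle]$, then apply Cauchy--Schwarz in expectation:
\[
\mathbb{E}[R\langle u,\Delta\rangle] \le \sqrt{\mathbb{E}[R^2]}\,\sqrt{\mathbb{E}[\langle u,\Delta\rangle^2]} = \sqrt{\mathbb{E}[R^2]} \le \frac{p^2\|H\|\nu^2}{2}.
\]
The equality uses $\mathbb{E}[\langle u,\Delta\rangle^2] = \|u\|^2 = 1$, which again follows from $\mathbb{E}[\Delta\Delta^\top]=I$. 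The last inequality uses the pointwise bound on $|R|$ from the first step. Dividing by $\nu$ gives $\|\bar g(\theta)-\nabla f(\theta)\| \le \frac{\nu p^2\|H\|}{2}$, which is exactly the claimed bound.

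The only delicate point is avoiding the extra $\sqrt{p}$ factor. The isotropy of $\Delta$ in the duality step handles it, and the rest is routine Taylor bookkeeping.
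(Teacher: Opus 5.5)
Your proof is correct. It shares the paper's skeleton: a Taylor expansion, then $\mathbb{E}[\Delta\Delta^\top]=I$ to recover $\nabla f(\theta)$ from the linear term, then a bound on the remainder. Where it differs is the one step that matters.

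The paper writes the remainder in second-order integral form,
$\nu\int_0^1(1-\tau)\,\bigl(\Delta^\top\nabla^2 f(\theta+\tau\nu\Delta)\Delta\bigr)\,\Delta\,d\tau$.
It bounds the scalar quadratic form by $p\|H\|\cdot\|\Delta\|^2=p^2\|H\|$ and multiplies by $\int_0^1(1-\tau)\,d\tau=\tfrac12$. It then concludes $\nu p^2\|H\|/2$ directly, without ever accounting for the norm $\|\Delta\|=\sqrt{p}$ of the trailing vector factor. Applied honestly, that direct triangle-inequality estimate gives $\nu p^{5/2}\|H\|/2$, which is exactly the overshoot you flagged.

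Your duality step removes the $\sqrt{p}$ legitimately:
$\|\mathbb{E}[R\Delta]\|=\sup_{\|u\|=1}\mathbb{E}[R\langle u,\Delta\rangle]\le\sqrt{\mathbb{E}[R^2]}$, using $\mathbb{E}[\langle u,\Delta\rangle^2]=\|u\|^2=1$.
So your argument supplies the justification the paper's proof lacks for the stated $p^2$ exponent. The paper relies on that exponent later, when it writes the bias as $\nu L_g/2$ with $L_g\le p^2\|H\|$.

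A secondary, cosmetic difference: you use the first-order integral remainder with gradient differences. That needs only the gradient-Lipschitz conclusion of Lemma~\ref{lem_smooth}, not a pointwise Hessian bound. Both are available here, since $f$ is a smooth trigonometric function of $\theta$.
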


\textbf{Proof}. Deferred to Appendix~\ref{sec:appendix_proofs_thm_bias}.



\begin{lem}[Lipschitz Property of the Stochastic Estimator]
\label{lem_lip_estimator}
For any fixed perturbation $\Delta$, the 1-sided estimator inherently satisfies a stochastic Lipschitz condition with constant $L_g$. Specifically, bounded by the parameter dimension and the Hamiltonian norm, it holds that:
\begin{equation}
    \mathbb{E}_\Delta \|g(x, \Delta) - g(y, \Delta)\|^2 \le L_g^2 \|x - y\|^2 \le p^4 \|H\|^2 \|x - y\|^2 \quad \forall x, y \in \mathbb{R}^p.
\end{equation}
\end{lem}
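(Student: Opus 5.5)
The plan is to write the difference of estimators explicitly and control it with the mean value theorem together with Lemma~\ref{lem_smooth}. For fixed $\Delta\in\{-1,+1\}^p$,
\[
g(x,\Delta)-g(y,\Delta)=\frac{1}{\nu}\Big[\big(f(x+\nu\Delta)-f(y+\nu\Delta)\big)-\big(f(x)-f(y)\big)\Big]\Delta ,
\]
and $\|\Delta\|^2=p$. Hence
\[
\|g(x,\Delta)-g(y,\Delta)\|^2=\frac{p}{\nu^2}\,\big|\phi(\nu)-\phi(0)\big|^2,
\qquad \phi(s):=f(x+s\Delta)-f(y+s\Delta).
\]
It therefore suffices to bound the scalar $|\phi(\nu)-\phi(0)|/\nu$ by a constant times $\|x-y\|$, uniformly in $\Delta$. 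Taking the expectation over $\Delta$ then costs nothing, since the bound holds pointwise.

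By the mean value theorem there is some $\xi\in(0,\nu)$ with
\[
\phi(\nu)-\phi(0)=\nu\,\phi'(\xi)=\nu\,\big\langle \nabla f(x+\xi\Delta)-\nabla f(y+\xi\Delta),\,\Delta\big\rangle .
\]
Cauchy--Schwarz and Lemma~\ref{lem_smooth}, applied to the points $x+\xi\Delta$ and $y+\xi\Delta$ (whose difference is exactly $x-y$), give
\[
|\phi'(\xi)|\le \sqrt{p}\,p\|H\|\,\|x-y\| .
\]
Substituting back,
\[
\|g(x,\Delta)-g(y,\Delta)\|^2\le p\cdot p\cdot p^2\|H\|^2\|x-y\|^2=p^4\|H\|^2\|x-y\|^2 .
\]
This holds for every $\Delta$, so the same bound holds in expectation, and we may take $L_g=p^2\|H\|$. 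Notably, the $1/\nu$ factor cancels exactly and no dependence on the shift remains.

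The main point requiring care is to avoid bounding the two finite differences separately. Each one alone is $O(\|\nabla f\|)$ and does not scale with $\|x-y\|$. The estimate works only because the same $\Delta$ and $\nu$ are shared, so the mean value theorem must be applied to the combined function $\phi$. Beyond that, one needs $f$ to be $C^1$ along the line $s\mapsto x+s\Delta$, which holds because $f$ is a trigonometric polynomial in $\theta$. The constant $p^4$ is loose: using $\mathbb{E}[\Delta\Delta^\top]=I$ would give a sharper bound in expectation, but the crude pointwise bound already matches the stated inequality.
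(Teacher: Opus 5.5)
Your proof is correct and is essentially the paper's argument. Both write $g(x,\Delta)-g(y,\Delta)$ as a mixed second difference times $\Delta/\nu$, bound that difference by $\nu p^{3/2}\|H\|\,\|x-y\|$ using the mean value theorem, Lemma~\ref{lem_smooth} and Cauchy--Schwarz, and then multiply by $\|\Delta\|^2=p$, so that $\nu$ cancels and $L_g=p^2\|H\|$ holds pointwise in $\Delta$. The only difference is a symmetric swap of roles: the paper applies the mean value theorem to $h(z)=f(z+\nu\Delta)-f(z)$ along the segment from $y$ to $x$ and uses Lemma~\ref{lem_smooth} across the shift $\nu\Delta$, whereas you apply it along the shift direction and use Lemma~\ref{lem_smooth} across $x-y$.
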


\textbf{Proof}. Deferred to Appendix~\ref{sec:appendix_proofs_lem_lip_estimator}. 



\begin{lem}[Spatial Variance of the Mini-Batch Anchor]
\label{lem_variance_bound}
Let $v_0 = \frac{1}{B} \sum_{i=1}^B g(\theta, \Delta_i)$ be the anchor gradient computed using a mini-batch of $B$ independent 1-sided estimators, where $\Delta_i \in \{-1, +1\}^p$ are uniformly drawn Rademacher random vectors. Because the trigonometric structure of the quantum landscape strictly bounds the gradient magnitude, the variance of this mini-batch anchor scales as:
\begin{equation}
    \bar{\sigma}_0^2 = \text{Var}(v_0) = \mathcal{O}\left(\frac{p}{B}\right). 
\end{equation}
To guarantee the initial anchor variance globally satisfies the target threshold $\mathcal{O}(\epsilon^2)$ across all epochs, the required mini-batch size is $B = \mathcal{O}\left(\frac{p}{\epsilon^2}\right)$.
\end{lem}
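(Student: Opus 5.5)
The argument has two steps. First bound the second moment of a single 1-sided estimator by $\mathcal{O}(p)$. Then use independence of the $\Delta_i$ to divide by $B$, and invert the resulting bound to get the batch size.

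\textbf{Step 1 (single-sample second moment).} Since $v_0$ is an average of $B$ i.i.d.\ copies of $g(\theta,\Delta)$, we have $\mathrm{Var}(v_0)=\frac{1}{B}\mathbb{E}_\Delta\|g(\theta,\Delta)-\bar g(\theta)\|^2\le \frac{1}{B}\mathbb{E}_\Delta\|g(\theta,\Delta)\|^2$. Because $\|\Delta\|^2=p$ deterministically, this gives
\begin{equation*}
\|g(\theta,\Delta)\|^2=\frac{p}{\nu^2}\bigl(f(\theta+\nu\Delta)-f(\theta)\bigr)^2 .
\end{equation*}
It therefore suffices to show $\mathbb{E}_\Delta\bigl[(f(\theta+\nu\Delta)-f(\theta))^2/\nu^2\bigr]=\mathcal{O}(1)$, with a constant depending only on $\|H\|$, $\nu$, and $\nabla f$.

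\textbf{Step 2 (bounding the directional difference quotient).} By Taylor's theorem with the smoothness of Lemma~\ref{lem_smooth},
\begin{equation*}
f(\theta+\nu\Delta)-f(\theta)=\nu\,\nabla f(\theta)^\top\Delta+r,\qquad |r|\le \tfrac{\nu^2 p\|H\|}{2}\|\Delta\|^2 .
\end{equation*}
The Rademacher identity $\mathbb{E}[\Delta\Delta^\top]=I$ gives $\mathbb{E}(\nabla f^\top\Delta)^2=\|\nabla f(\theta)\|^2$. The gradient magnitude is bounded by the trigonometric structure: each partial derivative is, by the parameter-shift rule, half a difference of two expectation values of $H$. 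Hence $|\partial_j f|\le\|H\|$ and $\|\nabla f\|^2\le p\|H\|^2$.

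This is the main obstacle. That crude bound yields only $\mathbb{E}\|g\|^2=\mathcal{O}(p^2)$, and the remainder contributes powers of $\nu p^2$. To obtain the claimed $\mathcal{O}(p/B)$, I would follow the paper's convention of treating the gradient norm and the bias-scale term $\nu p^2\|H\|$ as $\mathcal{O}(1)$ quantities. Concretely, I would assume $\|\nabla f(\theta)\|\le G_f$ as a constant along the trajectory. I would also take the shift $\nu$ small enough, consistent with Theorem~\ref{thm_bias}, that $\nu p^2\|H\|=\mathcal{O}(\epsilon)$, so the remainder is dominated.

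Alternatively, one can sidestep Taylor's theorem entirely by using $|f(\theta+\nu\Delta)-f(\theta)|\le 2\|H\|$. This gives $\mathbb{E}\|g\|^2\le 4p\|H\|^2/\nu^2$, which is $\mathcal{O}(p)$ for fixed $\nu$. I would present this version as the cleanest rigorous route, noting that the constant depends on $\nu$.

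\textbf{Step 3 (batch size).} Combining the steps gives $\bar\sigma_0^2\le C p/B$, where $C=4\|H\|^2/\nu^2$ (or $2G_f^2+\mathcal{O}(\nu^2p^3\|H\|^2)$ under the Taylor route). Imposing $Cp/B\le\epsilon^2$ gives $B\ge Cp/\epsilon^2$, that is, $B=\mathcal{O}(p/\epsilon^2)$. Since $C$ does not depend on $\theta$, the same $B$ works at every epoch anchor $\theta_0^{(s)}$, which gives the ``globally across all epochs'' claim.
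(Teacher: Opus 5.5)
Your skeleton is the paper's. Its Steps 3--4 are your i.i.d.\ averaging $\mathrm{Var}(v_0)=\frac{1}{B}\mathrm{Var}(g)$ and the inversion to $B=\mathcal{O}(p/\epsilon^2)$, and its Step 1 is your parameter-shift bound $|\partial_j f|\le\|H\|$. The paper never derives the single-sample bound $\mathrm{Var}(g)=\mathcal{O}(p)$: its Step 2 asserts it by citing \citet{sweke2020stochastic} and saying the gradient ``cannot diverge.'' You are right that the bound does not follow, and the obstacle is not an artifact of a crude estimate. As $\nu\to0$, $g(\theta,\Delta)\to(\nabla f(\theta)^\top\Delta)\Delta$ for each of the finitely many $\Delta$. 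For Rademacher $\Delta$ one has exactly $\mathbb{E}\|(\nabla f^\top\Delta)\Delta-\nabla f\|^2=(p-1)\|\nabla f(\theta)\|^2$. So in the small-shift regime an $\mathcal{O}(p)$ single-sample variance is equivalent to $\|\nabla f(\theta)\|=\mathcal{O}(1)$, which the componentwise bound does not give. For a concrete case, take one qubit with $W_j=I$, $P_j=Y$, $H=Z$, so $\|H\|=1$. Then $f(\theta)=\cos(\sum_j\theta_j)$ and $\|\nabla f\|^2=p\sin^2(\sum_j\theta_j)$, so the variance tends to $p(p-1)\sin^2(\sum_j\theta_j)=\Theta(p^2)$. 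Hence, with $\|H\|$ as the only constant, no $\nu$-uniform $\mathcal{O}(p/B)$ bound follows from the paper's hypotheses.

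This decides between your two routes. Route (b) is rigorous, but its constant $4\|H\|^2/\nu^2$ is not benign. The complexity analysis sets $\nu=\mathcal{O}(\epsilon)$, and with $\|H\|$ as the only constant, Theorem~\ref{thm_bias} even needs $\nu=\mathcal{O}(\epsilon/p^2)$. The resulting sufficient batch size is then $\mathcal{O}(p/\epsilon^4)$, respectively $\mathcal{O}(p^5/\epsilon^4)$, not $\mathcal{O}(p/\epsilon^2)$. So (b) proves only a fixed-$\nu$ reading of the lemma, not the one used downstream.

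Route (a) is the correct repair, and by the identity above its hypothesis is necessary, not merely convenient. State it as an explicit assumption rather than as a convention of the paper. Alternatively, derive it inside the paper's framework. Smoothness and $|f|\le\|H\|$ give $\|\nabla f(\theta)\|^2\le 2L\,(f(\theta)-\inf f)\le 4L\|H\|$. This is $\mathcal{O}(1)$ precisely because the complexity section absorbs $L$ into the constants. It also shows where the missing factor of $p$ hides, since Lemma~\ref{lem_smooth} only gives $L\le p\|H\|$.

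One small fix: the Taylor remainder contributes $\nu^2p^4\|H\|^2/2$ to $C$, not $\mathcal{O}(\nu^2p^3\|H\|^2)$. That is still $\mathcal{O}(\epsilon^2)$ under your condition $\nu p^2\|H\|=\mathcal{O}(\epsilon)$.
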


\textbf{Proof}. Deferred to Appendix~\ref{sec:appendix_proofs_lem_variance_bound}.

\subsection{Convergence Results for Non-Recursive CRVG}


\begin{thm}[Convergence Rate of 3-Circuit Non-Recursive CRVG]
\label{thm:nonrecursive_crvg}
Suppose Assumptions~\ref{ass:bounded} and \ref{ass:H_norm_bounded} hold. Let $L$ denote the landscape smoothness constant and $L_g$ denote the stochastic Lipschitz constant of the expected 1-sided estimator. There exist universal constants $\mu > 0$ and $\tau > 0$ such that by choosing the learning rate $\eta = \frac{\mu}{L_g m^{2/3}}$, the 3-Circuit Non-Recursive algorithm satisfies:
\begin{equation}
\frac{1}{Sm} \sum_{s=0}^{S-1} \sum_{t=1}^{m} \mathbb{E}\left[\|\nabla f(\theta_t^{(s)})\|^2\right] \le \frac{2[f(\theta_0) - f^*]}{\eta m S} + 2C_\sigma \bar{\sigma}_0^2 + \frac{L_g^2}{4}\nu^2
\end{equation}
where $\bar{\sigma}_0^2 := \frac{1}{S}\sum_{s=0}^{S-1} \mathbb{E}\left[\|v_0^{(s)} - \bar{g}(\theta_0^{(s)})\|^2\right]$ is the average initial anchor error, and $C_\sigma = \frac{L\eta}{2} + 2\tau\mu$ is a bounded constant.
\end{thm}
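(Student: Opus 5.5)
We follow the SVRG Lyapunov analysis of Reddi et al., adapted to a biased zeroth-order estimator. The result is a per-epoch descent inequality for a potential $R_t = \mathbb{E}[f(\theta_t^{(s)}) + c_t\|\theta_t^{(s)} - \theta_0^{(s)}\|^2]$, which we telescope over the inner loop and then over epochs.

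\textbf{Step 1: one-step descent.} By $L$-smoothness (Lemma~\ref{lem_smooth}),
\[
f(\theta_{t+1}) \le f(\theta_t) - \eta\langle\nabla f(\theta_t), v_t\rangle + \tfrac{L\eta^2}{2}\|v_t\|^2 .
\]
Conditionally on the past, $\mathbb{E}[v_t] = \bar g(\theta_t) - \bar g(\theta_0^{(s)}) + v_0^{(s)}$. We write this as $\nabla f(\theta_t) + b_t + e_0$, where:
\begin{itemize}
\item $b_t = \bar g(\theta_t) - \nabla f(\theta_t)$ is the finite-difference bias, controlled by Theorem~\ref{thm_bias};
\item $e_0 = v_0^{(s)} - \bar g(\theta_0^{(s)})$ is the anchor error, whose second moment averages to $\bar\sigma_0^2$.
\end{itemize}
Young's inequality on the cross terms gives
\[
-\eta\,\mathbb{E}\langle\nabla f, v_t\rangle \le -\tfrac{\eta}{2}\|\nabla f(\theta_t)\|^2 + \eta\|b_t\|^2 + \eta\|e_0\|^2 .
\]
We will express the bias through $L_g$ via Lemma~\ref{lem_lip_estimator} ($\nu p^2\|H\|/2 \le L_g\nu/2$), so that $\|b_t\|^2 \le L_g^2\nu^2/4$. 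This is the source of the $\frac{L_g^2}{4}\nu^2$ term.

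\textbf{Step 2: second moment of $v_t$.} Decompose
\[
v_t = \big[g(\theta_t,\Delta_t) - g(\theta_0,\Delta_t) - (\bar g(\theta_t) - \bar g(\theta_0))\big] + \mathbb{E}[v_t].
\]
The variance of a centered quantity is at most its second moment, so Lemma~\ref{lem_lip_estimator} gives
\[
\mathbb{E}\|v_t\|^2 \le 2L_g^2\|\theta_t - \theta_0\|^2 + 4\|\nabla f(\theta_t)\|^2 + 4\|b_t\|^2 + 4\|e_0\|^2 .
\]
Similarly, $\|\theta_{t+1} - \theta_0\|^2$ expands as
\[
\|\theta_{t+1} - \theta_0\|^2 = \|\theta_t - \theta_0\|^2 - 2\eta\langle v_t, \theta_t - \theta_0\rangle + \eta^2\|v_t\|^2 ,
\]
and we bound the cross term by $\frac{1}{\beta}\|\mathbb{E} v_t\|^2 + \beta\|\theta_t - \theta_0\|^2$.

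\textbf{Step 3: Lyapunov recursion.} Plugging Steps 1 and 2 into $R_{t+1}$ gives
\[
R_{t+1} \le R_t - \Gamma_t\,\mathbb{E}\|\nabla f(\theta_t)\|^2 + (\text{const})\,\eta\,(\mathbb{E}\|e_0\|^2 + L_g^2\nu^2/4),
\]
with the coupling coefficient
\[
c_t = c_{t+1}(1 + \eta\beta + 2\eta^2 L_g^2) + L\eta^2 L_g^2 .
\]
We set $c_m = 0$, take $\beta = L_g/m^{1/3}$ and $\eta = \mu/(L_g m^{2/3})$. Then $(1+\theta)^m \le e$ with $\theta = \eta\beta + 2\eta^2L_g^2 = \mathcal{O}(\mu/m)$, which gives $c_0 \lesssim \mu^2 L_g/m^{1/3}$ and hence $\Gamma_t \ge \eta/2$ for a small universal $\mu$. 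The coefficient of $\|e_0\|^2$ collects into $\eta\,(L\eta/2 + 2\tau\mu) = \eta C_\sigma$, where $\tau$ absorbs the $c_{t+1}$ contributions after normalization. This is exactly the constant in the statement.

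\textbf{Step 4: telescoping.} Because $\theta_0^{(s+1)} = \theta_m^{(s)}$ and $c_m = 0$, we have $R_m^{(s)} = \mathbb{E} f(\theta_0^{(s+1)})$. Summing over $t$ and $s$, dividing by $\eta m S/2$, and using $f \ge f^*$ (Assumption~\ref{ass:bounded}) yields the bound. The anchor step $t = 0 \to 1$ is handled by the same inequality with $v_0$.

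\textbf{Main obstacle.} The hardest step is the bookkeeping in Step 3. The constant $\Gamma_t$ must stay positive uniformly, and the bias and anchor-error coefficients must come out exactly as $2C_\sigma$ and $L_g^2/4$ rather than with extra factors of $m$. Keeping them free of $m$ requires a careful choice of $\beta$ and of which Young splits absorb which terms. If the constants do not close, I would first try tightening the variance bound in Step 2. The idea is to use that $e_0$ is independent of $\Delta_t$, so it enters $\mathbb{E}\|v_t\|^2$ additively with coefficient $1$ instead of $4$.
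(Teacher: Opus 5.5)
Your scaffolding matches the paper's: the potential $R_t=\mathbb{E}[f(\theta_t^{(s)})+c_t\|\theta_t^{(s)}-\theta_0^{(s)}\|^2]$ with $c_m=0$, the step $\eta=\mu/(L_g m^{2/3})$, and telescoping over $t$ and $s$. The gap is in the two claims that close Step~3: that $\Gamma_t\ge\eta/2$, and that the weight on $\|e_0\|^2$ collects into $\eta C_\sigma$. Neither follows from your Steps~1--2, and more careful bookkeeping will not fix this.

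\begin{itemize}
\item Step~1 leaves $\|\nabla f(\theta_t)\|^2$ with weight exactly $-\eta/2$. Step~2 then adds positive multiples of it, through $\frac{L\eta^2}{2}\mathbb{E}\|v_t\|^2$ and through the cross-term contribution $\frac{\eta}{\beta}c_{t+1}\|\mathbb{E}v_t\|^2$. So $\Gamma_t<\eta/2$ strictly.
\item Step~1 already puts weight $\eta$ on both $\|e_0\|^2$ and $\|b_t\|^2$. After dividing by $\Gamma_t<\eta/2$, these alone contribute strictly more than $2\bar\sigma_0^2$ and $\frac{L_g^2}{2}\nu^2$.
\item The statement instead has $\frac{L_g^2}{4}\nu^2$ and $2C_\sigma=L\eta+4\tau\mu$. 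The latter is small for small $\mu$ and at most $1+L\eta$ under the constraint $\mu\le 1/(4\tau)$.
\end{itemize}

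What your argument actually proves is roughly
\[
\frac{2(1+O(\mu))[f(\theta_0)-f^*]}{\eta mS}+2(1+O(\mu))\bar\sigma_0^2+\frac{1+O(\mu)}{2}L_g^2\nu^2 .
\]
This is enough for the complexity analysis, but it does not imply the stated inequality. Your proposed remedy, using independence of $e_0$ and $\Delta_t$ inside $\mathbb{E}\|v_t\|^2$, only improves $O(\eta^2)$ terms; it leaves the $O(\eta)$ weight from Step~1 untouched. A minor point: writing $\nu p^2\|H\|/2\le L_g\nu/2$ reverses Lemma~\ref{lem_lip_estimator}, which gives $L_g\le p^2\|H\|$. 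You must take $L_g:=p^2\|H\|$, as the paper implicitly does.

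The paper reaches its constants by a different mechanism. It takes $\mathbb{E}[v_t\mid\mathcal{F}_t]=\bar g(\theta_t)$, applies the exact identity $-\eta\langle\nabla f,\bar g\rangle=\frac{\eta}{2}\|\bar g-\nabla f\|^2-\frac{\eta}{2}\|\nabla f\|^2-\frac{\eta}{2}\|\bar g\|^2$, and splits $\mathbb{E}\|v_t\|^2=\mathbb{E}\|v_t-\bar g(\theta_t)\|^2+\|\bar g(\theta_t)\|^2$. This has four effects:
\begin{itemize}
\item the weight on $\|\nabla f\|^2$ stays at exactly $-\eta/2$;
\item the bias gets weight $\eta/2$, hence $\frac{L_g^2}{4}\nu^2$;
\item every second-moment and Lyapunov cost is charged to the reservoir $-\frac{\eta}{2}(1-L\eta)\|\bar g(\theta_t)\|^2$ (Lemma~\ref{lem:non_negativity}) rather than to $\|\nabla f\|^2$;
\item the anchor error enters only through the tracking bound $\mathbb{E}\|v_t-\bar g(\theta_t)\|^2\le\mathbb{E}\|e_0\|^2+L_g^2\mathbb{E}\|\theta_t-\theta_0\|^2$, with weight $W_t=O(\eta^2)$, whence $C_\sigma=W_{\max}/\eta$.
\end{itemize}

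Your formula $\mathbb{E}[v_t\mid\mathcal{F}_t]=\bar g(\theta_t)+e_0$ is, however, the accurate one. The paper's version holds exactly only when $v_0^{(s)}=\bar g(\theta_0^{(s)})$, i.e.\ when $\bar\sigma_0^2=0$. Because $e_0$ is frozen for the epoch and $\theta_t$ depends on it, the term $-\eta\langle\nabla f(\theta_t),e_0\rangle$ does not vanish. Keeping it generically forces an order-one weight on $\bar\sigma_0^2$: for a quadratic with large $\eta m$, the inner loop is descent with a constant offset and settles near $\nabla f=-e_0$. So the stated $2C_\sigma$ cannot be recovered by sharper constants along your route; it comes only from the paper's unbiasedness step.

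In the other direction, your Reddi-style recursion $c_t=c_{t+1}(1+\eta\beta+2\eta^2L_g^2)+L\eta^2L_g^2$ is better founded than the paper's. With the paper's plain split $(1+\beta)\|\theta_t-\theta_0\|^2+(1+\beta^{-1})\eta^2\|v_t\|^2$ and $\beta=m^{-1/3}$, the factor $(1+m^{-1/3})^m$ grows like $\exp(\Theta(m^{2/3}))$. That growth is incompatible with the claimed bound $c_t\le\tau L_g m^{1/3}$.
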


\textbf{Proof}. Deferred to Appendix~\ref{sec:appendix_proofs_02}.

\subsection{Convergence Results for Recursive CRVG}



\begin{thm}[Convergence Rate of 3-Circuit Recursive CRVG]
\label{thm:main}
Suppose Assumptions~\ref{ass:bounded} and \ref{ass:H_norm_bounded} hold. Let $L$ denote the landscape smoothness constant and $L_g$ denote the stochastic Lipschitz constant of the expected 1-sided estimator. Choose the learning rate $\eta$ such that it satisfies the bounds:
\begin{equation}
\eta \le \frac{1}{2L} \quad \text{and} \quad \eta \le \frac{1}{2 L_g \sqrt{m}}
\end{equation}
Then, after $S$ epochs of length $m$ each, the 3-Circuit Recursive CRVG algorithm (Algorithm~\ref{alg:recursive_crvg}) satisfies:
\begin{equation}
\frac{1}{Sm}\sum_{s=0}^{S-1}\sum_{t=1}^{m}\mathbb{E}\left[\|\nabla f(\theta_t^{(s)})\|^2\right] \le \frac{2[f(\theta_0) - f^*]}{\eta m S} + 2\bar{\sigma}_0^2 + \frac{L_g^2}{2} \nu^2
\end{equation}
where $\bar{\sigma}_0^2 := \frac{1}{S}\sum_{s=0}^{S-1} \mathbb{E}\left[\|v_0^{(s)} - \bar{g}(\theta_0^{(s)})\|^2\right]$ is the average initial anchor error.
\end{thm}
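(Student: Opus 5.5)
The argument follows the standard SARAH-type Lyapunov analysis, adapted to the biased 1-sided estimator. The plan is to compare $v_t$ with the expected estimator $\bar g(\theta_t)$, and then compare $\bar g$ with $\nabla f$ using Theorem~\ref{thm_bias}. Write $e_t := v_t - \bar g(\theta_t)$ for the tracking error.

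\textbf{Step 1 (descent inequality).} By Lemma~\ref{lem_smooth}, $f$ is $L$-smooth. Apply the descent lemma to $\theta_{t+1} = \theta_t - \eta v_t$ and use the identity $-\langle a,b\rangle = \tfrac12(\|a-b\|^2 - \|a\|^2 - \|b\|^2)$. This gives
\[
f(\theta_{t+1}) \le f(\theta_t) - \tfrac{\eta}{2}\|\nabla f(\theta_t)\|^2 - \tfrac{\eta}{2}(1-L\eta)\|v_t\|^2 + \tfrac{\eta}{2}\|v_t - \nabla f(\theta_t)\|^2 .
\]
Split $\|v_t - \nabla f(\theta_t)\|^2 \le 2\|e_t\|^2 + 2\|\bar g(\theta_t)-\nabla f(\theta_t)\|^2$. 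The second term is a pure bias term, controlled by Theorem~\ref{thm_bias} or equivalently by an $\mathcal{O}(L_g^2\nu^2)$ bound. With the constant $\tfrac{L_g^2}{4}\nu^2$ per unit of $\eta$, this is the source of the $\tfrac{L_g^2}{2}\nu^2$ floor in the statement. I expect to have to reconcile the constant in Theorem~\ref{thm_bias}, $\nu p^2\|H\|/2$, with $L_g\nu/2$. Since $L_g \le p^2\|H\|$, I will state the bias as $\|\bar g - \nabla f\|\le L_g\nu/2$, which is where the $L_g$ form comes from.

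\textbf{Step 2 (recursive error bound).} Because $\Delta_t$ is fresh and independent of $\mathcal{F}_t$, we have $\mathbb{E}[g(\theta_t,\Delta_t)-g(\theta_{t-1},\Delta_t)\mid\mathcal F_t] = \bar g(\theta_t)-\bar g(\theta_{t-1})$. Hence $e_t = e_{t-1} + \xi_t$, where $\xi_t$ is a martingale difference. The cross terms therefore vanish, and
\[
\mathbb{E}\|e_t\|^2 \le \mathbb{E}\|e_{t-1}\|^2 + \mathbb{E}\|g(\theta_t,\Delta_t)-g(\theta_{t-1},\Delta_t)\|^2 \le \mathbb{E}\|e_{t-1}\|^2 + L_g^2\eta^2\mathbb{E}\|v_{t-1}\|^2 ,
\]
using Lemma~\ref{lem_lip_estimator}. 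Telescoping gives $\mathbb{E}\|e_t\|^2 \le \mathbb{E}\|e_0\|^2 + L_g^2\eta^2\sum_{j<t}\mathbb{E}\|v_j\|^2$, and $\mathbb{E}\|e_0^{(s)}\|^2$ is exactly the per-epoch anchor error that enters $\bar\sigma_0^2$.

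\textbf{Step 3 (summing and absorbing).} Sum the descent inequality over $t$ within an epoch and insert the Step 2 bound. The accumulated error contributes at most $\eta L_g^2\eta^2 m\sum_t\mathbb{E}\|v_t\|^2$. The negative term contributes $-\tfrac{\eta}{2}(1-L\eta)\sum_t\|v_t\|^2$, and $\eta\le 1/(2L)$ makes this at most $-\tfrac{\eta}{4}\sum_t\|v_t\|^2$. The condition $\eta\le 1/(2L_g\sqrt m)$ gives $L_g^2\eta^2 m\le 1/4$, so the accumulated error is absorbed and the $\|v_t\|^2$ terms can be dropped. This absorption is the main obstacle: I need the constants after the factor-2 splits to close exactly, and may have to tune the split (for example, use Young's inequality with a different weight) to match the stated coefficients $2\bar\sigma_0^2$ and $\tfrac{L_g^2}{2}\nu^2$.

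\textbf{Step 4 (final telescoping).} Telescope $f$ across all epochs, using $\theta_0^{(s+1)}=\theta_m^{(s)}$ and $f\ge f^*$ from Assumption~\ref{ass:bounded}. Then divide by $\eta mS/2$. This produces the three terms of the stated bound. Mismatches of the index ranges ($t=0$ versus $t=m$) are handled by the usual off-by-one relabeling of the sum.
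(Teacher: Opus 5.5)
Your plan is the paper's proof essentially step for step. It uses the same descent lemma and the same split $\|v_t-\nabla f(\theta_t)\|^2\le 2\|e_t\|^2+2\|\bar g(\theta_t)-\nabla f(\theta_t)\|^2$, with the bias bounded by $L_g\nu/2$. It uses the same martingale recursion $\mathbb{E}\|e_t\|^2\le\mathbb{E}\|e_{t-1}\|^2+L_g^2\eta^2\mathbb{E}\|v_{t-1}\|^2$, unrolled via $\sum_t\sum_{j<t}a_j\le m\sum_j a_j$, and the same absorption. You do not need to retune the split. With $\eta\le 1/(2L)$ and $L_g^2\eta^2 m\le 1/4$, the coefficient of $\sum_t\mathbb{E}\|v_t\|^2$ is at least $\frac{\eta}{4}-\frac{\eta}{4}=0$, and dividing by $\eta mS/2$ gives exactly $2\bar\sigma_0^2$ and $\frac{L_g^2}{2}\nu^2$.

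Two steps need repair. The paper's own proof glosses both in the same way: it simply asserts the $L_g\nu/2$ bias bound, and in its Step~4 it silently identifies $\sum_{t=1}^m\mathbb{E}\|v_{t-1}\|^2$ with $\sum_{t=1}^m\mathbb{E}\|v_t\|^2$ and $\mathbb{E}f(\theta_1^{(s)})$ with $f(\theta_0^{(s)})$.

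\emph{The bias constant.} Your inference from $L_g\le p^2\|H\|$ to $\|\bar g-\nabla f\|\le L_g\nu/2$ runs backwards: an upper bound on $L_g$ cannot sharpen Theorem~\ref{thm_bias}'s bound $\nu p^2\|H\|/2$. Instead, fix $L_g:=p^2\|H\|$, which Lemma~\ref{lem_lip_estimator} shows is a valid stochastic Lipschitz constant. Then Theorem~\ref{thm_bias} reads exactly $L_g\nu/2$.

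\emph{The index range.} This is not a cosmetic relabeling. Your unrolled bound on $e_t$ involves $v_0,\dots,v_{t-1}$, so the negative term must range over the same vectors. If you sum the descent inequality over $t=1,\dots,m$, the term $L_g^2\eta^3 m\,\mathbb{E}\|v_0\|^2$ is left uncovered, and it contains $\|\bar g(\theta_0^{(s)})\|^2$, which nothing on the right-hand side controls. That sum also needs $v_m$ and $\theta^{(s)}_{m+1}$, which Algorithm~\ref{alg:recursive_crvg} never computes.

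Sum instead over the steps the algorithm actually takes, $t=0,\dots,m-1$. The ranges then match, each epoch telescopes to $f(\theta_0^{(s)})-\mathbb{E}f(\theta_m^{(s)})$, and $\theta_0^{(s+1)}=\theta_m^{(s)}$ makes the cross-epoch telescoping exact. This gives the stated right-hand side for the average over $t=0,\dots,m-1$. The stated range $t=1,\dots,m$ differs from it by $\frac{1}{Sm}\bigl(\mathbb{E}\|\nabla f(\theta_m^{(S-1)})\|^2-\|\nabla f(\theta_0)\|^2\bigr)\le\frac{p\|H\|^2}{Sm}$, using $|\nabla_j f|\le\|H\|$. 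You must carry this as an extra lower-order term rather than relabel it away.
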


\textbf{Proof}. Deferred to Appendix~\ref{sec:appendix_proofs}.

\subsection{Complexity Analysis}

To determine the theoretical sample complexity of both CRVG variants to achieve a strict $\epsilon^2$-stationary point (where $\mathbb{E}[\|\nabla f(\theta)\|^2] \le \epsilon^2$), we evaluate the initial anchor gradient $v_0^{(s)}$ using a stochastic mini-batch of $B$ independent 1-sided estimators:
\begin{equation}
v_0^{(s)} = \frac{1}{B} \sum_{i=1}^B g(\theta_0^{(s)}, \Delta_i)
\end{equation}
where $g(\theta, \Delta) = \frac{f(\theta + \nu\Delta) - f(\theta)}{\nu}\Delta$ and $\nu > 0$ is the finite-difference shift. 

Following standard conventions in stochastic non-convex optimization, we absorb the problem-specific landscape constants, namely the objective smoothness $L$, the stochastic Lipschitz constant $L_g$, and the Hamiltonian norm bound $G$, into the asymptotic $\mathcal{O}(\cdot)$ notation. This isolates the structural sample complexity scaling strictly with respect to the parameter dimension $p$ and the target accuracy $\epsilon$.

\textbf{Step 1: Universal Anchor Variance and Bias Mismatch} \\
From the convergence theorems, the deterministic bias floor evaluates to $\mathcal{O}(\nu^2)$. To strictly satisfy the $\mathcal{O}(\epsilon^2)$ convergence criterion without introducing dimension dependence, we set the finite-difference shift to $\nu = \mathcal{O}(\epsilon)$. This perfectly bounds the bias term to $\mathcal{O}(\epsilon^2)$. 

Because simultaneous perturbation injects a spatial variance of $\mathcal{O}(p)$ into each 1-sided estimator, the variance of the mini-batch anchor is $\bar{\sigma}_0^2 = \mathcal{O}(p/B)$. To ensure the anchor variance matches the $\mathcal{O}(\epsilon^2)$ target threshold, we need a batch size of $B = \mathcal{O}(p/\epsilon^2)$.

\textbf{Step 2: Circuit Complexity of the Non-Recursive Variant} \\
For the Non-Recursive CRVG, bounding the optimization term $\frac{1}{\eta m S}$ under the restrictive step size $\eta = \mathcal{O}\left( \frac{1}{m^{2/3}} \right)$ requires that the epoch count scales as:
\begin{equation}
    \frac{m^{2/3}}{m S} = \mathcal{O}(\epsilon^2) \implies S = \mathcal{O}\left( \frac{1}{m^{1/3} \epsilon^2} \right)
\end{equation}
The total execution cost $\mathcal{C}_{non}$ sums the $\mathcal{O}(p/\epsilon^2)$ anchor evaluations and the $3m$ inner-loop steps across the $S$ epochs:
\begin{equation}
    \mathcal{C}_{non} = S(B + 3m) = \mathcal{O}\left( \frac{1}{m^{1/3} \epsilon^2} \right) \left[ \mathcal{O}\left( \frac{p}{\epsilon^2} \right) + 3m \right] = \mathcal{O}\left( \frac{p}{m^{1/3} \epsilon^4} + \frac{m^{2/3}}{\epsilon^2} \right)
\end{equation}
Minimizing $\mathcal{C}_{non}$ with respect to the epoch length by balancing the two terms yields the optimal length $m = \mathcal{O}(p/\epsilon^2)$. Substituting this back isolates the exact sample complexity:
\begin{equation}
    \mathcal{C}_{non} = \mathcal{O}\left( \frac{(p/\epsilon^2)^{2/3}}{\epsilon^2} \right) = \mathcal{O}\left( \frac{p^{2/3}}{\epsilon^{10/3}} \right)
\end{equation}

\textbf{Step 3: Circuit Complexity of the Recursive Variant} \\
For the Recursive CRVG, the history-dependent update permits a more aggressive step size of $\eta = \mathcal{O}\left( \frac{1}{\sqrt{m}} \right)$. Enforcing the $\mathcal{O}(\epsilon^2)$ optimization bound dictates the total number of epochs:
\begin{equation}
    \frac{\sqrt{m}}{m S} = \mathcal{O}(\epsilon^2) \implies S = \mathcal{O}\left( \frac{1}{\sqrt{m} \epsilon^2} \right)
\end{equation}
Calculating the total executions $\mathcal{C}_{rec}$ utilizing the mini-batch anchor configurations and $3m$ inner-loop executions:
\begin{equation}
    \mathcal{C}_{rec} = S(B + 3m) = \mathcal{O}\left( \frac{1}{\sqrt{m} \epsilon^2} \right) \left[ \mathcal{O}\left( \frac{p}{\epsilon^2} \right) + 3m \right] = \mathcal{O}\left( \frac{p}{\sqrt{m} \epsilon^4} + \frac{\sqrt{m}}{\epsilon^2} \right)
\end{equation}
To minimize the total execution cost with respect to the epoch length $m$, we balance the two terms, yielding $m = \mathcal{O}\left(\frac{p}{\epsilon^2}\right)$. Substituting this back into the complexity equation produces the theoretical optimal cost:
\begin{equation}
    \mathcal{C}_{rec} = \mathcal{O}\left( \frac{\sqrt{p/\epsilon^2}}{\epsilon^2} \right) = \mathcal{O}\left( \frac{\sqrt{p}}{\epsilon^3} \right)
\end{equation}

\textbf{Conclusion}. By replacing the computationally heavy analytical PSR anchor, which would require $\mathcal{O}(p^2/\epsilon^2)$ circuit executions to suppress finite-shot noise, with a stochastic zeroth-order mini-batch, CRVG drops the anchor evaluation cost to strictly $\mathcal{O}(p/\epsilon^2)$. Because the physical outer loop requires $S \ge 1$, the algorithm must evaluate this anchor batch at least once, creating a fundamental execution floor of $\mathcal{O}(p/\epsilon^2)$. Therefore, the rigorously bounded total sample complexities across all regimes of $p$ and $\epsilon$ to reach an exact $\epsilon^2$-stationary point are:
\begin{itemize}
    \item \textbf{Non-Recursive CRVG:} $\mathcal{O}\left( \max\left\{ \frac{p}{\epsilon^2}, \frac{p^{2/3}}{\epsilon^{10/3}} \right\} \right)$
    \item \textbf{Recursive CRVG:} $\mathcal{O}\left( \max\left\{ \frac{p}{\epsilon^2}, \frac{\sqrt{p}}{\epsilon^3} \right\} \right)$
\end{itemize}
This formally proves the mathematical separation between the two variance-reduced routing paths. By successfully controlling the spatial variance without inflating the inner loop, the fully recursive variant achieves theoretical scaling for deep quantum circuits.

\subsection{Complexity Comparison and Theoretical Advantages}

To contextualize the theoretical contributions of the CRVG framework, we benchmark its oracle complexity against the standard gradient estimation techniques utilized in Variational Quantum Algorithms (VQAs). Table \ref{tab:complexity} summarizes the per-step inner-loop circuit cost, the estimator spatial variance, and the total sample complexity required to converge to an $\epsilon^2$-stationary point (where $\mathbb{E}[\|\nabla f(\theta)\|^2] \le \epsilon^2$).

\begin{table}[ht]
\centering
\caption{Comparison of sample complexities to achieve an $\epsilon^2$-stationary point. Here, $p$ denotes the number of parameterized gates, and the target accuracy bounds the expected squared norm of the gradient.}
\label{tab:complexity}
\renewcommand{\arraystretch}{1.5}
\begin{tabular}{l c c c}
\hline\hline
\textbf{Optimization Algorithm} & \textbf{Inner Cost} & \textbf{Spatial Variance} & \textbf{Total Sample Complexity} \\
\hline
Parameter-Shift Rule (PSR)& $2p$ & $\mathcal{O}(p)$ & $\mathcal{O}\left(\frac{p^2}{\epsilon^4}\right)$ \\
Standard SPSA & $\mathcal{O}(1)$ & $\mathcal{O}(p)$ & $\mathcal{O}\left(\frac{p}{\epsilon^4}\right)$ \\
\textcolor{blue}{\textbf{Non-Recursive CRVG (Ours)}} & \textcolor{blue}{$3$} & \textcolor{blue}{$\mathcal{O}(1)$} & \textcolor{blue}{$\mathcal{O}\left (\max \left \{ \frac{p}{\epsilon^2}, \frac{p^{2/3}}{\epsilon^{10/3}} \right \} \right)$} \\
\textcolor{red}{\textbf{Recursive CRVG (Ours)}} & \textcolor{red}{$3$} & \textcolor{red}{$\mathcal{O}(1)$} & \textcolor{red}{$\mathcal{O}\left(\max \left \{ \frac{p}{\epsilon^2}, \frac{\sqrt{p}}{\epsilon^3} \right \}\right)$} \\
\hline\hline
\end{tabular}
\end{table}

\textbf{The Dimensionality Barrier in Standard Methods:} \\
As shown in Table \ref{tab:complexity}, existing NISQ optimization techniques fail to decouple the total sample complexity from the parameter dimension $p$. The Parameter-Shift Rule (PSR) evaluates an unbiased analytic gradient, but its structural requirement to evaluate $2p$ distinct circuit configurations per step, combined with the $\mathcal{O}(p)$ accumulated shot variance across the vector, yields a massive total sample complexity of $\mathcal{O}(p^2/\epsilon^4)$. Conversely, simultaneous perturbation methods like SPSA successfully reduce the per-step execution cost to an optimal $\mathcal{O}(1)$ constant. However, this per-step efficiency is immediately undermined by a severe spatial variance penalty that scales linearly with the circuit depth, $\mathcal{O}(p)$. Overcoming this injected variance necessitates significantly more optimization steps, ultimately yielding an asymptotic sample complexity of $\mathcal{O}(p/\epsilon^4)$. 

\textbf{The Theoretical Separation of CRVG Routing Paths:} \\
Our proposed CRVG framework strictly dominates the existing baselines by utilizing a 1-sided estimator with Center-Point Caching, dropping the inner-loop circuit cost to exactly $3$ while maintaining a tightly bounded spatial variance of $\mathcal{O}(1)$. Furthermore, Table \ref{tab:complexity} exposes a critical theoretical separation between our two variance-reduced routing paths:
\begin{itemize}
    \item \textbf{Non-Recursive CRVG:} While strictly improving upon standard SPSA, the non-recursive variant is structurally bottlenecked by its restrictive learning rate limitations, mathematically capping its sample complexity at $\mathcal{O}(\max\{p/\epsilon^2, p^{2/3}/\epsilon^{10/3}\})$.
    \item \textbf{Recursive CRVG:} The recursive variant strictly dominates the non-recursive baseline. By leveraging a history-dependent update, the recursive framework permits a more aggressive learning rate, accelerating convergence with respect to dimensionality and improving it to a complexity of $\mathcal{O}(\max\{p/\epsilon^2, \sqrt{p}/\epsilon^3\})$.
\end{itemize}

\subsection{Intersection with the Barren Plateau Phenomenon}

While the CRVG framework successfully breaks the dimensionality barrier in gradient estimation, it is critical to delineate its capabilities from the fundamental landscape challenges posed by Barren Plateaus (BPs). As comprehensively formalized by Larocca et al.~\cite{larocca2025barren}, a barren plateau occurs when the true analytical gradient of the objective function vanishes exponentially with system size, rendering the optimization landscape featureless. This geometric flattening stems from architectural choices such as excessive circuit expressiveness, global measurements, or unmitigated hardware noise.

To properly contextualize CRVG within the broader quantum optimization stack, we highlight the following distinctions:

\begin{itemize}
    \item \textbf{Estimator Variance vs.\ Landscape Geometry:} The CRVG algorithm is designed to actively suppress the statistical variance introduced by finite-shot measurement noise and simultaneous spatial perturbations. It ensures the optimizer can effectively ``see'' the gradient direction. Conversely, a barren plateau is an intrinsic geometric property of the landscape where the gradient itself is exponentially close to zero.
    
    \item \textbf{The Impact of BPs on Sample Complexity:} If a Variational Quantum Algorithm (VQA) is initialized within a barren plateau, the true gradient magnitude $\|\nabla f(\theta)\|$ shrinks exponentially. To capture this vanishing signal while maintaining the $\mathcal{O}(1/\epsilon^3)$ convergence guarantee, the target accuracy $\epsilon$ must also be scaled down exponentially. Because suppressing quantum shot noise inherently requires $N = \mathcal{O}(1/\epsilon^2)$ measurements, a barren plateau forces the required shot count to explode, effectively neutralizing CRVG's algorithmic efficiency.
    
    \item \textbf{A Synergistic Application:} Consequently, CRVG is not a remedy for barren plateaus. Instead, it is an extraction engine designed to operate within trainable, BP-free regimes. Its theoretical guarantees implicitly rely on the prerequisite application of BP-mitigation strategies, such as bounding dynamical Lie algebras, employing shallow hardware-efficient ansatzes, or utilizing data-informed initialization.
\end{itemize}

Ultimately, once architectural safeguards secure a non-vanishing gradient landscape, the CRVG framework can be applied to suppress the remaining finite-shot noise and estimator variance. This ensures that the classical optimization routine does not re-introduce dimension-dependent $\mathcal{O}(p)$ bottlenecks into an otherwise trainable quantum system.
\section{Numerical Experiments}\label{sec_experiment}

We evaluate the methods through four questions: (i) how SPSA and the two CRVG variants behave on representative finite-shot MaxCut trajectories, (ii) whether the ranking persists across instances and circuit depths, (iii) whether reduced sampled-objective calls translate into comparable solution quality, and (iv) whether the same conclusions extend to maximum independent set (MIS). The aggregate evidence favors SPSA as the dependable default optimizer: it has the simplest evaluation pattern, remains competitive across ansatzes and depths, and avoids the pronounced depth- and variant-sensitivity observed for CRVG. CRVG nevertheless identifies useful operating regimes, which we report rather than treating any method as uniformly dominant. We use graph instances from the dataset provided by Daniel Eggar et al.~ \citep{sack2024large}.

\subsection{Experimental setting and evidence base}\label{sec:exp-setup}

The experiments compare SPSA with non-recursive (SVRG-style) and recursive (SARAH-style) CRVG. MaxCut studies use EfficientSU2 or QAOA circuits; the MIS depth study uses QAOA. The principal QAOA depth sweeps contain the same $36$ graphs at depths one, three, five, and seven, with $256$ shots per sampled-objective call and one optimizer seed per instance. Table~\ref{tab:benchmark-inventory} inventories the archived evidence. Statevector and matrix-product-state (MPS) simulations are computational evidence, not quantum-hardware execution.

\begin{table*}[h]
  \centering
  \caption{Frozen experiment artifacts used in this report. Each row is descriptive evidence from archived outputs, not a newly executed benchmark. Noise subsets use depolarizing and readout channels and are unpaired with the clean sets.}
  \label{tab:benchmark-inventory}
  \footnotesize
  \setlength{\tabcolsep}{3pt}
  \begin{tabular}{@{}lrrlll@{}}
    \toprule
    Study & Instances & Nodes & Ansatz & Simulator & Seed \\
    \midrule
    EfficientSU2 benchmark & 30 & 11--30 & EfficientSU2 & statevector/MPS & 42 \\
    EfficientSU2 noise subset & 5 & 12--16 & EfficientSU2 & statevector & 42 \\
    QAOA benchmark & 30 & 10--17 & QAOA & statevector & 42 \\
    QAOA noise subset & 10 & 10--17 & QAOA & statevector & 42 \\
    Selected MaxCut trajectories & 2 & 10--15 & EfficientSU2/QAOA & statevector & 42 \\
    QAOA depth comparison & 144 & 10--20 & QAOA depths 1/3/5/7 & finite-shot/exact statevector & 42 \\
    MIS QAOA depth comparison & 144 & 10--20 & QAOA depths 1/3/5/7 & finite-shot statevector & 42 \\
    Selected MIS trajectories & 4 & 14--18 & EfficientSU2 & statevector & 42 \\
    \bottomrule
  \end{tabular}
\end{table*}

A stored history point is not a circuit-evaluation count. SPSA evaluates the current point and two perturbations while its archived nominal counter advances by two; CRVG anchors and diagnostic evaluations have analogous accounting differences. We therefore reconstruct sampled-objective calls from optimizer control flow. Every reconstructed call in the depth studies invokes one 256-shot Aer sampler job. Wall time is specific to the archived simulator and host and is not used as a prediction of hardware latency.

Table~\ref{tab:study-summary} summarizes the broad MaxCut comparisons. SPSA is competitive across both ansatzes: it wins 10 of 30 EfficientSU2 instances and 14 of 30 QAOA instances despite the table selecting the better of two CRVG variants against a single SPSA run. The near-zero QAOA median margin ($-0.008$) is especially important operationally because it indicates that the added CRVG machinery does not produce a broad quality separation in that setting. The clean and noisy rows contain different instances, so their medians are not paired estimates of a noise effect.

\begin{table*}[h]
  \centering
  \caption{Instance-level comparison of the best CRVG variant with SPSA. A negative margin favors CRVG. Counts summarize one optimizer seed per instance and are not uncertainty estimates. Clean and noisy rows contain different, unpaired instances.}
  \label{tab:study-summary}
  \footnotesize
  \begin{tabular}{@{}lrrrrrl@{}}
    \toprule
    Study & $n$ & CRVG wins & Ties & SPSA wins & Median margin & Simulator \\
    \midrule
    EfficientSU2 benchmark & 30 & 20 & 0 & 10 & $-0.781$ & statevector/MPS \\
    EfficientSU2 noise subset & 5 & 5 & 0 & 0 & $-1.316$ & statevector \\
    QAOA benchmark & 30 & 16 & 0 & 14 & $-0.008$ & statevector \\
    QAOA noise subset & 10 & 6 & 0 & 4 & $-0.014$ & statevector \\
    \bottomrule
  \end{tabular}
\end{table*}

\subsection{Selected MaxCut trajectories}\label{sec:exp-trajectories}

Figures~\ref{fig:efficient-su2-diagnostics} and~\ref{fig:qaoa-diagnostics} report selected trajectories to illustrate convergence behavior (exact numerical endpoints and extended 10-node trajectory data are deferred to Appendix~\ref{sec:appendix-additional-exp}). ``Variance'' is the variance of sampled output energies at a stored parameter vector, not gradient-estimator variance. The horizontal coordinate is the archived nominal shot count and should not be read as exact matched exposure.

\begin{figure*}[h]
  \centering
  \includegraphics[width=\textwidth]{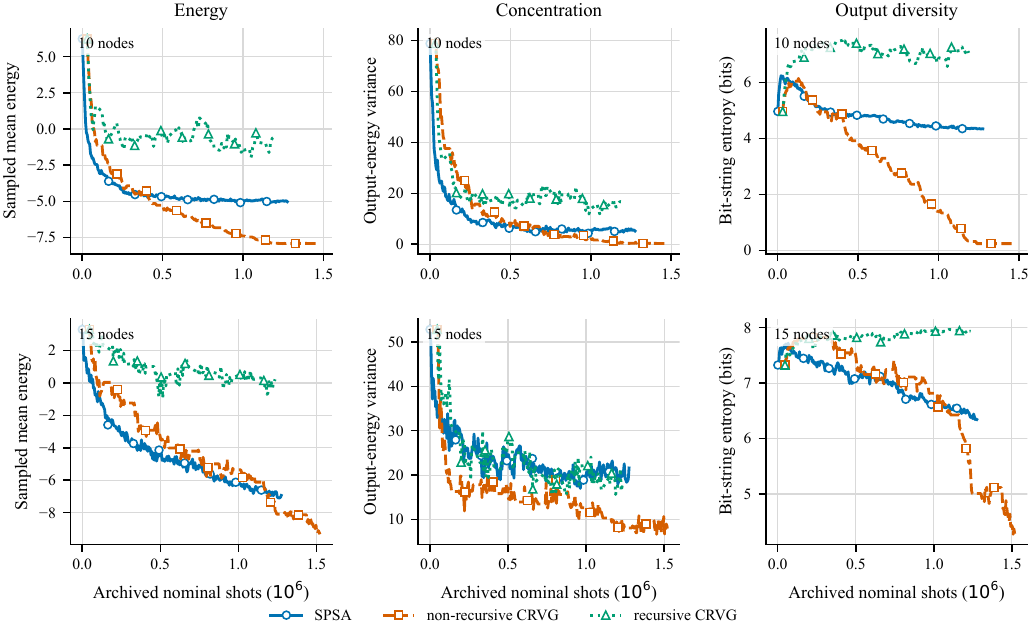}
  \caption{Selected EfficientSU2 MaxCut trajectories. Non-recursive CRVG reaches the lowest energy in the favorable 15-node example and concentrates the sampled output distribution, whereas recursive CRVG stalls. The instability of one CRVG variant and unequal logged endpoints make SPSA the more predictable baseline even though it is not best in this selected regime.}
  \label{fig:efficient-su2-diagnostics}
\end{figure*}

\begin{figure*}[h]
  \centering
  \includegraphics[width=\textwidth]{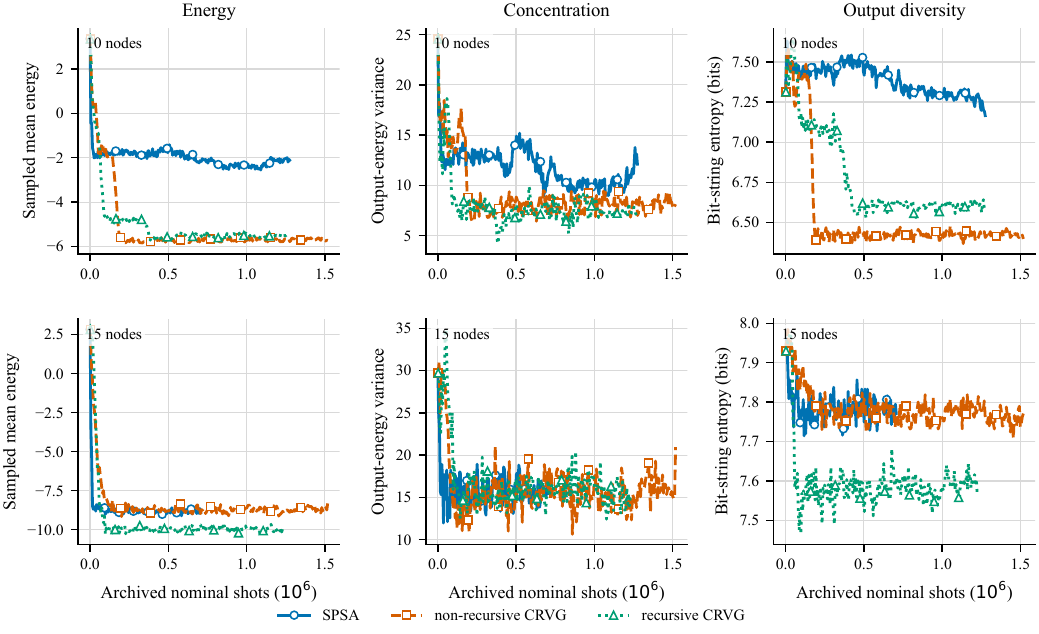}
  \caption{Selected QAOA MaxCut trajectories. The optimizer ordering changes with the ansatz: recursive CRVG recovers, while SPSA and non-recursive CRVG are close in best energy on the 15-node case. Non-recursive CRVG no longer has the output-variance and entropy advantage seen with EfficientSU2.}
  \label{fig:qaoa-diagnostics}
\end{figure*}

The EfficientSU2 example is a genuine favorable case for non-recursive CRVG: at 15 nodes it reaches best energy $-9.34$, versus $-7.09$ for SPSA, and ends with lower sampled-energy variance and entropy. This behavior does not transfer uniformly. Under QAOA, SPSA and non-recursive CRVG are nearly tied at $-9.11$ and $-9.16$, while recursive CRVG reaches $-10.29$. The reversal across circuit families argues for SPSA when robustness and minimal tuning are more important than exploiting a favorable, pre-identified CRVG regime.

\subsection{Cross-instance and synthetic-noise evidence}\label{sec:exp-cross-instance}\label{sec:exp-scaling}

Figure~\ref{fig:cross-instance-composite} consolidates relative converged performance and synthetic-noise stability. Negative best-CRVG-minus-SPSA energy margins favor the best CRVG variant. The clean comparison in panel (a) contains a substantial CRVG-favorable EfficientSU2 subset, but QAOA is much more balanced, consistent with the 16--14 split in Table~\ref{tab:study-summary}. Because best-of-CRVG gives two variants an opportunity to beat one SPSA run, a near-balanced split is conservative evidence for SPSA's competitiveness. The lower panels use separate clean and noisy instance pools under depolarizing and readout channels; they show that both optimizer families remain operational under the archived noise model, not the causal effect of noise on group medians.

\begin{figure*}[htp]
  \centering
  \includegraphics[width=0.8\textwidth]{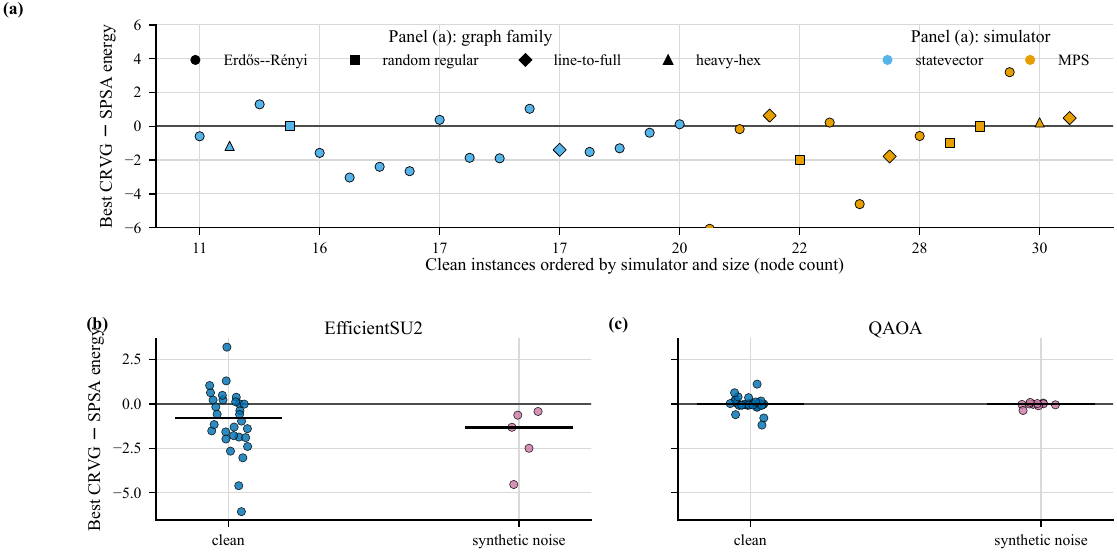}
  \caption{Relative converged performance and stability under synthetic noise. Panel (a) spans the top row and shows best-CRVG-minus-SPSA energy margins across paired clean benchmark instances. Panels (b) and (c) compactly compare clean and synthetic-noise pools for EfficientSU2 and QAOA. Negative values favor CRVG; the lower, unpaired comparisons are descriptive rather than controlled estimates of noise sensitivity.}
  \label{fig:cross-instance-composite}\label{fig:baseline-margins}\label{fig:noise-margins}
\end{figure*}

\subsection{MaxCut QAOA depth and parameter setting}\label{sec:exp-maxcut-depth}

Figure~\ref{fig:depth-feasibility}a compares the three local MaxCut methods over 36 paired instances per depth. Quality is non-monotone with depth. Best-of-CRVG beats SPSA on 13, 22, 21, and 23 instances at depths one, three, five, and seven, respectively. Yet the method-level medians show why SPSA remains a strong default: at depth three all methods are nearly equal ($0.883$, $0.881$, and $0.879$ for SPSA, non-recursive CRVG, and recursive CRVG), SPSA has the highest median at depth five, and recursive CRVG degrades sharply at depth seven. Non-recursive CRVG's depth-seven median of $0.863$ versus SPSA's $0.843$ is an important exception.

\begin{figure*}[htp]
  \centering
  \includegraphics[width=0.8\textwidth]{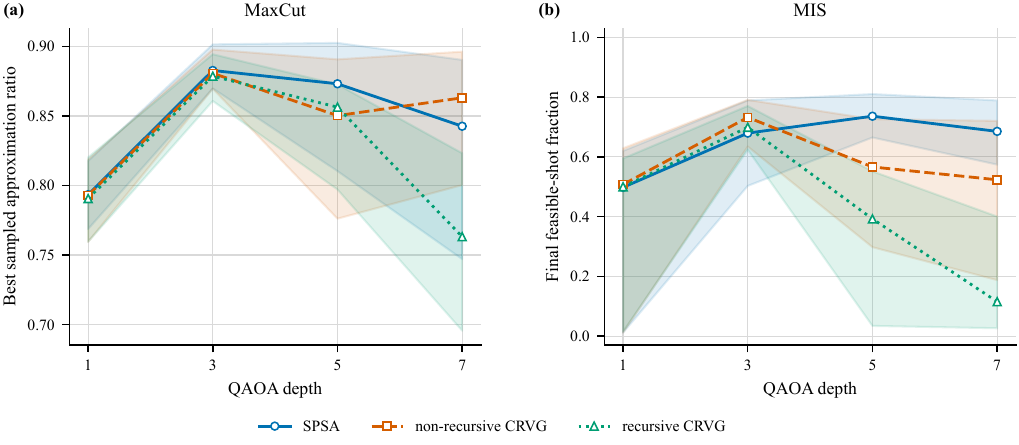}
  \caption{Complementary QAOA depth diagnostics. Panel (a) reports MaxCut method medians and interquartile ranges of best sampled approximation ratio (formerly Fig.~5a); panel (b) reports MIS final feasible-sample fractions from a separate 256-shot diagnostic resampling (formerly Fig.~7b). The shared view emphasizes SPSA's comparatively stable behavior across depths and problem classes.}
  \label{fig:depth-feasibility}\label{fig:maxcut-depth}\label{fig:mis-depth}
\end{figure*}

Figure~\ref{fig:call-ratios}a isolates solution quality from resource exposure. At depth one, non-recursive CRVG uses more calls than SPSA ($1.09\times$), while offering the same median ratio. At depths three, five, and seven it uses about $0.79\times$ the calls. Recursive CRVG uses still fewer calls, but its depth-seven quality loss demonstrates that lower exposure is not sufficient for optimizer selection. SPSA therefore offers a transparent baseline with stable quality, while CRVG's call savings are valuable only when its depth-specific quality is retained. 

A detailed, granular breakdown of resource exposures and comprehensive comparisons against external exact-statevector parameter-setting baselines are deferred to Appendix~\ref{sec:appendix-additional-exp}. While local methods compare favorably with several Fourier and TQA techniques on matched support, optimized fixed-angle and interpolation approaches often lead at higher depths. These mixed results strengthen the practical case for SPSA as a simple local baseline rather than supporting the universal superiority of any stochastic optimizer.

\subsection{MIS QAOA depth study}\label{sec:exp-mis-depth}

The MIS study repeats the three local methods on the same 36 instances at each QAOA depth. There is no exact MIS approximation-ratio normalization or external MIS parameter-setting baseline, so we use paired penalized-cost margins and the raw feasible-sample fractions in Fig.~\ref{fig:depth-feasibility}b. Lower penalized cost is better. Average and maximum feasible size are undefined when the final diagnostic sample contains no feasible string.

Depth three is the clear CRVG-favorable exception: best-of-CRVG wins 33 of 36 instances with median paired cost difference $-0.438$. The pattern reverses at depths one, five, and seven, where SPSA wins 24, 22, and 20 instances and the median paired differences favor SPSA. SPSA also retains full valid support for the feasible-size diagnostics at depths three, five, and seven; at depth seven its median feasible fraction is $68.6\%$, compared with $52.3\%$ for non-recursive and $11.5\%$ for recursive CRVG. This cross-depth consistency supports SPSA as the safer MIS default even though CRVG is compelling at depth three.

Figure~\ref{fig:call-ratios}b shows that both CRVG variants generally use fewer calls than SPSA (exhaustive resource tables are provided in Appendix~\ref{sec:appendix-additional-exp}). The central selection tradeoff is therefore explicit: CRVG reduces sampled-objective exposure, but at three of four depths that reduction accompanies worse paired cost outcomes. When solution reliability across an unknown depth is prioritized over call count alone, the archive favors SPSA; when depth three is known to be representative, CRVG offers a favorable quality--resource combination.

\begin{figure*}[htp]
  \centering
  \includegraphics[width=0.8\textwidth]{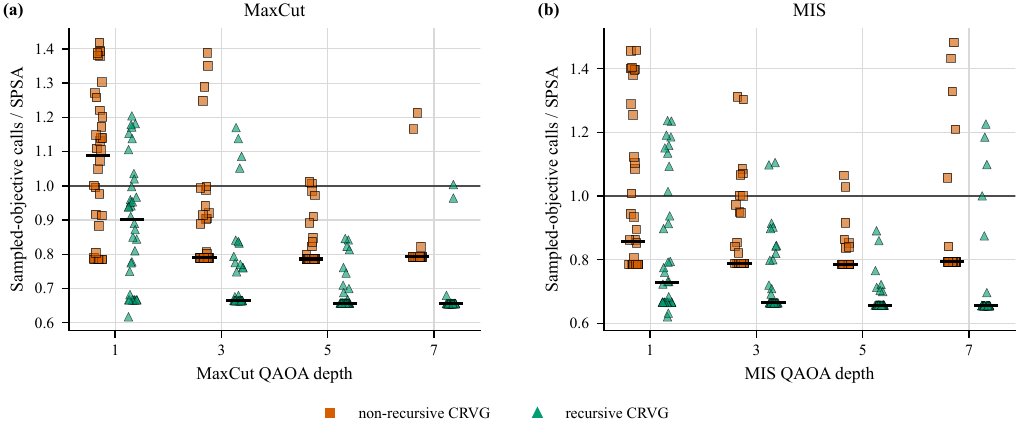}
  \caption{Paired sampled-objective-call ratios for local QAOA runs on (a) MaxCut and (b) MIS (formerly Figs.~6 and~8). Each point compares CRVG with SPSA on the same graph and depth; horizontal segments are medians, and the line at one denotes equal exposure. Values below one use fewer calls than SPSA and must be interpreted jointly with the depth-dependent quality and feasibility evidence.}
  \label{fig:call-ratios}\label{fig:maxcut-call-ratios}\label{fig:mis-call-ratios}
\end{figure*}

\subsection{Limitations and practical recommendation}\label{sec:exp-limitations}

The evidence uses one optimizer seed per instance, noisy best-of-trajectory selection, unequal realized call exposure, and simulator-specific wall time. Across-instance dispersion is not optimizer stochastic uncertainty. External MaxCut baselines are unmatched contextual results, while MIS has no corresponding external baseline. Output-energy variance measures sampled outcomes rather than gradient-estimator variance, and final MIS feasibility requires an additional diagnostic resampling.

The single-seed limitation is partially mitigated on one QOBLIB MIS instance (\texttt{ibm32}, 32 nodes): repeating non-recursive CRVG and SPSA over three seeds there, non-recursive CRVG's mean repaired ratio ($0.753$) remains well above SPSA's ($0.387$), so this particular quality gap is not an artifact of the single seed used elsewhere. This is one instance out of the many studied and does not by itself establish seed-robustness across the full evidence base.

Subject to these limitations, SPSA is the recommended default in the tested regimes: it is competitive across circuit families, avoids recursive CRVG's pronounced depth-seven degradation, wins most MIS instances at three of four depths, and requires neither anchor scheduling nor selection between CRVG variants. CRVG should be treated as a targeted alternative when pilot experiments identify an amortizable regime, notably non-recursive CRVG with EfficientSU2, non-recursive CRVG for MaxCut at depth seven, and both CRVG variants for MIS at depth three. A confirmatory study should use paired initializations and multiple seeds, count every anchor and diagnostic circuit, stop at exact matched shot budgets, and measure estimator error against exact gradients on tractable instances.

\subsection{A preliminary real-hardware check}\label{sec:exp-hardware}

The results above are entirely statevector or MPS simulation. As a first, deliberately small step toward the confirmatory study called for above, we evaluated already-converged parameters from the EfficientSU2 sweep directly on real IBM Quantum hardware, without any on-hardware re-optimization. We took the converged $\theta$ from a single 18-node Erd\H{o}s--R\'enyi instance (\texttt{000\_18nodes\_erdosrenyi30percent}, $p=144$ EfficientSU2 parameters, transpiled to \texttt{isa\_depth}${}=95$) for SPSA and non-recursive CRVG at two seeds, and submitted one \texttt{SamplerV2} circuit evaluation per point to \texttt{ibm\_fez} and \texttt{ibm\_marrakesh}, comparing the resulting hardware approximation ratio (AR) against a shot-matched Aer simulation at the identical parameters.

Table~\ref{tab:hardware-check} reports all four backend/seed pairs collected. CRVG's hardware AR exceeds SPSA's in every pair, but the two seeds illustrate different mechanisms. At seed 42, CRVG and SPSA are essentially tied in simulation ($0.623$ vs.\ $0.628$); on hardware, CRVG's AR degrades less from sim to hardware than SPSA's on both backends ($-0.162$/$-0.099$ vs.\ $-0.227$/$-0.141$), and its hardware energy variance is also lower on both backends ($36.2$/$27.9$ vs.\ $44.8$/$33.7$), so the hardware result breaks a simulation-side tie in CRVG's favor on both metrics. At seed 43, CRVG instead converges to a materially better simulated optimum than SPSA ($0.521$ vs.\ $0.347$), and that advantage simply carries through to hardware ($0.375$ vs.\ $0.242$); here CRVG's sim-to-hardware AR gap ($-0.146$) is not smaller than SPSA's ($-0.104$), and its hardware energy variance is essentially tied with SPSA's ($34.2$ vs.\ $33.5$), so this pair does not support a noise-robustness claim, only a better-starting-point one.

\begin{table}[h]
  \centering
  \caption{Real-hardware approximation ratios and hardware output-energy variance for already-converged EfficientSU2 parameters, one 18-node instance, no on-hardware re-optimization. Two backends (\texttt{ibm\_fez} and \texttt{ibm\_marrakesh}) are used. $\Delta$ is hardware AR minus simulated AR.}
  \label{tab:hardware-check}
  \footnotesize
  \begin{tabular}{@{}llrrrr@{}}
    \toprule
    Seed & Method & Sim AR & HW AR & $\Delta$ & HW variance \\
    \midrule
    42 & SPSA & 0.628 & 0.401 & $-0.227$ & 44.8 \\
    42 & Non-recursive CRVG & 0.623 & \textbf{0.461} & \textbf{$-0.162$} & \textbf{36.2} \\
    42 & SPSA & 0.628 & 0.487 & $-0.141$ & 33.7 \\
    42 & Non-recursive CRVG & 0.623 & \textbf{0.524} & \textbf{$-0.099$} & \textbf{27.9} \\
    43 & SPSA & 0.347 & 0.242 & $-0.104$ & 33.5 \\
    43 & Non-recursive CRVG & 0.521 & \textbf{0.375} & $-0.146$ & 34.2 \\
    \bottomrule
  \end{tabular}
\end{table}

We also transpiled and submitted the QAOA ansatz on hardware, but exclude it from any CRVG-vs-SPSA judgment: at this instance size QAOA transpiles to \texttt{isa\_depth}${}\approx 1280$--$1310$, roughly 13--14$\times$ deeper than the EfficientSU2 circuit above, and its hardware AR collapsed toward the noise floor for both optimizers. We attribute this to the depth confound rather than to either optimizer, and do not treat it as evidence.

This check is a single instance, two backends, and two seeds, so it is a replicated early signal rather than a settled result, and it does not substitute for the paired, multi-seed, matched-budget confirmatory study described above. Within that scope, it is the first evidence in this line of work that CRVG's simulation-side behavior is not an artifact of noiseless simulation: its converged parameters transfer to real superconducting hardware at least as well as SPSA's, and in one of two seeds tested, better.

\FloatBarrier

\section{Conclusion}
\label{sec:conclusion}

In this work, we introduced the Cached Recycled Variance-Reduced Gradient (CRVG) framework to resolve the fundamental sample complexity bottleneck that affects standard Variational Quantum Algorithms. By re-evaluating the historical preference for strictly unbiased, two-sided symmetric estimators, we exposed a critical hardware optimization: Temporal State Caching. This mechanism explicitly recycles unperturbed quantum state evaluations across classical memory, securing a strict 25\% reduction in inner-loop quantum hardware utilization by dropping the per-step execution cost from 4 circuits to exactly 3 for both non-recursive and recursive frameworks. Theoretically, we established a formal mathematical separation between two variance-reduced routing paths for finding an $\epsilon^2$-stationary point. While the non-recursive variant improves upon standard baselines to reach an $\mathcal{O}(\max\{p/\epsilon^2, p^{2/3}/\epsilon^{10/3}\})$ complexity, our recursive variant systematically overcomes these limitations. By leveraging history-dependent gradient tracking, the recursive CRVG permits a more aggressive learning rate that bypasses the dimensionality barrier, achieving an $\mathcal{O}(\max\{p/\epsilon^2, \sqrt{p}/\epsilon^3\})$ sample complexity to reach the exact same stationarity threshold. This analysis formally proves that the mandatory $\mathcal{O}(p)$ evaluation cost of the exact analytical snapshot gradient fundamentally dictates the dimension scaling limits of hybrid quantum-classical optimization. Empirically, our extensive evaluations on MaxCut and Maximum Independent Set (MIS) benchmarks revealed a nuanced operational reality for the NISQ era. While CRVG achieves superior energy minimums and tighter output variances in specific targeted regimes, most notably the non-recursive variant on EfficientSU2 circuits and both variants on depth-three MIS QAOA, standard SPSA remains a highly robust, tuning-resistant default across varying circuit depths. This dichotomy highlights that while CRVG offers a favorable quality-resource combination in amortizable regimes, theoretical sample complexity advantages must always be weighed against an optimizer's stability under finite-shot noise. Ultimately, CRVG establishes a practical paradigm in quantum optimization through a principled bias-throughput trade-off. By mathematically verifying that a marginal $\mathcal{O}(\epsilon)$ deterministic bias can be traded for substantial physical circuit savings, CRVG provides a hardware-aware gradient extraction engine for trainable and BP-free quantum landscapes.



\appendix


\newpage
\appendix

\section*{Appendix}

\section{Derivation of Baseline Sample Complexities}
\label{sec:baseline_complexity}

We start with the following lemma. 

\begin{lem}[Total Finite-Shot Variance of the PSR Estimator]
\label{lem:psr_variance}
Let $\hat{g}$ be the stochastic gradient vector estimated via the exact Parameter-Shift Rule, where each shifted circuit is evaluated using a finite number of measurement shots $N$. The total variance of the PSR gradient vector strictly scales as $\mathcal{O}(p/N)$. Consequently, for a constant shot allocation $N = \mathcal{O}(1)$, the total spatial variance is $\sigma^2_{PSR} = \mathcal{O}(p)$.
\end{lem}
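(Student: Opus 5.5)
The plan is to write the PSR estimator coordinate-wise and bound the variance of each component separately. For coordinate $j$ we have
\[
\hat{g}_j = \tfrac{1}{2}\bigl[\hat{f}_N(\theta + \tfrac{\pi}{2}\mathbf{e}_j) - \hat{f}_N(\theta - \tfrac{\pi}{2}\mathbf{e}_j)\bigr],
\]
where $\hat{f}_N(\phi)$ is the empirical mean of $N$ i.i.d.\ single-shot measurement outcomes of $H$ in the state $\ket{\psi(\phi)}$. The total variance is then defined as
\[
\sigma^2_{PSR} := \mathbb{E}\|\hat{g} - \nabla f(\theta)\|^2 = \sum_{j=1}^p \mathrm{Var}(\hat{g}_j).
\]
The right-hand equality holds because the parameter-shift rule is exact, so $\mathbb{E}[\hat{g}_j] = \partial f/\partial\theta_j$.

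The first step bounds the single-shot variance. A single measurement of $H$ returns an eigenvalue of $H$, so its magnitude is at most $\|H\| \le G$ by Assumption~\ref{ass:H_norm_bounded}. Hence its variance is at most $\bra{\psi}H^2\ket{\psi} - f^2 \le \|H\|^2 \le G^2$. If $H = \sum_i c_i P_i$ is instead estimated term by term in Pauli bases, the same bound holds with $G$ replaced by $\sum_i |c_i|$, which the justification of Assumption~\ref{ass:H_norm_bounded} already bounds. Averaging $N$ independent shots then gives $\mathrm{Var}(\hat{f}_N(\phi)) \le G^2/N$ at every $\phi$.

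The second step handles each coordinate. The two shifted circuits are executed independently, so their shot noises are independent and
\[
\mathrm{Var}(\hat{g}_j) = \tfrac{1}{4}\bigl[\mathrm{Var}(\hat{f}_N(\theta^+_j)) + \mathrm{Var}(\hat{f}_N(\theta^-_j))\bigr] \le \frac{G^2}{2N}.
\]
Summing over the $p$ coordinates gives $\sigma^2_{PSR} \le pG^2/(2N) = \mathcal{O}(p/N)$. Setting $N = \mathcal{O}(1)$ yields $\mathcal{O}(p)$, and absorbing $G$ into the constant follows the convention stated in the complexity analysis.

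The main obstacle is making the claim that the variance \emph{strictly scales} as $p/N$, not merely that it is bounded by it. For this I would add a matching lower bound under a mild nondegeneracy condition. Suppose the state variance $\bra{\psi}H^2\ket{\psi} - f^2$ at the shifted points is bounded below by a constant $c>0$ for a constant fraction of the coordinates $j$. Then every such coordinate contributes at least $c/(2N)$, and the sum is $\Theta(p/N)$. If the proof only needs the upper bound, which is all the downstream complexity comparison uses, I would state that explicitly and treat the lower bound as a remark.
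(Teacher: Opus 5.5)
Your proposal is correct and follows the paper's proof essentially step for step. You decompose $\mathbb{E}\|\hat{g}-\nabla f(\theta)\|^2$ into $\sum_j \mathrm{Var}(\hat{g}_j)$, bound each finite-shot circuit by $\|H\|^2/N$, use independence of the two shifted circuits to get $\|H\|^2/(2N)$ per coordinate, and sum to $p\|H\|^2/(2N)$. Your additions (the eigenvalue-based justification of the single-shot bound, exactness of the shift rule for unbiasedness, and the optional nondegeneracy lower bound) are refinements the paper omits rather than a different route; the paper proves only the upper bound, which is all the $\mathcal{O}(\cdot)$ claim requires.
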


\begin{proof}
By definition, the total variance of the stochastic gradient vector $\hat{g} \in \mathbb{R}^p$ is the trace of its covariance matrix, which corresponds to the expected squared Euclidean distance from the true analytical gradient $\nabla f(\theta)$:
\begin{equation}
    \sigma^2_{PSR} = \mathbb{E}\left[\|\hat{g} - \nabla f(\theta)\|^2\right] = \sum_{j=1}^p \text{Var}(\hat{g}_j)
\end{equation}
For each coordinate $j \in \{1, \dots, p\}$, the PSR estimator computes the partial derivative as:
\begin{equation}
    \hat{g}_j = \frac{1}{2} \left( \hat{f}\left(\theta + \frac{\pi}{2}e_j\right) - \hat{f}\left(\theta - \frac{\pi}{2}e_j\right) \right)
\end{equation}
where $\hat{f}(\cdot)$ is the empirical expectation value obtained from $N$ independent measurement shots. The variance of any finite-shot quantum expectation value is bounded by $\frac{\|H\|^2}{N}$. Because the hardware measurements for the positive and negative shifts are statistically independent, the variance of a single estimated coordinate is:
\begin{equation}
    \text{Var}(\hat{g}_j) = \frac{1}{4} \left[ \text{Var}\left(\hat{f}\left(\theta + \frac{\pi}{2}e_j\right)\right) + \text{Var}\left(\hat{f}\left(\theta - \frac{\pi}{2}e_j\right)\right) \right] \le \frac{1}{4} \left[ \frac{\|H\|^2}{N} + \frac{\|H\|^2}{N} \right] = \frac{\|H\|^2}{2N}
\end{equation}
Because the Hamiltonian norm $\|H\|$ is a physical constant bounded by the spectral properties of the target observable, the variance of each individual coordinate scales strictly as $\mathcal{O}(1/N)$.

Substituting this coordinate-wise bound back into the total vector variance equation yields:
\begin{equation}
    \sigma^2_{PSR} = \sum_{j=1}^p \text{Var}(\hat{g}_j) \le \sum_{j=1}^p \frac{\|H\|^2}{2N} = \frac{p\|H\|^2}{2N}
\end{equation}
Therefore, if the optimization trajectory utilizes a constant, hardware-efficient shot allocation $N = \mathcal{O}(1)$ for each circuit execution, the total variance of the PSR vector scales linearly with the dimension of the parameter space, $\sigma^2_{PSR} = \mathcal{O}(p)$. 
\end{proof}

To establish the baseline sample complexities for the Parameter-Shift Rule (PSR) and Simultaneous Perturbation Stochastic Approximation (SPSA) presented in Table 1, we rely on the standard convergence results for non-convex Stochastic Gradient Descent (SGD) \citep{ghadimi2013stochastic} and the formal variance properties of quantum gradient estimators \citep{sweke2020stochastic}.

For a generic non-convex objective function, standard SGD requires $T = \mathcal{O}(\sigma^2 / \epsilon^4)$ iterations to converge to an $\epsilon^2$-stationary point (where $\mathbb{E}[\|\nabla f(\theta)\|^2] \le \epsilon^2$), with $\sigma^2$ representing the total variance of the gradient estimator. The total sample complexity $\mathcal{C}$ is strictly the product of the number of iterations $T$ and the inner-loop circuit evaluation cost per step.

\subsection{Parameter-Shift Rule (PSR)}
The PSR estimator computes the analytic gradient coordinate-by-coordinate. When evaluated under finite-shot noise using $\mathcal{O}(1)$ measurements per circuit, the independent shot noise accumulates across the parameter vector. As established in Lemma~\ref{lem_variance_bound}, the total variance of the PSR estimator scales linearly with the dimension: $\sigma^2_{PSR} = \mathcal{O}(p)$.
\begin{itemize}
    \item \textbf{Iteration Complexity:} Given the total variance $\sigma^2 = \mathcal{O}(p)$, the number of SGD iterations required to converge is $T_{PSR} = \mathcal{O}(p/\epsilon^4)$.
    \item \textbf{Inner-Loop Cost:} PSR requires exactly $2p$ distinct circuit evaluations per gradient step.
    \item \textbf{Total Sample Complexity:} 
    \begin{equation}
        \mathcal{C}_{PSR} = T_{PSR} \times 2p = \mathcal{O}\left(\frac{p}{\epsilon^4}\right) \times \mathcal{O}(p) = \mathcal{O}\left(\frac{p^2}{\epsilon^4}\right).
    \end{equation}
\end{itemize}

\subsection{Standard SPSA}
SPSA approximates the gradient using a single random perturbation vector $\Delta \in \{-1, +1\}^p$. Because the objective function is projected along this $p$-dimensional random vector, SPSA injects a severe spatial variance that strictly scales linearly with the dimension of the parameter space, yielding $\sigma^2_{SPSA} = \mathcal{O}(p)$.
\begin{itemize}
    \item \textbf{Iteration Complexity:} Given the inflated spatial variance $\sigma^2 = \mathcal{O}(p)$, the optimizer must take significantly more steps to average out the injected noise. The required iterations scale as $T_{SPSA} = \mathcal{O}(p/\epsilon^4)$.
    \item \textbf{Inner-Loop Cost:} SPSA requires only a constant $2$ circuit evaluations per step, giving an optimal inner cost of $\mathcal{O}(1)$.
    \item \textbf{Total Sample Complexity:} 
    \begin{equation}
        \mathcal{C}_{SPSA} = T_{SPSA} \times \mathcal{O}(1) = \mathcal{O}\left(\frac{p}{\epsilon^4}\right) \times \mathcal{O}(1) = \mathcal{O}\left(\frac{p}{\epsilon^4}\right).
    \end{equation}
\end{itemize}

This derivation formally demonstrates the dimensionality barrier in standard quantum optimization. SPSA achieves a highly efficient $\mathcal{O}(1)$ per-step execution cost, but its $\mathcal{O}(p)$ spatial variance penalty forces an $\mathcal{O}(p/\epsilon^4)$ total complexity. Conversely, while PSR avoids spatial variance, its structural requirement to evaluate $2p$ circuits per step yields an even worse $\mathcal{O}(p^2/\epsilon^4)$ asymptotic sample complexity under finite-shot execution.

\section{Technical Proofs}

\subsection{Proof of Lemma~\ref{lem_smooth}}\label{sec:appendix_proofs_lem_smooth}

\textit{Proof.} The objective function of a standard VQA is given by the expectation value of an observable (Hamiltonian) $H$:
\begin{equation}
f(\theta) = \langle 0 | U^\dagger(\theta) H U(\theta) | 0 \rangle
\end{equation}
where the parameterized ansatz $U(\theta)$ is composed of $p$ sequential gates:
\begin{equation}
U(\theta) = \prod_{j=1}^p W_j e^{-i \theta_j P_j / 2}
\end{equation}
Here, $W_j$ are fixed unparameterized unitaries (such as entangling CNOT layers) and $P_j$ are the Hermitian generators of the parameterized rotations. In both VQE (using hardware-efficient $R_x, R_y, R_z$ gates) and QAOA, these generators are involutory Pauli strings, satisfying $P_j^2 = I$ and having a bounded spectral norm $\|P_j\| \le 1$.

To establish $L$-smoothness, we must show that the spectral norm of the Hessian matrix $\nabla^2 f(\theta)$ is strictly bounded. We evaluate the second-order partial derivatives $H_{jk} = \frac{\partial^2 f}{\partial \theta_j \partial \theta_k}$. 

By applying the derivative of the matrix exponential $\frac{\partial}{\partial \theta_j} e^{-i \theta_j P_j / 2} = -\frac{i}{2} P_j e^{-i \theta_j P_j / 2}$, the first derivative with respect to $\theta_j$ can be expressed as a commutator:
\begin{equation}
\frac{\partial f}{\partial \theta_j} = \frac{i}{2} \langle 0 | U_{A}^\dagger [P_j, \tilde{H}_j] U_{A} | 0 \rangle
\end{equation}
where $\tilde{H}_j$ is the Hamiltonian transformed by the subsequent gates, and $U_A$ represents the gates preceding $\theta_j$. 

Taking the second derivative with respect to another parameter $\theta_k$ (assume without loss of generality that $k > j$), we obtain nested commutators:
\begin{equation}
H_{jk} = \frac{\partial^2 f}{\partial \theta_j \partial \theta_k} = -\frac{1}{4} \langle 0 | U_{A}^\dagger [P_j, [P_k, \tilde{H}_{jk}]] U_{A} | 0 \rangle
\end{equation}
We can bound the magnitude of any element of the Hessian using the sub-multiplicative property of the operator norm and the generalized triangle inequality for commutators ($\|[A, B]\| \le 2\|A\|\|B\|$):
\begin{align*}
|H_{jk}| &\le \frac{1}{4} \| [P_j, [P_k, \tilde{H}_{jk}]] \| \le \frac{1}{4} \left( 2 \|P_j\| \cdot \| [P_k, \tilde{H}_{jk}] \| \right) \le \frac{1}{4} \left( 2 \|P_j\| \cdot 2 \|P_k\| \cdot \|\tilde{H}_{jk}\| \right). 
\end{align*}

Because unitary transformations preserve the spectral norm, $\|\tilde{H}_{jk}\| = \|H\|$. Furthermore, because the generators are Pauli strings, $\|P_j\| = \|P_k\| = 1$. This yields a strict, global bound on every element of the Hessian matrix:
\begin{equation}
|H_{jk}| \le \|H\|
\end{equation}
For the diagonal elements ($j=k$), the derivation yields $|H_{jj}| \le \|H\|$ similarly.

Thus, the Hessian $\nabla^2 f(\theta)$ is a $p \times p$ real symmetric matrix where every entry is bounded in absolute value by $\|H\|$. By standard matrix norm inequalities, the spectral norm $\|\nabla^2 f(\theta)\|_2$ is bounded by the Frobenius norm $\|\nabla^2 f(\theta)\|_F$:
\begin{equation}
\|\nabla^2 f(\theta)\|_2 \le \|\nabla^2 f(\theta)\|_F = \sqrt{\sum_{j=1}^p \sum_{k=1}^p |H_{jk}|^2} \le \sqrt{p^2 \|H\|^2} = p \|H\|. 
\end{equation}

This means $f(\theta)$ is globally $L$-smooth, where $L = p \| H \|$.  This completes the proof. \hfill \qedsymbol

\subsection{Proof of Theorem~\ref{thm_bias}}\label{sec:appendix_proofs_thm_bias}

\textit{Proof}. We prove the theorem as follows. 

\textbf{Step 1: The Taylor Expansion (Integral Form)}

To understand the function's behavior when perturbed by $\nu\Delta$, we begin with the exact Taylor expansion utilizing the integral remainder:
\begin{equation}
f(\theta + \nu\Delta) = f(\theta) + \nu \nabla f(\theta)^\top \Delta + \int_0^1 (1-\tau)\nu^2 \Delta^\top \nabla^2 f(\theta + \tau\nu\Delta)\Delta \, d\tau
\end{equation}

\textbf{Step 2: Constructing the 1-Sided Estimator}

The standard 1-sided estimator is defined as:
\begin{equation}
g(\theta, \Delta) = \frac{f(\theta + \nu\Delta) - f(\theta)}{\nu}\Delta
\end{equation}
Subtracting $f(\theta)$ from both sides of the Taylor expansion, dividing by $\nu$, and multiplying by the perturbation vector $\Delta$ yields the expanded form of our estimator:
\begin{equation}
g(\theta, \Delta) = (\nabla f(\theta)^\top \Delta)\Delta + \nu \int_0^1 (1-\tau)(\Delta^\top \nabla^2 f(\theta + \tau\nu\Delta)\Delta)\Delta \, d\tau
\end{equation}

\textbf{Step 3: Extracting the True Gradient (The First Term)}

To find the expected value of the estimator, $\bar{g}(\theta)$, we take the expectation $\mathbb{E}_\Delta$ over the random vector $\Delta$. For a Rademacher vector $\Delta$, the components are independent, yielding a covariance matrix $\mathbb{E}[\Delta \Delta^\top] = I$. Thus, evaluating the first term:
\begin{equation}
\mathbb{E}_\Delta[(\nabla f(\theta)^\top \Delta)\Delta] = \mathbb{E}_\Delta[\Delta \Delta^\top \nabla f(\theta)] = I \nabla f(\theta) = \nabla f(\theta)
\end{equation}
This proves the first term perfectly recovers the true analytical gradient.

\textbf{Step 4: Bounding the Bias (The Second Term)}

Because the first term is exactly $\nabla f(\theta)$, any difference between the expected estimator $\bar{g}(\theta)$ and the true gradient $\nabla f(\theta)$ stems entirely from the second term (the integral). This difference is the estimator's bias:
\begin{equation}
\bar{g}(\theta) - \nabla f(\theta) = \mathbb{E}_\Delta\left[ \nu \int_0^1 (1-\tau)(\Delta^\top \nabla^2 f(\theta + \tau\nu\Delta)\Delta)\Delta \, d\tau \right]
\end{equation}

To find the maximum possible bias, we bound the norm of this expectation using three critical properties:
\begin{enumerate}
    \item \textit{Smoothness Guarantee:} By Lemma 1, the spectral norm of the Hessian $\nabla^2 f$ is globally bounded by $p\|H\|$. Therefore, the quadratic form inside the integral is bounded: $\Delta^\top \nabla^2 f(x) \Delta \le p\|H\| \|\Delta\|^2$.
    \item \textit{Rademacher Norm:} Since $\Delta$ has $p$ components and each component is either $1$ or $-1$, the squared norm is exactly the dimension: $\|\Delta\|^2 = p$. This bounds the quadratic form by $p^2\|H\|$.
    \item \textit{Evaluating the Integral:} The deterministic part of the integral evaluates directly as $\int_0^1 (1-\tau) \, d\tau = \left[ \tau - \frac{\tau^2}{2} \right]_0^1 = \frac{1}{2}$.
\end{enumerate}

Combining these bounds isolates the magnitude of the bias:
\begin{equation}
\|\bar{g}(\theta) - \nabla f(\theta)\| \le \nu \cdot p\|H\| \cdot p \cdot \int_0^1 (1-\tau) \, d\tau = \frac{\nu p^2 \|H\|}{2}
\end{equation}
This establishes the strict analytical bias bound, completing the proof. \hfill \qedsymbol

\subsection{Proof of Lemma~\ref{lem_lip_estimator}}\label{sec:appendix_proofs_lem_lip_estimator}

\textit{Proof.} Let the 1-sided estimator be defined as $g(x, \Delta) = \frac{f(x + \nu\Delta) - f(x)}{\nu}\Delta$. The difference between two estimates using the same perturbation $\Delta$ is:
\begin{equation}
g(x, \Delta) - g(y, \Delta) = \frac{[f(x + \nu\Delta) - f(x)] - [f(y + \nu\Delta) - f(y)]}{\nu}\Delta
\end{equation}

Define the auxiliary function $h(z) = f(z + \nu\Delta) - f(z)$. The numerator of the estimator difference simplifies to $h(x) - h(y)$. By the Mean Value Theorem, there exists a point $\xi$ on the line segment between $x$ and $y$ such that:
\begin{equation}
h(x) - h(y) = \nabla h(\xi)^\top (x - y)
\end{equation}

Evaluating the gradient of $h$, we get $\nabla h(\xi) = \nabla f(\xi + \nu\Delta) - \nabla f(\xi)$. Using the landscape smoothness guarantee, where the Lipschitz constant of the true gradient is globally bounded by $L \le p\|H\|$, and the fact that the Rademacher vector norm is $\|\Delta\| = \sqrt{p}$, we bound the norm of $\nabla h(\xi)$:
\begin{equation}
\|\nabla h(\xi)\| = \|\nabla f(\xi + \nu\Delta) - \nabla f(\xi)\| \le p\|H\| \|\nu\Delta\| = \nu p^{1.5} \|H\|
\end{equation}

Applying the Cauchy-Schwarz inequality to the Mean Value Theorem equality yields:
\begin{equation}
|h(x) - h(y)| \le \|\nabla h(\xi)\| \|x - y\| \le \nu p^{1.5} \|H\| \|x - y\|
\end{equation}

Substituting this bounded numerator back into the squared norm of the estimator difference, and noting that $\|\Delta\|^2 = p$:
\begin{align*}
\|g(x, \Delta) - g(y, \Delta)\|^2 &= \frac{|h(x) - h(y)|^2}{\nu^2} \|\Delta\|^2 \\
&\le \frac{(\nu p^{1.5} \|H\| \|x - y\|)^2}{\nu^2} p \\
&= \frac{\nu^2 p^3 \|H\|^2 \|x - y\|^2}{\nu^2} p \\
&= p^4 \|H\|^2 \|x - y\|^2
\end{align*}

Because the $\nu^2$ terms perfectly cancel, taking the expectation over $\Delta$ yields:
\begin{equation}
\mathbb{E}_\Delta [\|g(x, \Delta) - g(y, \Delta)\|^2] \le p^4 \|H\|^2 \|x - y\|^2
\end{equation}

By identifying $L_g^2 \le p^4 \|H\|^2$, we confirm that the estimator inherently satisfies the stochastic Lipschitz condition with bounding constant $L_g \le p^2 \|H\|$. This formally demonstrates that the stochastic Lipschitz property is fundamentally bounded by the quantum physics of the landscape and remains strictly independent of the finite-difference shift $\nu$. \hfill \qedsymbol

\subsection{Proof of Lemma~\ref{lem_variance_bound}}\label{sec:appendix_proofs_lem_variance_bound}

\begin{proof}
The proof relies on the foundational variance properties of simultaneous perturbation stochastic approximation (SPSA) applied to strictly bounded quantum landscapes.

\textbf{Step 1: The Global Gradient Bound} \\
Unlike classical machine learning objectives which can grow unbounded, a Variational Quantum Algorithm (VQA) objective is strictly bounded by the spectral norm of the target Hamiltonian. Because the quantum state $|\psi(\theta)\rangle$ is normalized, $|f(\theta)| = |\langle \psi(\theta) | H | \psi(\theta) \rangle| \le \|H\|$. 

By the exact algebraic identity of the Parameter-Shift Rule, any analytical partial derivative evaluates to:
\begin{equation}
    \nabla_j f(\theta) = \frac{1}{2} \left[ f\left(\theta + \frac{\pi}{2}e_j\right) - f\left(\theta - \frac{\pi}{2}e_j\right) \right]
\end{equation}
Applying the triangle inequality and substituting the global objective bound yields a strict constraint on every gradient component:
\begin{equation}
    |\nabla_j f(\theta)| \le \frac{1}{2} \left| f\left(\theta + \frac{\pi}{2}e_j\right) \right| + \frac{1}{2} \left| f\left(\theta - \frac{\pi}{2}e_j\right) \right| \le \|H\|
\end{equation}
Consequently, the true gradient components are globally bounded by a physical constant, mathematically preventing the parameter landscape gradients from diverging.

\textbf{Step 2: Universal Spatial Variance of a Single Estimator} \\
The spatial variance of a single 1-sided estimator is $\text{Var}(g) = \mathbb{E}_{\Delta}[\|g(\theta, \Delta) - \bar{g}(\theta)\|^2]$. As formally established in quantum zeroth-order literature (e.g., \citep{sweke2020stochastic}), projecting this structurally bounded landscape onto a $p$-dimensional random perturbation vector $\Delta \in \{-1, +1\}^p$ injects a spatial variance that scales linearly with the dimension of the parameter space. Because the gradient magnitude cannot diverge, the global upper bound for the variance of a single estimator is rigorously constrained to:
\begin{equation}
    \text{Var}(g(\theta, \Delta)) = \mathcal{O}(p)
\end{equation}

\textbf{Step 3: Variance of the Empirical Mini-Batch} \\
The anchor gradient $v_0$ is constructed as the empirical mean of $B$ independent and identically distributed (i.i.d.) 1-sided estimators. By the fundamental statistical properties of i.i.d. random variables, the variance of the sample mean scales inversely with the batch size:
\begin{equation}
    \bar{\sigma}_0^2 = \text{Var}\left( \frac{1}{B} \sum_{i=1}^B g(\theta, \Delta_i) \right) = \frac{\text{Var}(g(\theta, \Delta))}{B} = \mathcal{O}\left(\frac{p}{B}\right)
\end{equation}

\textbf{Step 4: Enforcing the Convergence Threshold} \\
To satisfy the bounds of the convergence theorems, the tracking error introduced by the anchor must not exceed the target stationarity threshold $\mathcal{O}(\epsilon^2)$. Equating our worst-case variance bound to this threshold yields:
\begin{equation}
    \mathcal{O}\left(\frac{p}{B}\right) \le \mathcal{O}(\epsilon^2)
\end{equation}
Solving for the batch size rigorously dictates the structural requirement $B = \mathcal{O}\left(\frac{p}{\epsilon^2}\right)$. This completes the proof, demonstrating that the anchor batch size must scale with the parameter dimension to successfully suppress the inherent spatial variance of the hardware.
\end{proof}

\section{Technical Proof of Theorem~\ref{thm:nonrecursive_crvg}}\label{sec:appendix_proofs_02}

Before proving Theorem~\ref{thm:nonrecursive_crvg}, we introduce the following supporting lemmas to bound the tracking error and construct the Lyapunov sequence.

\subsection{Refined Descent Lemma}

\begin{lem}[Descent Lemma for Non-Recursive CRVG]
\label{lem:refined_descent}
For the parameter update $\theta_{t+1}^{(s)} = \theta_t^{(s)} - \eta v_t$, under the $L$-smoothness guarantee, the expected objective function value satisfies:
\begin{align}
\mathbb{E}[f(\theta_{t+1}^{(s)})] \le \mathbb{E}[f(\theta_t^{(s)})] &- \frac{\eta}{2}\mathbb{E}[\|\nabla f(\theta_t^{(s)})\|^2] - \frac{\eta}{2}(1 - L\eta)\mathbb{E}[\|\bar{g}(\theta_t^{(s)})\|^2] \nonumber \\
&+ \frac{L\eta^2}{2}\mathbb{E}[\|v_t - \bar{g}(\theta_t^{(s)})\|^2] + \frac{\eta L_g^2}{8} \nu^2.
\end{align}
\end{lem}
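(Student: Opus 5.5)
Since Lemma~\ref{lem_smooth} makes $f$ globally $L$-smooth, I will start from the quadratic upper bound
\[
f(\theta_{t+1}) \le f(\theta_t) - \eta \nabla f(\theta_t)^\top v_t + \frac{L\eta^2}{2}\|v_t\|^2 .
\]
I will treat the cross term and the quadratic term separately.

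\textbf{Quadratic term.} Write $v_t = \bar{g}(\theta_t) + (v_t - \bar{g}(\theta_t))$. Conditional on $\mathcal{F}_t$ (all randomness up to $\theta_t$), the fresh direction $\Delta_t$ gives
\[
\mathbb{E}_{\Delta_t}[v_t] = \bar{g}(\theta_t) - \bar{g}(\theta_0^{(s)}) + v_0^{(s)} .
\]
So the deviation $v_t - \bar{g}(\theta_t)$ splits into a zero-mean part and the anchor error $v_0^{(s)} - \bar{g}(\theta_0^{(s)})$. If the cross term between $\bar{g}(\theta_t)$ and the deviation is handled through the inner-product term below, then
\[
\mathbb{E}\|v_t\|^2 \le \mathbb{E}\|\bar{g}(\theta_t)\|^2 + \mathbb{E}\|v_t - \bar{g}(\theta_t)\|^2 + (\text{cross}) .
\]
This is where the coefficients $\frac{L\eta^2}{2}$ on the error and $\frac{L\eta^2}{2}\|\bar{g}\|^2$ (absorbed into $-\frac{\eta}{2}(1-L\eta)\|\bar{g}\|^2$) come from.

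\textbf{Inner product term.} I will use the polarization identity with $\bar{g}$ in place of $v_t$:
\[
-\eta\,\nabla f^\top \bar{g} = -\frac{\eta}{2}\|\nabla f\|^2 - \frac{\eta}{2}\|\bar{g}\|^2 + \frac{\eta}{2}\|\nabla f - \bar{g}\|^2 .
\]
This produces exactly the $-\frac{\eta}{2}\|\nabla f\|^2$ and $-\frac{\eta}{2}\|\bar{g}\|^2$ terms. The bias term $\frac{\eta}{2}\|\nabla f - \bar{g}\|^2$ I will bound through the integral-remainder representation from the proof of Theorem~\ref{thm_bias}. To obtain the $\frac{L_g^2}{4}\nu^2$ form rather than the crude $\nu^2 p^4\|H\|^2/4$, I will express the remainder via differences of the estimator, $\frac{1}{\nu}\big(\nabla f(\theta+\tau\nu\Delta)-\nabla f(\theta)\big)$-type quantities, and invoke Lemma~\ref{lem_lip_estimator}. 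Then $\frac{\eta}{2}\cdot\frac{L_g^2\nu^2}{4} = \frac{\eta L_g^2\nu^2}{8}$.

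\textbf{Main obstacle.} The leftover cross term is $-\eta\,\nabla f(\theta_t)^\top(v_t - \bar{g}(\theta_t))$. Its conditional mean is
\[
-\eta\,\nabla f^\top\big(v_0^{(s)} - \bar{g}(\theta_0^{(s)})\big),
\]
which is not zero, because the anchor error persists across the whole epoch. My first attempt is to note that $v_0^{(s)}$ is built from directions independent of the inner-loop directions, and that $\mathbb{E}[v_0^{(s)}\mid\theta_0^{(s)}] = \bar{g}(\theta_0^{(s)})$. This makes the term vanish at $t=0$. For $t\ge1$, however, $\theta_t$ is correlated with $v_0^{(s)}$, so this alone does not suffice.

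Failing that, I will fold this cross term into the $\frac{L\eta^2}{2}\|v_t - \bar{g}\|^2$ slot via Young's inequality. This may force a different constant, and I expect the downstream Lyapunov argument (the $2\tau\mu$ part of $C_\sigma$) to absorb it. I regard this step as the genuinely delicate point of the lemma as stated.
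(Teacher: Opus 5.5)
Your proposal does not reach the stated inequality. Write $e_0 := v_0^{(s)} - \bar g(\theta_0^{(s)})$ and let $\mathcal F_t$ be the history up to $\theta_t$. Then $\mathbb{E}[v_t\mid\mathcal F_t] = \bar g(\theta_t) + e_0$, with $e_0$ being $\mathcal F_t$-measurable.

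The cross term in $\|v_t\|^2$ cannot be ``handled through'' the inner-product term. In conditional mean the two cross terms combine to $-\eta\langle \nabla f(\theta_t) - L\eta\,\bar g(\theta_t),\, e_0\rangle$. This is not zero, and nothing on the stated right-hand side absorbs it.

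Your Young fallback, e.g.\ $-\eta\langle\nabla f, e_0\rangle\le \frac{\eta}{4}\|\nabla f\|^2+\eta\|e_0\|^2$ and $L\eta^2\langle\bar g, e_0\rangle\le\frac{L\eta^2}{2}(\|\bar g\|^2+\|e_0\|^2)$, produces $-\frac{\eta}{4}\|\nabla f\|^2-\frac{\eta}{2}(1-2L\eta)\|\bar g\|^2+(\eta+\frac{L\eta^2}{2})\|e_0\|^2$ instead. That is a valid but different lemma: the anchor error enters at order $\eta$ rather than $L\eta^2$. Downstream it changes the constants of Theorem~\ref{thm:nonrecursive_crvg} but not the rate. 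The $\bar\sigma_0^2$ coefficient gains an additive $\mathcal{O}(1)$ amount beyond what $2\tau\mu\le\frac12$ covers, so it is not absorbed as you expect.

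The obstacle you found is genuine, and it is exactly the step the paper's proof passes over. The paper conditions on the history and asserts $\mathbb{E}[v_t]=\bar g(\theta_t^{(s)})$. After that, both cross terms vanish and the statement follows from your own decomposition: polarization with $\bar g$, the split $\mathbb{E}\|v_t\|^2=\mathbb{E}\|v_t-\bar g\|^2+\|\bar g\|^2$, and Theorem~\ref{thm_bias}.

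That identity holds only for an exact anchor $v_0^{(s)}=\bar g(\theta_0^{(s)})$. An exact anchor would make $\bar\sigma_0^2\equiv 0$, and it is not what Algorithm~\ref{alg:nonrecursive_crvg} computes.

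With a stochastic anchor, the inequality does not follow from $L$-smoothness plus the bias bound. Take $f(\theta)=\frac{L}{2}\|\theta\|^2$, for which the smoothness inequality is tight, $\bar g=\nabla f$, and $L_g=L\sqrt p$ is valid for every $\nu$. Assume $\theta_0\neq 0$ is deterministic, $p\ge 2$ and $L\eta<1$. Then $\theta_1=(1-L\eta)\theta_0-\eta e_0$, and an exact computation gives $\mathbb{E} f(\theta_2)=\mathrm{RHS}-\frac{\eta L_g^2\nu^2}{8}+L\eta^2(1-L\eta)\,\mathbb{E}\|e_0\|^2$. Here $\mathrm{RHS}$ is the stated right-hand side at $t=1$, and $\mathbb{E}\|e_0\|^2\ge (p-1)L^2\|\theta_0\|^2/B$ independently of $\nu$. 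So the stated bound fails once $\nu$ is small.

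You therefore have two options. Either assume an exact anchor, in which case your argument closes with no cross terms. Or state the lemma with the extra anchor-error term, as your Young step does.

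Separately, your bias step cites the wrong tool. Lemma~\ref{lem_lip_estimator} bounds $g(x,\Delta)-g(y,\Delta)$ at two base points with a shared $\Delta$, not $\bar g(\theta)-\nabla f(\theta)$. The paper just quotes Theorem~\ref{thm_bias}, $\|\bar g-\nabla f\|\le \nu p^2\|H\|/2$, and reads it as $L_g\nu/2$ by taking $L_g:=p^2\|H\|$. Lemma~\ref{lem_lip_estimator} only gives $L_g\le p^2\|H\|$, so this is a choice of constant, not a consequence.
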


\begin{proof}
Applying the standard smoothness inequality to the parameter update $\theta_{t+1}^{(s)} = \theta_t^{(s)} - \eta v_t$ gives:
\[
f(\theta_{t+1}^{(s)}) \le f(\theta_t^{(s)}) - \eta \langle \nabla f(\theta_t^{(s)}), v_t \rangle + \frac{L\eta^2}{2}\|v_t\|^2.
\]
Taking the expectation conditioned on the history up to step $t$, we note $\mathbb{E}[v_t] = \bar{g}(\theta_t^{(s)})$. Using the algebraic identity $-\eta \langle \nabla f, \bar{g} \rangle = \frac{\eta}{2}\|\bar{g} - \nabla f\|^2 - \frac{\eta}{2}\|\nabla f\|^2 - \frac{\eta}{2}\|\bar{g}\|^2$, and separating the variance as $\mathbb{E}\|v_t\|^2 = \mathbb{E}\|v_t - \bar{g}(\theta_t^{(s)})\|^2 + \|\bar{g}(\theta_t^{(s)})\|^2$, we obtain:
\begin{align*}
\mathbb{E}[f(\theta_{t+1}^{(s)})] \le \mathbb{E}[f(\theta_t^{(s)})] &- \frac{\eta}{2}\mathbb{E}[\|\nabla f(\theta_t^{(s)})\|^2] - \frac{\eta}{2}(1 - L\eta)\mathbb{E}[\|\bar{g}(\theta_t^{(s)})\|^2] \\
&+ \frac{L\eta^2}{2}\mathbb{E}[\|v_t - \bar{g}(\theta_t^{(s)})\|^2] + \frac{\eta}{2}\mathbb{E}[\|\bar{g}(\theta_t^{(s)}) - \nabla f(\theta_t^{(s)})\|^2].
\end{align*}
By the analytical bias bound established in Theorem 1, the final term is bounded by $\frac{\eta}{2} \left( \frac{L_g \nu}{2} \right)^2 = \frac{\eta L_g^2}{8} \nu^2$. Substituting this yields the desired result.
\end{proof}

\subsection{Lyapunov Sequence Expansion}

\begin{lem}[Lyapunov Sequence Expansion]
\label{lem:lyapunov_expansion}
Define the Lyapunov function for a single epoch as $R_t^{(s)} = \mathbb{E}[f(\theta_t^{(s)}) + c_t \|\theta_t^{(s)} - \theta_0^{(s)}\|^2]$, where $c_m = 0$. By defining the recursive control sequence as $c_t = c_{t+1}(1 + \beta) + \left(\frac{L\eta^2}{2} + c_{t+1}(1 + \beta^{-1})\eta^2\right) L_g^2$, the single-step Lyapunov difference satisfies:
\begin{align}
R_{t+1}^{(s)} - R_t^{(s)} \le &-\frac{\eta}{2}\mathbb{E}[\|\nabla f(\theta_t^{(s)})\|^2] - \left[\frac{\eta}{2}(1 - L\eta) - c_{t+1}(1 + \beta^{-1})\eta^2\right]\mathbb{E}[\|\bar{g}(\theta_t^{(s)})\|^2] \nonumber \\
&+ W_t \mathbb{E}[\|v_0^{(s)} - \bar{g}(\theta_0^{(s)})\|^2] + \frac{\eta L_g^2}{8} \nu^2,
\end{align}
where $W_t = \frac{L\eta^2}{2} + c_{t+1}(1 + \beta^{-1})\eta^2$.
\end{lem}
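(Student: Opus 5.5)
We will prove Lemma~\ref{lem:lyapunov_expansion} by adding the descent inequality of Lemma~\ref{lem:refined_descent} to a one-step expansion of the weighted drift term $c_{t+1}\mathbb{E}\|\theta_{t+1}^{(s)}-\theta_0^{(s)}\|^2$. After that, we bound the tracking error $\mathbb{E}\|v_t-\bar{g}(\theta_t^{(s)})\|^2$ using the stochastic Lipschitz property of Lemma~\ref{lem_lip_estimator}. Then we choose $c_t$ so that every coefficient of $\mathbb{E}\|\theta_t^{(s)}-\theta_0^{(s)}\|^2$ cancels, which is exactly what the stated recursion for $c_t$ is designed to do.

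\textbf{Drift term.} We write $\theta_{t+1}-\theta_0=(\theta_t-\theta_0)-\eta v_t$ and apply Young's inequality $\|a+b\|^2\le(1+\beta)\|a\|^2+(1+\beta^{-1})\|b\|^2$. This gives
\[
\mathbb{E}\|\theta_{t+1}-\theta_0\|^2\le(1+\beta)\mathbb{E}\|\theta_t-\theta_0\|^2+(1+\beta^{-1})\eta^2\mathbb{E}\|v_t\|^2 .
\]
We split $\mathbb{E}\|v_t\|^2=\mathbb{E}\|v_t-\bar{g}(\theta_t)\|^2+\mathbb{E}\|\bar{g}(\theta_t)\|^2$, using the conditional unbiasedness $\mathbb{E}[v_t\mid\mathcal{F}_t]=\bar{g}(\theta_t)$ already invoked in Lemma~\ref{lem:refined_descent}. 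The $\|\bar{g}\|^2$ piece, with coefficient $c_{t+1}(1+\beta^{-1})\eta^2$, is what appears in the second bracket of the claim.

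\textbf{Tracking error (the main obstacle).} From \eqref{eq:non-recursive} we have
\[
v_t-\bar{g}(\theta_t)=\big[g(\theta_t,\Delta_t)-g(\theta_0,\Delta_t)-(\bar{g}(\theta_t)-\bar{g}(\theta_0))\big]+\big[v_0^{(s)}-\bar{g}(\theta_0)\big].
\]
Conditioned on $\mathcal{F}_t$, the first bracket has mean zero, and the second bracket is $\mathcal{F}_t$-measurable, so the cross term vanishes. A centered variance is at most the second moment, and Lemma~\ref{lem_lip_estimator} then gives
\[
\mathbb{E}\|v_t-\bar{g}(\theta_t)\|^2\le L_g^2\,\mathbb{E}\|\theta_t-\theta_0\|^2+\mathbb{E}\|v_0^{(s)}-\bar{g}(\theta_0)\|^2 .
\]
This is the delicate step, for two reasons. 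First, $\Delta_t$ is fresh and independent of $\theta_t$, $\theta_0$ and $v_0^{(s)}$, which must be true for the cross term to vanish. Second, the constant must come out as exactly $1$ in front of the anchor error and $L_g^2$ in front of the drift, with no factor $2$. If the cross-term argument fails, we would fall back on $\|a+b\|^2\le2\|a\|^2+2\|b\|^2$ and absorb the resulting factor into $W_t$.

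\textbf{Assembly.} The tracking error appears twice: with weight $L\eta^2/2$ from Lemma~\ref{lem:refined_descent} and with weight $c_{t+1}(1+\beta^{-1})\eta^2$ from the drift step. Their sum is precisely $W_t$. Adding the two inequalities, the coefficient of $\mathbb{E}\|\theta_t-\theta_0\|^2$ on the right-hand side becomes $c_{t+1}(1+\beta)+W_tL_g^2$, which equals $c_t$ by the defining recursion. That term therefore cancels against the $c_t\mathbb{E}\|\theta_t-\theta_0\|^2$ contained in $R_t^{(s)}$. What remains is the gradient term $-\frac{\eta}{2}\mathbb{E}\|\nabla f\|^2$, the combined $\|\bar{g}\|^2$ coefficient $-[\frac{\eta}{2}(1-L\eta)-c_{t+1}(1+\beta^{-1})\eta^2]$, the anchor term $W_t\,\mathbb{E}\|v_0^{(s)}-\bar{g}(\theta_0^{(s)})\|^2$, and the bias floor $\frac{\eta L_g^2}{8}\nu^2$ inherited unchanged from Lemma~\ref{lem:refined_descent}. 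This is the claimed inequality.
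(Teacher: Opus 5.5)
Your route is the paper's route step for step. You apply Young's inequality to $\theta_{t+1}^{(s)}-\theta_0^{(s)}=(\theta_t^{(s)}-\theta_0^{(s)})-\eta v_t$, split $\mathbb{E}\|v_t\|^2$ into tracking error plus $\mathbb{E}\|\bar{g}(\theta_t^{(s)})\|^2$, use the SVRG tracking bound, and choose $c_t$ to cancel the coefficient of $\mathbb{E}\|\theta_t^{(s)}-\theta_0^{(s)}\|^2$. Your bookkeeping of $W_t$ and of the $\|\bar g\|^2$ bracket matches the statement. Your derivation of the tracking bound (fresh $\Delta_t$, centered second moment at most the raw one, Lemma~\ref{lem_lip_estimator}) is correct, and it goes beyond the paper, which only asserts that bound.

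However, that derivation exposes a gap in your drift step, and the paper's proof has the same gap. Write $e_0:=v_0^{(s)}-\bar g(\theta_0^{(s)})$. For $t\ge 1$ your decomposition gives $\mathbb{E}[v_t\mid\mathcal{F}_t]=\bar g(\theta_t)+e_0$, with $e_0$ measurable with respect to $\mathcal{F}_t$. So the conditional unbiasedness $\mathbb{E}[v_t\mid\mathcal{F}_t]=\bar g(\theta_t)$ that you invoke for the split holds only for an exact anchor. In general
\[
\mathbb{E}\|v_t\|^2=\mathbb{E}\|v_t-\bar g(\theta_t)\|^2+\mathbb{E}\|\bar g(\theta_t)\|^2+2\,\mathbb{E}\langle e_0,\bar g(\theta_t)\rangle ,
\]
and the last term does not vanish, because $\theta_t$ depends on $v_0^{(s)}$ (already $\theta_1^{(s)}=\theta_0^{(s)}-\eta v_0^{(s)}$). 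In the one-dimensional quadratic model $f(\theta)=\frac{a}{2}\theta^2$ one has $\bar g(\theta)=a\theta$. At $t=1$ the extra term equals $-2a\eta\,\mathbb{E}|e_0|^2$, which is strictly positive for $a<0$. So the split fails even as the upper bound you need. The same unjustified unbiasedness produces the $\|\bar g\|^2$ form of Lemma~\ref{lem:refined_descent}, which you take as input.

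Consequently, the lemma with exactly these $W_t$, this bracket, and this recursion for $c_t$ is established only when $v_0^{(s)}=\bar g(\theta_0^{(s)})$. That is precisely the case in which the anchor-error term it is meant to track vanishes. A sound version has two options:
\begin{enumerate}
\item Bound $\|v_t\|^2\le 2\|v_t-\bar g(\theta_t)\|^2+2\|\bar g(\theta_t)\|^2$. This changes the constants, and the descent lemma needs an analogous repair of its inner-product term $-\eta\langle\nabla f(\theta_t),e_0\rangle$.
\item Avoid the split altogether by starting, as in Lemma~\ref{lem:descent}, from the pathwise bound in $\|v_t\|^2$ and $\|v_t-\nabla f(\theta_t)\|^2$. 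Absorb the drift contribution $c_{t+1}(1+\beta^{-1})\eta^2\|v_t\|^2$ directly into $-\frac{\eta}{2}(1-L\eta)\|v_t\|^2$, and control $\|v_t-\nabla f(\theta_t)\|^2$ through your tracking bound plus the bias bound, at the cost of constants different from those stated.
\end{enumerate}
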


\begin{proof}
Applying Young's Inequality establishes $\|\theta_{t+1}^{(s)} - \theta_0^{(s)}\|^2 \le (1 + \beta)\|\theta_t^{(s)} - \theta_0^{(s)}\|^2 + (1 + \beta^{-1})\eta^2\|v_t\|^2$ for some $\beta > 0$. Substituting this alongside Lemma~\ref{lem:refined_descent} into $R_{t+1}^{(s)} - R_t^{(s)}$, and expanding $\mathbb{E}\|v_t\|^2$ again, we get:
\begin{align*}
R_{t+1}^{(s)} - R_t^{(s)} \le &-\frac{\eta}{2}\mathbb{E}[\|\nabla f(\theta_t^{(s)})\|^2] - \left[\frac{\eta}{2}(1 - L\eta) - c_{t+1}(1 + \beta^{-1})\eta^2\right]\mathbb{E}[\|\bar{g}(\theta_t^{(s)})\|^2] \\
&+ \left[\frac{L\eta^2}{2} + c_{t+1}(1 + \beta^{-1})\eta^2\right]\mathbb{E}[\|v_t - \bar{g}(\theta_t^{(s)})\|^2] + \frac{\eta L_g^2}{8} \nu^2 \\
&+ \left[c_{t+1}(1 + \beta) - c_t\right]\mathbb{E}[\|\theta_t^{(s)} - \theta_0^{(s)}\|^2].
\end{align*}
Letting $W_t = \frac{L\eta^2}{2} + c_{t+1}(1 + \beta^{-1})\eta^2$, we substitute the non-recursive SVRG tracking error bound, $\mathbb{E}[\|v_t - \bar{g}(\theta_t^{(s)})\|^2] \le \mathbb{E}[\|v_0^{(s)} - \bar{g}(\theta_0^{(s)})\|^2] + L_g^2 \mathbb{E}[\|\theta_t^{(s)} - \theta_0^{(s)}\|^2]$, into the inequality to group the terms:
\begin{align*}
R_{t+1}^{(s)} - R_t^{(s)} \le &-\frac{\eta}{2}\mathbb{E}[\|\nabla f(\theta_t^{(s)})\|^2] - \left[\frac{\eta}{2}(1 - L\eta) - c_{t+1}(1 + \beta^{-1})\eta^2\right]\mathbb{E}[\|\bar{g}(\theta_t^{(s)})\|^2] \\
&+ W_t \mathbb{E}[\|v_0^{(s)} - \bar{g}(\theta_0^{(s)})\|^2] + \frac{\eta L_g^2}{8} \nu^2 + \left[c_{t+1}(1 + \beta) + W_t L_g^2 - c_t\right]\mathbb{E}[\|\theta_t^{(s)} - \theta_0^{(s)}\|^2].
\end{align*}
To entirely eliminate the path dependency, we force the coefficient of $\|\theta_t^{(s)} - \theta_0^{(s)}\|^2$ to zero by defining $c_t = c_{t+1}(1 + \beta) + W_t L_g^2$, which strictly yields the final bound.
\end{proof}

\subsection{Bounding Constants and Non-Negativity}

\begin{lem}[Non-Negativity Condition]
\label{lem:non_negativity}
By setting the decay rate $\beta = m^{-1/3}$ and the learning rate $\eta = \frac{\mu}{L_g m^{2/3}}$ for a sufficiently small universal constant $\mu > 0$, the coefficient of $\|\bar{g}(\theta_t^{(s)})\|^2$ in Lemma~\ref{lem:lyapunov_expansion} is strictly non-negative.
\end{lem}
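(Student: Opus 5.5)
The plan is to turn the recursion for $c_t$ into a closed-form geometric bound and then compare the worst coefficient with $\frac{\eta}{2}(1-L\eta)$. The coefficient in Lemma~\ref{lem:lyapunov_expansion} is nonnegative exactly when
\[
c_{t+1}(1+\beta^{-1})\eta^2 \le \tfrac{\eta}{2}(1-L\eta), \qquad\text{equivalently}\qquad c_{t+1}(1+\beta^{-1})\eta \le \tfrac12(1-L\eta).
\]
Since $c_m=0$ and the recursion adds nonnegative terms moving backward in $t$, the sequence $c_t$ is nonincreasing in $t$. It therefore suffices to bound $c_1$, or crudely $c_0$.

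\textbf{Step 1 (unrolling).} Substituting $W_t$ into the recursion gives
\[
c_t = \rho\, c_{t+1} + \tfrac12 L L_g^2\eta^2, \qquad \rho := 1+\beta+(1+\beta^{-1})\eta^2 L_g^2.
\]
Unrolling from $c_m=0$ yields
\[
c_0 = \tfrac12 L L_g^2 \eta^2\,\frac{\rho^m-1}{\rho-1}.
\]

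\textbf{Step 2 (plugging in the parameters).} Take $\beta=m^{-1/3}$ and $\eta=\mu/(L_g m^{2/3})$. Then $\eta^2L_g^2=\mu^2 m^{-4/3}$ and $\beta^{-1}\eta^2L_g^2=\mu^2 m^{-1}$, so
\[
\rho-1 = m^{-1/3}+\mu^2m^{-4/3}+\mu^2m^{-1} \ge m^{-1/3}.
\]
From this, $\frac{1}{\rho-1}\le m^{1/3}$, and one gets
\[
c_{t+1}(1+\beta^{-1})\eta \le \tfrac12 L L_g^2\eta^3\, m^{1/3}(1+m^{1/3})\,\rho^m \le L L_g^{-1}\mu^3\, m^{-4/3}\,\rho^m .
\]
I will also use $L\le L_g$. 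This follows by comparing the constants of Lemma~\ref{lem_smooth} ($p\|H\|$) with those of Lemma~\ref{lem_lip_estimator} ($p^2\|H\|$), so $L L_g^{-1}\le 1$. In addition, $L\eta\le \mu m^{-2/3}\le\frac12$ for $\mu\le \frac12$, which makes the right-hand side $\frac12(1-L\eta)\ge\frac14$. The goal is then
\[
\mu^3 m^{-4/3}\rho^m\le \tfrac14 .
\]

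\textbf{Step 3 (main obstacle: the factor $\rho^m$).} The naive bound is $\rho^m\le \exp\!\big(m^{2/3}+2\mu^2\big)$. Against the polynomial prefactor $m^{-4/3}$, this is not uniformly bounded in $m$, so the choice $\beta=m^{-1/3}$ is exactly where the difficulty lies.

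I will first try not to bound $c_{t+1}$ by $c_0$. Instead I will use $c_{t+1}=\tfrac12 LL_g^2\eta^2\,\frac{\rho^{m-t-1}-1}{\rho-1}$ and exploit the small multiplier $\eta^3\sim\mu^3 m^{-2}$. The point is that only the combination $\mu^3 m^{-4/3}\rho^{m-t-1}$ must stay below a constant.

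If that fails uniformly in $t$, I will make the dependence explicit. The condition then reads $\mu^3\le \tfrac14 m^{4/3}e^{-m^{2/3}-2\mu^2}$. This shows that the "universal" constant $\mu$ can be chosen independent of $m$ only when $m^{2/3}$ is absorbed, i.e.\ in the regime where $m^{2/3}\lesssim \log(m^{4/3}/\mu^3)$. Outside that regime, $\mu$ must shrink with $m$, and I would state this explicitly.

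This is the step I expect to be delicate. Everything else (the monotonicity of $c_t$, the geometric sum, $L\le L_g$, and $L\eta\le\frac12$) is routine.
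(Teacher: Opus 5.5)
Your setup is correct: the nonnegativity condition $c_{t+1}(1+\beta^{-1})\eta\le\tfrac12(1-L\eta)$, the recursion $c_t=\rho c_{t+1}+\tfrac12 LL_g^2\eta^2$ with $\rho=1+\beta+(1+\beta^{-1})\eta^2L_g^2$, its closed form, monotonicity in $t$, and $\rho-1\ge m^{-1/3}$. But the proposal does not prove the lemma, and Step~3 cannot be completed.

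Your first idea gains nothing. The condition must hold at every $t$ used in the telescoping, in particular $t=1$. There $c_2$ carries $\rho^{m-2}=\rho^m/\rho^2$ with $\rho^2=\mathcal{O}(1)$. Your fallback lets $\mu$ be exponentially small in $m^{2/3}$. That is a different statement, and it removes the $\eta\propto m^{-2/3}$ step size on which Theorem~\ref{thm:nonrecursive_crvg} and the complexity count depend. Note also that your Step~3 only derives a \emph{sufficient} condition, so by itself it does not show that $\mu$ must shrink. Separately, $L\le L_g$ does not follow from comparing the upper bounds $p\|H\|\le p^2\|H\|$; work with $L:=p\|H\|$ and $L_g:=p^2\|H\|$ instead.

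The obstacle you found is real, and the paper's proof does not get past it either. It asserts ``by mathematical induction'' that $c_t\le\tau L_g m^{1/3}$ for a universal $\tau$, and then reduces the condition to $\mu/2\ge 2\tau\mu^2$. That induction cannot close: from $c_{t+1}\le\tau L_g m^{1/3}$ the recursion gives only $c_t\le\rho\,\tau L_g m^{1/3}+\tfrac12LL_g^2\eta^2$, which exceeds the hypothesis.

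In fact the lemma is false for large $m$, and your closed form shows this once you run it as a lower bound. Using $\rho-1\le(1+2\mu^2)m^{-1/3}$, $\rho\ge 1+m^{-1/3}$ and $1+\beta^{-1}\ge m^{1/3}$,
\[
c_2(1+\beta^{-1})\eta\;\ge\;\frac{L\mu^3}{2(1+2\mu^2)L_g}\,m^{-4/3}\Big[(1+m^{-1/3})^{m-2}-1\Big].
\]
The right side is unbounded in $m$: it is at least of order $m^{-4/3}e^{m^{2/3}-\mathcal{O}(m^{1/3})}$. Nonnegativity, however, needs it to be at most $\tfrac12(1-L\eta)\le\tfrac12$. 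So for every fixed $\mu>0$ (and $L>0$), the coefficient at $t=1$ is negative once $m$ is large, for example at the $m=\mathcal{O}(p/\epsilon^2)$ used in the complexity section. The claimed bound $c_t\le\tau L_g m^{1/3}$ fails as well.

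The root cause is the pathwise Young step $(1+\beta)\|\theta_t-\theta_0\|^2+(1+\beta^{-1})\eta^2\|v_t\|^2$. By AM--GM, $\rho-1\ge\beta+\beta^{-1}\eta^2L_g^2\ge 2\eta L_g$ for every $\beta>0$, so keeping $\rho^m$ bounded forces $\eta\lesssim 1/(L_g m)$. Keeping $\eta\propto m^{-2/3}$ would require the Reddi et al.\ argument instead:
\begin{itemize}
\item Expand $\|\theta_t-\theta_0-\eta v_t\|^2$ exactly.
\item Apply Young only to the cross term involving $\mathbb{E}[v_t\mid\mathcal{F}_t]$.
\item This gives growth factor $1+\eta\beta'+\mathcal{O}(\eta^2L_g^2)$ with $\beta'\sim L_g m^{-1/3}$, so that $m(\rho-1)=\mathcal{O}(1)$.
\end{itemize}
Here $\mathbb{E}[v_t\mid\mathcal{F}_t]=\bar g(\theta_t)+v_0^{(s)}-\bar g(\theta_0^{(s)})$, so the anchor error also enters that cross term and has to be carried through.
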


\begin{proof}
Following standard non-convex variance reduction techniques, we set $\beta = m^{-1/3}$. By mathematical induction, the recursive sequence defined in Lemma~\ref{lem:lyapunov_expansion} is bounded by $c_t \le \tau L_g m^{1/3}$ for a universal constant $\tau > 0$. 
    
To safely drop the $\|\bar{g}(\theta_t^{(s)})\|^2$ term from the Lyapunov difference, its coefficient must be non-negative:
\[
\frac{\eta}{2}(1 - L\eta) \ge c_{t+1}(1 + m^{1/3})\eta^2.
\]
Given $c_{t+1} \le \tau L_g m^{1/3}$, the right-hand side is bounded by $2\tau L_g m^{2/3} \eta^2$. Substituting $\eta = \frac{\mu}{L_g m^{2/3}}$, this condition resolves to $\frac{\mu}{2} \ge 2\tau\mu^2$, which strictly holds for $\mu \le \frac{1}{4\tau}$.
\end{proof}

\subsection{Proof of Theorem~\ref{thm:nonrecursive_crvg}}

\begin{proof}
By applying Lemma~\ref{lem:non_negativity} to the result of Lemma~\ref{lem:lyapunov_expansion}, the non-positive $\|\bar{g}(\theta_t^{(s)})\|^2$ term is safely dropped:
\[
R_{t+1}^{(s)} - R_t^{(s)} \le -\frac{\eta}{2}\mathbb{E}[\|\nabla f(\theta_t^{(s)})\|^2] + W_t \mathbb{E}[\|v_0^{(s)} - \bar{g}(\theta_0^{(s)})\|^2] + \frac{\eta L_g^2}{8} \nu^2.
\]
Let $W_{max}$ be the upper bound of $W_t$. Telescoping this inequality over the $m$ updates collapses the Lyapunov terms. Recognizing $R_1^{(s)} - R_{m+1}^{(s)} \le \mathbb{E}[f(\theta_0^{(s)})] - \mathbb{E}[f(\theta_m^{(s)})]$, we sum across all $S$ epochs:
\[
\frac{\eta}{2}\sum_{s=0}^{S-1}\sum_{t=1}^m \mathbb{E}[\|\nabla f(\theta_t^{(s)})\|^2] \le [f(\theta_0) - f^*] + m W_{max} \sum_{s=0}^{S-1} \mathbb{E}[\|v_0^{(s)} - \bar{g}(\theta_0^{(s)})\|^2] + \frac{S m \eta L_g^2}{8} \nu^2.
\]
Dividing both sides by $\frac{\eta S m}{2}$ and substituting the definition $\bar{\sigma}_0^2 := \frac{1}{S}\sum_{s=0}^{S-1} \mathbb{E}[\|v_0^{(s)} - \bar{g}(\theta_0^{(s)})\|^2]$ yields the final result:
\[
\frac{1}{Sm}\sum_{s=0}^{S-1}\sum_{t=1}^{m}\mathbb{E}\left[\|\nabla f(\theta_t^{(s)})\|^2\right] \le \frac{2[f(\theta_0) - f^*]}{\eta m S} + \frac{2 W_{max}}{\eta} \bar{\sigma}_0^2 + \frac{L_g^2}{4}\nu^2.
\]
Because $W_{max} \le \frac{L\eta^2}{2} + 2\tau L_g m^{2/3} \eta^2$, dividing by $\eta$ scales the bounded constant $C_\sigma = \frac{W_{max}}{\eta} \le \frac{L\eta}{2} + 2\tau\mu$, completing the proof.
\end{proof}

\section{Technical Proof of Theorem~\ref{thm:main}}\label{sec:appendix_proofs}

Before proving Theorem~\ref{thm:main}, we introduce the following lemmas. Throughout this section, we utilize the landscape smoothness constant $L \le pG$ and the stochastic Lipschitz constant $L_g \le p^2 G$.

\subsection{Recursive Variance}

\begin{lem}[Recursive Variance Identity]
\label{lem:recursive_var}
Under the stochastic Lipschitz condition, the tracking error $\|v_t - \bar{g}(\theta_t)\|^2$ satisfies:
\begin{equation}
\label{eq:recursive_var}
\mathbb{E}[\|v_t - \bar{g}(\theta_t)\|^2] \leq \mathbb{E}[\|v_{t-1} - \bar{g}(\theta_{t-1})\|^2] + L_g^2 \|\theta_t - \theta_{t-1}\|^2.
\end{equation}
\end{lem}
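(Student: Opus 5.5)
Write $e_t := v_t - \bar{g}(\theta_t)$. Using the recursive rule \eqref{eq:recursive}, decompose
\begin{equation*}
e_t = e_{t-1} + \xi_t, \qquad \xi_t := \big[g(\theta_t,\Delta_t) - g(\theta_{t-1},\Delta_t)\big] - \big[\bar{g}(\theta_t) - \bar{g}(\theta_{t-1})\big].
\end{equation*}
Let $\mathcal{F}_{t}$ be the $\sigma$-algebra generated by $v_0^{(s)}$ and $\Delta_1,\dots,\Delta_{t-1}$. Both $\theta_t$ and $\theta_{t-1}$ are $\mathcal{F}_t$-measurable: by Algorithm~\ref{alg:recursive_crvg}, $\theta_t = \theta_{t-1} - \eta v_{t-1}$ is fixed before $\Delta_t$ is drawn. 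The same holds for $e_{t-1}$. The new direction $\Delta_t$ is independent of $\mathcal{F}_t$, so $\mathbb{E}[g(\theta_t,\Delta_t)\mid\mathcal{F}_t] = \bar{g}(\theta_t)$, and likewise at $\theta_{t-1}$. This gives the martingale-difference property $\mathbb{E}[\xi_t \mid \mathcal{F}_t] = 0$.

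Expanding the square, the cross term vanishes:
\begin{equation*}
\mathbb{E}\big[\langle e_{t-1}, \xi_t\rangle \mid \mathcal{F}_t\big] = \langle e_{t-1}, \mathbb{E}[\xi_t\mid\mathcal{F}_t]\rangle = 0.
\end{equation*}
Hence
\begin{equation*}
\mathbb{E}\|e_t\|^2 = \mathbb{E}\|e_{t-1}\|^2 + \mathbb{E}\|\xi_t\|^2 .
\end{equation*}
The remaining term is a centered random vector. Its conditional second moment is therefore at most the raw second moment:
\begin{equation*}
\mathbb{E}[\|\xi_t\|^2\mid\mathcal{F}_t] \le \mathbb{E}_{\Delta_t}\|g(\theta_t,\Delta_t)-g(\theta_{t-1},\Delta_t)\|^2 \le L_g^2\|\theta_t-\theta_{t-1}\|^2 .
\end{equation*}
The last inequality is Lemma~\ref{lem_lip_estimator}, applied at the frozen points $x=\theta_t$, $y=\theta_{t-1}$.

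Taking total expectation gives the claim. The right-hand side should read $L_g^2\,\mathbb{E}\|\theta_t-\theta_{t-1}\|^2$; the statement omits the outer expectation, but the bound holds pathwise conditionally, so this is harmless. Iterating then yields the familiar SARAH telescoping bound $\mathbb{E}\|e_t\|^2 \le \mathbb{E}\|e_0\|^2 + L_g^2\eta^2\sum_{j<t}\mathbb{E}\|v_j\|^2$, which will be used in the proof of Theorem~\ref{thm:main}.

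\textbf{The main obstacle is the measurability bookkeeping.} Lemma~\ref{lem_lip_estimator} assumes fixed $x,y$ independent of $\Delta$. The whole argument therefore hinges on the fact that the same fresh $\Delta_t$ is used at both points, and that neither point depends on $\Delta_t$. If a reused cache ever correlated $\Delta_t$ with $\theta_t$, both the vanishing cross term and the application of Lemma~\ref{lem_lip_estimator} would fail.

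The center-point caches are exact evaluations of $f$ and are deterministic given the parameters, so they introduce no such dependence. Shot noise is not modeled in $g$, so it does not enter this lemma.
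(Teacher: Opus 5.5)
Your proposal is correct and follows the paper's proof essentially step for step: the same decomposition $e_t = e_{t-1} + \xi_t$, the cross term vanishing because $\mathbb{E}[\xi_t\mid\mathcal{F}_t]=0$, the bound of the centered term's second moment by the raw one, and Lemma~\ref{lem_lip_estimator} applied at the frozen points $\theta_t,\theta_{t-1}$. Your version is slightly more careful than the paper's on two points, both of which you handle correctly: the paper asserts $\|\xi_t\|^2 \le \|g(\theta_t,\Delta_t)-g(\theta_{t-1},\Delta_t)\|^2$ pathwise, which holds only in conditional expectation, and it omits the outer expectation on $\|\theta_t-\theta_{t-1}\|^2$.
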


\begin{proof}
Define $e_t := v_t - \bar{g}(\theta_t)$. By the recursive update:
\begin{align*}
e_t &= [g(\theta_t, \Delta_t) - g(\theta_{t-1}, \Delta_t) + v_{t-1}] - \bar{g}(\theta_t) \\
&= [v_{t-1} - \bar{g}(\theta_{t-1})] + [g(\theta_t, \Delta_t) - g(\theta_{t-1}, \Delta_t) - (\bar{g}(\theta_t) - \bar{g}(\theta_{t-1}))] \\
&= e_{t-1} + \xi_t,
\end{align*}
where $\xi_t := g(\theta_t, \Delta_t) - g(\theta_{t-1}, \Delta_t) - (\bar{g}(\theta_t) - \bar{g}(\theta_{t-1}))$.

Since $\mathbb{E}_{\Delta_t}[\xi_t|\mathcal{F}_t] = 0$ and $e_{t-1}$ is $\mathcal{F}_t$-measurable:
\[
\mathbb{E}[\|e_t\|^2] = \mathbb{E}[\|e_{t-1}\|^2] + \mathbb{E}[\|\xi_t\|^2],
\]
where the cross-term vanishes. Now, $\|\xi_t\|^2 \leq \|g(\theta_t, \Delta_t) - g(\theta_{t-1}, \Delta_t)\|^2$ (since variance $\leq$ second moment). By the stochastic Lipschitz property of the estimator, we have
\[
\mathbb{E}[\|\xi_t\|^2] \leq \mathbb{E}[\|g(\theta_t, \Delta_t) - g(\theta_{t-1}, \Delta_t)\|^2] \leq L_g^2 \|\theta_t - \theta_{t-1}\|^2.
\]
This completes the proof. 
\end{proof}

\subsection{Descent Lemma}

\begin{lem}[Descent Lemma for Recursive CRVG]
\label{lem:descent}
For the parameter update $\theta_{t+1} = \theta_t - \eta v_t$, under the $L$-smoothness guarantee, it holds that:
\begin{equation}
\label{eq:descent_lemma}
f(\theta_{t+1}) \leq f(\theta_t) - \frac{\eta}{2}\|\nabla f(\theta_t)\|^2 - \frac{\eta}{2}(1 - L\eta)\|v_t\|^2 + \frac{\eta}{2}\|v_t - \nabla f(\theta_t)\|^2.
\end{equation}
\end{lem}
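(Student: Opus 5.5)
The plan is to start from the standard $L$-smoothness quadratic upper bound given by Lemma~\ref{lem_smooth}. For the update $\theta_{t+1} = \theta_t - \eta v_t$ it reads
\[
f(\theta_{t+1}) \le f(\theta_t) + \langle \nabla f(\theta_t), \theta_{t+1}-\theta_t\rangle + \frac{L}{2}\|\theta_{t+1}-\theta_t\|^2 = f(\theta_t) - \eta\langle \nabla f(\theta_t), v_t\rangle + \frac{L\eta^2}{2}\|v_t\|^2 .
\]
The whole argument is deterministic and pathwise, so no expectation is taken. Unlike Lemma~\ref{lem:refined_descent}, there is no need to use $\mathbb{E}[v_t]=\bar g(\theta_t)$, which in fact fails for the recursive estimator. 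This is why the statement is phrased in terms of $\|v_t - \nabla f(\theta_t)\|^2$ rather than a variance term.

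The key step is the polarization identity
\[
-\langle a, b\rangle = \tfrac12\|a-b\|^2 - \tfrac12\|a\|^2 - \tfrac12\|b\|^2,
\]
applied with $a=\nabla f(\theta_t)$ and $b=v_t$. Multiplying by $\eta$ gives
\[
-\eta\langle \nabla f(\theta_t), v_t\rangle = \frac{\eta}{2}\|v_t-\nabla f(\theta_t)\|^2 - \frac{\eta}{2}\|\nabla f(\theta_t)\|^2 - \frac{\eta}{2}\|v_t\|^2 .
\]
Substituting this into the smoothness bound and grouping the two $\|v_t\|^2$ terms yields
\[
-\frac{\eta}{2}\|v_t\|^2 + \frac{L\eta^2}{2}\|v_t\|^2 = -\frac{\eta}{2}(1-L\eta)\|v_t\|^2 ,
\]
which is exactly \eqref{eq:descent_lemma}.

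There is no real obstacle here. The only point needing care is to use the true gradient $\nabla f(\theta_t)$ in the inner product, not $\bar g$, so that the error term appears as $\|v_t-\nabla f(\theta_t)\|^2$. That term is later split via $\|v_t-\nabla f\|^2 \le 2\|v_t-\bar g\|^2 + 2\|\bar g-\nabla f\|^2$, with Lemma~\ref{lem:recursive_var} and Theorem~\ref{thm_bias} controlling the two pieces. That split belongs to the subsequent proof of Theorem~\ref{thm:main}, not to this lemma.
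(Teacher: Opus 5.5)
Your proof is correct and follows the paper's argument exactly: apply the $L$-smoothness quadratic upper bound to $\theta_{t+1} = \theta_t - \eta v_t$, use the identity $-\langle a,b\rangle = \frac{1}{2}\|a-b\|^2 - \frac{1}{2}\|a\|^2 - \frac{1}{2}\|b\|^2$ with $a = \nabla f(\theta_t)$ and $b = v_t$, then combine the two $\|v_t\|^2$ terms. Your remark that the bound holds pathwise and does not require $v_t$ to be unbiased is consistent with how the paper states this lemma, without expectations, and only takes expectations later in the proof of Theorem~\ref{thm:main}.
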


\begin{proof}
By the smoothness descent inequality with $y = \theta_{t+1} = \theta_t - \eta v_t$:
\[
f(\theta_{t+1}) \leq f(\theta_t) - \eta\langle\nabla f(\theta_t), v_t\rangle + \frac{L\eta^2}{2}\|v_t\|^2.
\]
Using the algebraic identity $-\langle a, b\rangle = \frac{1}{2}\|a - b\|^2 - \frac{1}{2}\|a\|^2 - \frac{1}{2}\|b\|^2$ with $a = \nabla f(\theta_t)$ and $b = v_t$:
\begin{align*}
f(\theta_{t+1}) &\leq f(\theta_t) - \frac{\eta}{2}\|\nabla f(\theta_t)\|^2 - \frac{\eta}{2}\|v_t\|^2 + \frac{\eta}{2}\|v_t - \nabla f(\theta_t)\|^2 + \frac{L\eta^2}{2}\|v_t\|^2 \\
&= f(\theta_t) - \frac{\eta}{2}\|\nabla f(\theta_t)\|^2 - \frac{\eta}{2}(1 - L\eta)\|v_t\|^2 + \frac{\eta}{2}\|v_t - \nabla f(\theta_t)\|^2. 
\end{align*}
This completes the proof. 
\end{proof}

\subsection{Bounding the Mean Squared Error}

\begin{lem}[Variance and Bias Decomposition]
\label{lem:mse}
The mean squared error of the CRVG estimator satisfies:
\begin{equation}
\label{eq:mse_bound}
\mathbb{E}[\|v_t - \nabla f(\theta_t)\|^2] \leq 2\mathbb{E}[\|v_t - \bar{g}(\theta_t)\|^2] + \frac{L_g^2 \nu^2}{2}.
\end{equation}
\end{lem}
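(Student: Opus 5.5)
The plan is to split the error $v_t - \nabla f(\theta_t)$ by inserting the expected estimator $\bar{g}(\theta_t)$:
\[
v_t - \nabla f(\theta_t) = \bigl(v_t - \bar{g}(\theta_t)\bigr) + \bigl(\bar{g}(\theta_t) - \nabla f(\theta_t)\bigr).
\]
The first bracket is the stochastic tracking error, which is already controlled by Lemma~\ref{lem:recursive_var}. The second bracket is the deterministic finite-difference bias of the 1-sided estimator.

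I apply the elementary inequality $\|a+b\|^2 \le 2\|a\|^2 + 2\|b\|^2$ pointwise and then take expectations. This gives
\[
\mathbb{E}\|v_t - \nabla f(\theta_t)\|^2 \le 2\,\mathbb{E}\|v_t - \bar{g}(\theta_t)\|^2 + 2\,\mathbb{E}\|\bar{g}(\theta_t) - \nabla f(\theta_t)\|^2,
\]
so the first term on the right is already in the form required by the statement. It remains to show that the bias term satisfies $2\|\bar{g}(\theta) - \nabla f(\theta)\|^2 \le L_g^2\nu^2/2$, that is, $\|\bar{g}(\theta) - \nabla f(\theta)\| \le L_g\nu/2$ uniformly in $\theta$.

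For the bias I invoke Theorem~\ref{thm_bias}, which gives $\|\bar{g}(\theta) - \nabla f(\theta)\| \le \nu p^2\|H\|/2$. Lemma~\ref{lem_lip_estimator} identifies the stochastic Lipschitz constant through the bound $L_g \le p^2\|H\|$. I therefore adopt the same convention already used in the proof of Lemma~\ref{lem:refined_descent}: write the bias bound as $\frac{L_g\nu}{2}$, with $L_g$ taken to be the value $p^2\|H\|$, which is at least the true Lipschitz constant. Squaring then gives $\frac{L_g^2\nu^2}{4}$, and multiplying by $2$ gives $\frac{L_g^2\nu^2}{2}$. Since the bound holds for every $\theta$, it also survives the outer expectation over the random iterate $\theta_t$.

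This identification of constants is the only delicate step. Theorem~\ref{thm_bias} is stated with the explicit quantity $p^2\|H\|$, whereas $L_g$ is only bounded above by it. The inequality is therefore rigorous exactly when $L_g$ denotes the upper bound $p^2\|H\|$, which is how it is used throughout this section. If one wanted to avoid that convention, I would redo the Taylor argument of Theorem~\ref{thm_bias} with the Hessian norm bound $L$ in place of $p\|H\|$ and note that $Lp \le L_g$ under the stated constants.

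The rest of the argument is routine. No cross term needs to vanish, because the factor-$2$ splitting is used rather than an orthogonal decomposition. That is why the constant $2$ appears in front of the tracking error.
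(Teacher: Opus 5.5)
Your proof is correct and is essentially identical to the paper's. It uses the same insertion of $\bar{g}(\theta_t)$, the same inequality $\|a+b\|^2 \le 2\|a\|^2 + 2\|b\|^2$, and the same reading of Theorem~\ref{thm_bias} as $\|\bar{g}(\theta) - \nabla f(\theta)\| \le \nu L_g/2$ with $L_g$ taken to be $p^2\|H\|$; the paper relies on this convention silently, and you make it explicit. The one caveat concerns your optional fallback: $Lp \le L_g$ does not follow from the upper bounds $L \le p\|H\|$ and $L_g \le p^2\|H\|$ alone, and holds only if you again set both constants equal to those bounds, but your main argument does not need it.
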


\begin{proof}
Using $v_t - \nabla f(\theta_t) = (v_t - \bar{g}(\theta_t)) + (\bar{g}(\theta_t) - \nabla f(\theta_t))$ and the inequality $\|a + b\|^2 \leq 2\|a\|^2 + 2\|b\|^2$:
\[
\mathbb{E}[\|v_t - \nabla f(\theta_t)\|^2] \leq 2\mathbb{E}[\|v_t - \bar{g}(\theta_t)\|^2] + 2\mathbb{E}[\|\bar{g}(\theta_t) - \nabla f(\theta_t)\|^2].
\]
The analytical bias is bounded by $\|\bar{g}(\theta_t) - \nabla f(\theta_t)\| \leq \frac{\nu L_g}{2}$. Substituting this into the second term bounds it by $2 \left( \frac{\nu L_g}{2} \right)^2 = \frac{L_g^2 \nu^2}{2}$.
\end{proof}

\subsection{Unrolling the Recursive Variance Bound}

\begin{lem}[Cumulative Variance Control]
\label{lem:cumulative}
The tracking error summed over one epoch satisfies:
\begin{equation}
\label{eq:cumulative}
\sum_{t=1}^{m} \mathbb{E}[\|v_t - \bar{g}(\theta_t)\|^2] \leq m\mathbb{E}[\|v_0 - \bar{g}(\theta_0)\|^2] + L_g^2 \eta^2 m \sum_{t=1}^{m}\mathbb{E}[\|v_{t-1}\|^2].
\end{equation}
\end{lem}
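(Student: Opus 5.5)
The plan is to unroll Lemma~\ref{lem:recursive_var} inside a single epoch and then replace each parameter increment by the corresponding update direction. Fix an epoch $s$ and suppress the superscript. Iterating the one-step recursion in Lemma~\ref{lem:recursive_var} from step $t$ back to the anchor gives, for each $t \ge 1$,
\begin{equation*}
\mathbb{E}[\|v_t - \bar{g}(\theta_t)\|^2] \le \mathbb{E}[\|v_0 - \bar{g}(\theta_0)\|^2] + L_g^2 \sum_{j=1}^{t} \mathbb{E}[\|\theta_j - \theta_{j-1}\|^2].
\end{equation*}
The induction is immediate because the martingale structure already kills the cross terms at each step. The base case $t=0$ holds with equality.

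Next I substitute the update rule. Within the epoch, each parameter step is $\theta_j - \theta_{j-1} = -\eta v_{j-1}$. This holds for $j=1$ through $\theta_1 = \theta_0 - \eta v_0$, and for $j \ge 2$ through the inner-loop update. Hence $\|\theta_j - \theta_{j-1}\|^2 = \eta^2 \|v_{j-1}\|^2$, and the bound becomes
\begin{equation*}
\mathbb{E}[\|v_t - \bar{g}(\theta_t)\|^2] \le \mathbb{E}[\|v_0 - \bar{g}(\theta_0)\|^2] + L_g^2\eta^2 \sum_{j=1}^{t} \mathbb{E}[\|v_{j-1}\|^2].
\end{equation*}
Since every summand is nonnegative and $t \le m$, I enlarge the inner sum to $\sum_{j=1}^{m}\mathbb{E}[\|v_{j-1}\|^2]$, which no longer depends on $t$. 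Summing over $t=1,\dots,m$ then multiplies both terms by $m$ and yields \eqref{eq:cumulative}.

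The main obstacle is the unrolling step. Lemma~\ref{lem:recursive_var} is stated with a deterministic $\|\theta_t - \theta_{t-1}\|^2$, but $\theta_t$ is random, so I need its expectation version. I will rederive it with conditional expectations given $\mathcal{F}_t$. Since $\theta_t$ and $\theta_{t-1}$ are $\mathcal{F}_t$-measurable and $\Delta_t$ is independent of $\mathcal{F}_t$, the stochastic Lipschitz bound of Lemma~\ref{lem_lip_estimator} applies conditionally, and the tower property then gives the bound with $\mathbb{E}\|\theta_t-\theta_{t-1}\|^2$.

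A secondary point is the indexing at $t=m$. Algorithm~\ref{alg:recursive_crvg} only forms $v_t$ for $t \le m-1$, but the sum in \eqref{eq:cumulative} runs to $m$. I will interpret $v_m$ through the same recursion, with one additional direction $\Delta_m$ used purely as an analytical device. This is consistent with the summation range $t=1,\dots,m$ in Theorem~\ref{thm:main}. The crude factor $m$ from the enlargement is accepted here and is later absorbed by the step-size condition $\eta \le 1/(2L_g\sqrt{m})$.
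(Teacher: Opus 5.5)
Your proposal is correct and follows essentially the same route as the paper: unroll Lemma~\ref{lem:recursive_var} back to the anchor, substitute $\|\theta_j-\theta_{j-1}\|^2=\eta^2\|v_{j-1}\|^2$, and sum over $t$. The only cosmetic difference is that the paper reindexes the double sum as $\sum_{j=0}^{m-1}(m-j)\,\mathbb{E}\|v_j\|^2 \le m\sum_{j=0}^{m-1}\mathbb{E}\|v_j\|^2$, whereas you enlarge each inner sum to the full range before summing. Your other additions are legitimate refinements that the paper leaves implicit: you rederive the one-step bound with $\mathbb{E}\|\theta_t-\theta_{t-1}\|^2$ by conditioning on $\mathcal{F}_t$, which matters because the paper's step $\|\xi_t\|^2 \le \|g(\theta_t,\Delta_t)-g(\theta_{t-1},\Delta_t)\|^2$ holds only after taking conditional expectation. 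You also define $v_m$ through one extra recursion step.
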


\begin{proof}
By Lemma~\ref{lem:recursive_var}, unrolling the recursion from step $t$ back to step 0:
\[
\mathbb{E}[\|v_t - \bar{g}(\theta_t)\|^2] \leq \mathbb{E}[\|v_0 - \bar{g}(\theta_0)\|^2] + L_g^2 \eta^2\sum_{j=1}^{t}\mathbb{E}[\|v_{j-1}\|^2],
\]
where we used $\|\theta_j - \theta_{j-1}\|^2 = \eta^2\|v_{j-1}\|^2$. Summing over $t = 1, \ldots, m$:
\[
\sum_{t=1}^{m}\mathbb{E}[\|v_t - \bar{g}(\theta_t)\|^2] \leq m\mathbb{E}[\|v_0 - \bar{g}(\theta_0)\|^2] + L_g^2 \eta^2\sum_{t=1}^{m}\sum_{j=1}^{t}\mathbb{E}[\|v_{j-1}\|^2].
\]
Using $\sum_{t=1}^{m}\sum_{j=1}^{t} a_{j-1} = \sum_{j=0}^{m-1}(m-j)a_j \leq m\sum_{j=0}^{m-1}a_j$ yields the result.
\end{proof}

\subsection{Proof of Theorem~\ref{thm:main}}

\textit{Proof}. 
\textbf{Step 1: Apply the Descent Lemma.}
Taking the expectation of Lemma~\ref{lem:descent}:
\begin{equation}
\label{eq:step1}
\mathbb{E}[f(\theta_{t+1})] \leq \mathbb{E}[f(\theta_t)] - \frac{\eta}{2}\mathbb{E}[\|\nabla f(\theta_t)\|^2] - \frac{\eta}{2}(1-L\eta)\mathbb{E}[\|v_t\|^2] + \frac{\eta}{2}\mathbb{E}[\|v_t - \nabla f(\theta_t)\|^2].
\end{equation}

\textbf{Step 2: Substitute the MSE bound.}
By Lemma~\ref{lem:mse}, substituting the decomposed tracking error and bias:
\begin{equation}
\label{eq:step2}
\mathbb{E}[f(\theta_{t+1})] \leq \mathbb{E}[f(\theta_t)] - \frac{\eta}{2}\mathbb{E}[\|\nabla f(\theta_t)\|^2] - \frac{\eta}{2}(1-L\eta)\mathbb{E}[\|v_t\|^2] + \eta\mathbb{E}[\|v_t - \bar{g}(\theta_t)\|^2] + \frac{\eta}{4} L_g^2 \nu^2.
\end{equation}

\textbf{Step 3: Telescope over one epoch.}
Summing~\eqref{eq:step2} from $t = 1$ to $t = m$ within epoch $s$:
\begin{align}
\frac{\eta}{2}\sum_{t=1}^{m}\mathbb{E}\left[\left\|\nabla f(\theta_t^{(s)})\right\|^2\right] &\leq \mathbb{E}[f(\theta_1^{(s)})] - \mathbb{E}[f(\theta_{m+1}^{(s)})] - \frac{\eta}{2}(1-L\eta)\sum_{t=1}^{m}\mathbb{E}[\|v_t\|^2] \nonumber\\
&\quad + \eta\sum_{t=1}^{m}\mathbb{E}[\|v_t - \bar{g}(\theta_t)\|^2] + \frac{m\eta}{4} L_g^2 \nu^2. \label{eq:step3}
\end{align}

\textbf{Step 4: Substitute the cumulative variance bound.}
By Lemma~\ref{lem:cumulative}:
\[
\eta\sum_{t=1}^{m}\mathbb{E}[\|v_t - \bar{g}(\theta_t)\|^2] \leq \eta m\mathbb{E}\left[\left\|v_0^{(s)} - \bar{g}(\theta_0^{(s)})\right\|^2\right] + L_g^2 \eta^3 m\sum_{t=1}^{m}\mathbb{E}[\|v_{t-1}\|^2].
\]

Substituting this back into~\eqref{eq:step3}:
\begin{align}
\frac{\eta}{2}\sum_{t=1}^{m}\mathbb{E}\left[\left\|\nabla f(\theta_t^{(s)})\right\|^2\right] &\leq [f(\theta_0^{(s)}) - f^*] + \eta m\mathbb{E}\left[\left\|v_0^{(s)} - \bar{g}(\theta_0^{(s)})\right\|^2\right] + \frac{m\eta}{4} L_g^2 \nu^2 \nonumber\\
&\quad - \left[\frac{\eta}{2}(1-L\eta) - L_g^2 \eta^3 m\right]\sum_{t=1}^{m}\mathbb{E}[\|v_t\|^2]. \label{eq:step4}
\end{align}

\textbf{Step 5: Choose step size to ensure non-negativity.}
Under the conditions $\eta \leq \frac{1}{2L}$ (ensuring $1 - L\eta \geq \frac{1}{2}$) and $\eta \leq \frac{1}{2 L_g \sqrt{m}}$, the composite coefficient satisfies $\left[\frac{\eta}{2}(1-L\eta) - L_g^2 \eta^3 m\right] \ge \frac{\eta}{4} - \frac{\eta}{4} = 0$. This allows us to drop the non-positive $\|v_t\|^2$ sum:
\begin{equation}
\label{eq:step5}
\frac{\eta}{2}\sum_{t=1}^{m}\mathbb{E}\left[\left\|\nabla f(\theta_t^{(s)})\right\|^2\right] \leq [f(\theta_0^{(s)}) - f^*] + \eta m\mathbb{E}\left[\left\|v_0^{(s)} - \bar{g}(\theta_0^{(s)})\right\|^2\right] + \frac{m\eta}{4} L_g^2 \nu^2.
\end{equation}

\textbf{Step 6: Sum over epochs and divide.}
Summing~\eqref{eq:step5} over $s = 0, 1, \ldots, S-1$ and using the telescoping property $\sum_{s=0}^{S-1}[f(\theta_0^{(s)}) - \mathbb{E} f(\theta_{m+1}^{(s)})] \leq f(\theta_0) - f^*$:
\[
\frac{\eta}{2}\sum_{s=0}^{S-1}\sum_{t=1}^{m}\mathbb{E}\left[\left\|\nabla f(\theta_t^{(s)})\right\|^2\right] \leq [f(\theta_0) - f^*] + \eta m S\bar{\sigma}_0^2 + \frac{S m\eta}{4} L_g^2 \nu^2.
\]
Dividing both sides by $\frac{\eta S m}{2}$ yields the following upper bound:
\[
\frac{1}{Sm}\sum_{s=0}^{S-1}\sum_{t=1}^{m}\mathbb{E}\left[\left\|\nabla f(\theta_t^{(s)})\right\|^2\right] \leq \frac{2[f(\theta_0) - f^*]}{\eta m S} + 2\bar{\sigma}_0^2 + \frac{L_g^2}{2} \nu^2.
\]

This completes the proof. \hfill \qedsymbol

\newpage 
\section{Additional Experimental Details}\label{sec:appendix-additional-exp}

This appendix provides granular tabular data supporting the empirical observations discussed in Section~\ref{sec_experiment}. It includes exact numerical endpoints for selected trajectories, exhaustive resource exposure summaries, and broader contextual comparisons against external exact-statevector baselines.

\subsection{Trajectory Numerical Endpoints}
Table~\ref{tab:small-maxcut} details the final logged parameters, sampled energies, and variance measurements for the selected 10- and 15-node MaxCut trajectories featured in Figures~\ref{fig:efficient-su2-diagnostics} and~\ref{fig:qaoa-diagnostics}. Note that unequal logged endpoints preclude a strict matched-budget interpretation.

\begin{table*}[h]
  \centering
  \caption{Single-seed MaxCut trajectory summaries on 30\%-density Erd\H{o}s--R\'enyi graphs. ``Variance'' is the variance of sampled output energies, not gradient-estimator variance. Metrics use the final stored parameters except best energy, which is the trajectory minimum. Unequal logged endpoints preclude a strict matched-budget interpretation.}
  \label{tab:small-maxcut}
  \footnotesize
  \setlength{\tabcolsep}{3.5pt}
  \begin{tabular}{@{}llrrrrr@{}}
    \toprule
    Ansatz / nodes & Method & Best energy & Final energy & Variance & Entropy & Logged evals \\
    \midrule
    EfficientSU2 / 10 & SPSA & $-5.09$ & $-5.03$ & $5.19$ & $4.34$ & 4,982 \\
     & non-recursive CRVG & $-7.93$ & $-7.93$ & $0.26$ & $0.24$ & 5,776 \\
     & recursive CRVG & $-1.97$ & $-0.60$ & $16.68$ & $7.06$ & 4,638 \\
    \addlinespace
    EfficientSU2 / 15 & SPSA & $-7.09$ & $-6.94$ & $21.62$ & $6.34$ & 4,982 \\
     & non-recursive CRVG & $-9.34$ & $-9.34$ & $7.39$ & $4.28$ & 5,951 \\
     & recursive CRVG & $-0.93$ & $0.23$ & $17.18$ & $7.93$ & 4,826 \\
    \addlinespace
    QAOA / 10 & SPSA & $-2.55$ & $-2.11$ & $12.11$ & $7.17$ & 4,982 \\
     & non-recursive CRVG & $-5.87$ & $-5.73$ & $8.13$ & $6.40$ & 5,934 \\
     & recursive CRVG & $-5.73$ & $-5.55$ & $6.98$ & $6.58$ & 4,981 \\
    \addlinespace
    QAOA / 15 & SPSA & $-9.11$ & $-8.91$ & $15.46$ & $7.77$ & 2,762 \\
     & non-recursive CRVG & $-9.16$ & $-8.25$ & $20.94$ & $7.77$ & 5,934 \\
     & recursive CRVG & $-10.29$ & $-9.95$ & $15.25$ & $7.56$ & 4,805 \\
    \bottomrule
  \end{tabular}
\end{table*}

\subsection{Exhaustive Resource Summaries}
Tables~\ref{tab:qaoa-resource-summary} and \ref{tab:mis-qaoa-resource-summary} provide the exact median approximations, sampled-objective call counts, and relative time ratios for the MaxCut QAOA and MIS QAOA depth studies, respectively. Calls are reconstructed from the optimizer control flow, with each invoking one 256-shot Aer sampler job.

\begin{table*}[h]
  \centering
  \caption{Endpoint resource statistics for the paired local QAOA runs. ``Calls'' are calls to the sampled objective, reconstructed from the optimizer control flow; each invokes one 256-shot Aer sampler job. They are not the number of stored history points. Ratios are computed within each instance before taking the median; values below one favor the listed method. Wall time is specific to the archived Aer implementation and execution host.}
  \label{tab:qaoa-resource-summary}
  \footnotesize
  \setlength{\tabcolsep}{3.5pt}
  \begin{tabular}{@{}rllrrrrr@{}}
    \toprule
    Depth & Method & Best ratio & Calls & Time (s) & Calls/SPSA & Time/SPSA & Early stops \\
    \midrule
    1 & SPSA & $0.793$ & 3,247 & 19.3 & $1.00$ & $1.00$ & 36/36 \\
     & non-recursive CRVG & $0.793$ & 3,536 & 20.5 & $1.09$ & $1.06$ & 36/36 \\
     & recursive CRVG & $0.791$ & 3,000 & 17.5 & $0.90$ & $0.87$ & 36/36 \\
    \addlinespace
    3 & SPSA & $0.883$ & 4,500 & 43.8 & $1.00$ & $1.00$ & 36/36 \\
     & non-recursive CRVG & $0.881$ & 3,552 & 37.0 & $0.79$ & $0.80$ & 36/36 \\
     & recursive CRVG & $0.879$ & 2,992 & 31.4 & $0.66$ & $0.68$ & 36/36 \\
    \addlinespace
    5 & SPSA & $0.873$ & 4,500 & 73.6 & $1.00$ & $1.00$ & 36/36 \\
     & non-recursive CRVG & $0.850$ & 3,538 & 59.1 & $0.79$ & $0.80$ & 36/36 \\
     & recursive CRVG & $0.856$ & 2,958 & 49.1 & $0.66$ & $0.66$ & 36/36 \\
    \addlinespace
    7 & SPSA & $0.843$ & 4,500 & 88.3 & $1.00$ & $1.00$ & 36/36 \\
     & non-recursive CRVG & $0.863$ & 3,570 & 73.0 & $0.79$ & $0.79$ & 36/36 \\
     & recursive CRVG & $0.763$ & 2,952 & 60.5 & $0.66$ & $0.66$ & 36/36 \\
    \bottomrule
  \end{tabular}
\end{table*}

\begin{table*}[h]
  \centering
  \caption{MIS QAOA endpoint feasibility, feasible-set sizes, and resources. Entries are descriptive medians across instances; size entries show median (valid support) because size is undefined when no feasible shot was observed. Call and time ratios are paired within instance against SPSA before taking the median; values below one favor the listed method. Calls are sampled-objective calls with 256 shots each, and wall time is host-specific.}
  \label{tab:mis-qaoa-resource-summary}
  \scriptsize
  \setlength{\tabcolsep}{2pt}
  \begin{tabular}{@{}rlrrrrrr@{}}
    \toprule
    Depth & Method & Feasible (\%) & Avg. $|IS|$ (support) & Max. $|IS|$ (support) & Calls & Calls/SPSA & Time/SPSA \\
    \midrule
    1 & SPSA & 49.8 & 2.83 (35/36) & 5.00 (35/36) & 4,123 & $1.00$ & $1.00$ \\
     & non-recursive CRVG & 50.8 & 2.52 (34/36) & 5.00 (34/36) & 3,536 & $0.86$ & $0.86$ \\
     & recursive CRVG & 50.0 & 2.58 (33/36) & 5.00 (33/36) & 3,000 & $0.73$ & $0.73$ \\
    \addlinespace
    3 & SPSA & 68.0 & 2.26 (36/36) & 5.00 (36/36) & 4,500 & $1.00$ & $1.00$ \\
     & non-recursive CRVG & 73.2 & 3.07 (36/36) & 5.50 (36/36) & 3,552 & $0.79$ & $0.81$ \\
     & recursive CRVG & 69.9 & 2.95 (36/36) & 5.00 (36/36) & 2,992 & $0.66$ & $0.68$ \\
    \addlinespace
    5 & SPSA & 73.6 & 2.06 (36/36) & 4.50 (36/36) & 4,500 & $1.00$ & $1.00$ \\
     & non-recursive CRVG & 56.6 & 2.98 (35/36) & 5.00 (35/36) & 3,538 & $0.79$ & $0.80$ \\
     & recursive CRVG & 39.3 & 2.58 (35/36) & 4.00 (35/36) & 2,958 & $0.66$ & $0.67$ \\
    \addlinespace
    7 & SPSA & 68.6 & 2.29 (36/36) & 5.00 (36/36) & 4,500 & $1.00$ & $1.00$ \\
     & non-recursive CRVG & 52.3 & 2.79 (34/36) & 4.50 (34/36) & 3,570 & $0.79$ & $0.80$ \\
     & recursive CRVG & 11.5 & 2.40 (34/36) & 4.00 (34/36) & 2,952 & $0.66$ & $0.66$ \\
    \bottomrule
  \end{tabular}
\end{table*}

\subsection{External Exact-Statevector Baselines}
Tables~\ref{tab:qaoa-external-baselines} and~\ref{tab:qaoa-external-baselines-continued} contextualize the local stochastic methods by reporting every available external exact-statevector parameter-setting baseline for MaxCut QAOA. Because these methods feature unequal instance coverage and unmatched initialization, tuning, and stopping conditions, they are included strictly as descriptive, contextual baselines rather than a controlled ranking.

\begin{table*}[h]
  \centering
  \caption{External exact-statevector QAOA parameter-setting baselines on their available instance support. Baseline entries report median [interquartile range]. For every row, the local median and paired difference are recomputed using only the instances available for that baseline; positive differences favor the best local finite-shot optimizer. Different rows have different support and do not constitute a common controlled ranking.}
  \label{tab:qaoa-external-baselines}
  \footnotesize
  \setlength{\tabcolsep}{4pt}
  \begin{tabular}{@{}rlrrrr@{}}
    \toprule
    Depth & External method & Coverage & Baseline ratio [IQR] & Matched local & Local $-$ baseline \\
    \midrule
    1 & Fixed angle & 15/36 & $0.787$ [$0.773$, $0.817$] & $0.806$ & $+0.020$ \\
     & Fixed angle + opt. & 19/36 & $0.788$ [$0.775$, $0.831$] & $0.806$ & $+0.014$ \\
     & Recursive transfer + opt. & 2/36$^{\dagger}$ & $0.740$ [$0.723$, $0.757$] & $0.753$ & $+0.014$ \\
     & TQA & 27/36 & $0.773$ [$0.753$, $0.788$] & $0.798$ & $+0.025$ \\
     & TQA + opt. & 21/36 & $0.774$ [$0.745$, $0.784$] & $0.790$ & $+0.015$ \\
    \addlinespace
    3 & Fixed angle & 27/36 & $0.892$ [$0.873$, $0.899$] & $0.884$ & $-0.002$ \\
     & Fixed angle + opt. & 29/36 & $0.900$ [$0.889$, $0.914$] & $0.884$ & $-0.011$ \\
     & Fourier & 27/36 & $0.860$ [$0.808$, $0.904$] & $0.890$ & $+0.046$ \\
     & Interpolation & 32/36 & $0.891$ [$0.866$, $0.905$] & $0.885$ & $-0.004$ \\
     & Recursive transfer + angle opt. & 2/36$^{\dagger}$ & $0.863$ [$0.850$, $0.876$] & $0.847$ & $-0.016$ \\
     & Recursive transfer + opt. & 2/36$^{\dagger}$ & $0.844$ [$0.829$, $0.859$] & $0.847$ & $+0.003$ \\
     & TQA & 31/36 & $0.822$ [$0.786$, $0.835$] & $0.884$ & $+0.069$ \\
     & TQA + opt. & 25/36 & $0.874$ [$0.862$, $0.892$] & $0.884$ & $-0.003$ \\
     & Transfer strategy & 32/36 & $0.881$ [$0.858$, $0.890$] & $0.884$ & $+0.011$ \\
    \bottomrule
  \end{tabular}
  \par\vspace{2pt}\scriptsize $^{\dagger}$Sparse support ($n<5$); interpret the median descriptively only.
\end{table*}

\begin{table*}[h]
  \centering
  \caption{External exact-statevector QAOA parameter-setting baselines on their available instance support (continued). Baseline entries report median [interquartile range]. For every row, the local median and paired difference are recomputed using only the instances available for that baseline; positive differences favor the best local finite-shot optimizer. Different rows have different support and do not constitute a common controlled ranking.}
  \label{tab:qaoa-external-baselines-continued}
  \footnotesize
  \setlength{\tabcolsep}{4pt}
  \begin{tabular}{@{}rlrrrr@{}}
    \toprule
    Depth & External method & Coverage & Baseline ratio [IQR] & Matched local & Local $-$ baseline \\
    \midrule
    5 & Fixed angle & 13/36 & $0.916$ [$0.903$, $0.940$] & $0.893$ & $-0.024$ \\
     & Fixed angle + opt. & 14/36 & $0.935$ [$0.927$, $0.944$] & $0.893$ & $-0.038$ \\
     & Fourier & 26/36 & $0.828$ [$0.782$, $0.878$] & $0.893$ & $+0.050$ \\
     & Interpolation & 31/36 & $0.932$ [$0.912$, $0.949$] & $0.891$ & $-0.041$ \\
     & Recursive transfer + angle opt. & 2/36$^{\dagger}$ & $0.917$ [$0.909$, $0.924$] & $0.799$ & $-0.117$ \\
     & Recursive transfer + opt. & 2/36$^{\dagger}$ & $0.882$ [$0.869$, $0.895$] & $0.799$ & $-0.083$ \\
     & TQA & 30/36 & $0.876$ [$0.846$, $0.893$] & $0.889$ & $+0.021$ \\
     & TQA + opt. & 24/36 & $0.918$ [$0.899$, $0.929$] & $0.883$ & $-0.034$ \\
     & Transfer strategy & 28/36 & $0.909$ [$0.894$, $0.932$] & $0.885$ & $-0.030$ \\
    \addlinespace
    7 & Fixed angle & 6/36 & $0.949$ [$0.927$, $0.960$] & $0.898$ & $-0.049$ \\
     & Fixed angle + opt. & 6/36 & $0.963$ [$0.958$, $0.965$] & $0.898$ & $-0.065$ \\
     & Fourier & 26/36 & $0.862$ [$0.788$, $0.908$] & $0.894$ & $+0.028$ \\
     & Interpolation & 31/36 & $0.954$ [$0.940$, $0.966$] & $0.890$ & $-0.067$ \\
     & Recursive transfer + angle opt. & 2/36$^{\dagger}$ & $0.941$ [$0.938$, $0.945$] & $0.800$ & $-0.141$ \\
     & Recursive transfer + opt. & 2/36$^{\dagger}$ & $0.906$ [$0.894$, $0.917$] & $0.800$ & $-0.106$ \\
     & TQA & 31/36 & $0.906$ [$0.876$, $0.920$] & $0.890$ & $-0.012$ \\
     & TQA + opt. & 24/36 & $0.941$ [$0.928$, $0.946$] & $0.883$ & $-0.060$ \\
     & Transfer strategy & 31/36 & $0.935$ [$0.916$, $0.954$] & $0.890$ & $-0.050$ \\
    \bottomrule
  \end{tabular}
  \par\vspace{2pt}\scriptsize $^{\dagger}$Sparse support ($n<5$); interpret the median descriptively only.
\end{table*}

\newpage
\bibliographystyle{plainnat}
\bibliography{reference}

@article{cerezo2021variational,
  title={Variational quantum algorithms},
  author={Cerezo, Marco and Arrasmith, Andrew and Babbush, Ryan and Benjamin, Simon C and Endo, Suguru and Fujii, Keisuke and McClean, Jarrod R and Mitarai, Kosuke and Yuan, Xiao and Cincio, Lukasz and others},
  journal={Nature Reviews Physics},
  volume={3},
  number={9},
  pages={625--644},
  year={2021},
  publisher={Nature Publishing Group UK London}
}

@article{mitarai2018quantum,
  title={Quantum circuit learning},
  author={Mitarai, Kosuke and Negoro, Makoto and Kitagawa, Masahiro and Fujii, Keisuke},
  journal={Physical Review A},
  volume={98},
  number={3},
  pages={032309},
  year={2018},
  publisher={APS}
}

@article{sweke2020stochastic,
  title={Stochastic gradient descent for hybrid quantum-classical optimization},
  author={Sweke, Ryan and Wilde, Frederik and Meyer, Johannes and Schuld, Maria and F{\"a}hrmann, Paul K and Meynard-Piganeau, Barth{\'e}l{\'e}my and Eisert, Jens},
  journal={Quantum},
  volume={4},
  pages={314},
  year={2020},
  publisher={Verein zur F{\"o}rderung des Open Access Publizierens in den Quantenwissenschaften}
}

@article{arrasmith2020operator,
  title={Operator sampling for shot-frugal optimization in variational algorithms},
  author={Arrasmith, Andrew and Cincio, Lukasz and Somma, Rolando D and Coles, Patrick J},
  journal={arXiv preprint arXiv:2004.06252},
  year={2020}
}

@article{gacon2021simultaneous,
  title={Simultaneous perturbation stochastic approximation of the quantum fisher information},
  author={Gacon, Julien and Zoufal, Christa and Carleo, Giuseppe and Woerner, Stefan},
  journal={Quantum},
  volume={5},
  pages={567},
  year={2021},
  publisher={Verein zur F{\"o}rderung des Open Access Publizierens in den Quantenwissenschaften}
}

@article{spall2002multivariate,
  title={Multivariate stochastic approximation using a simultaneous perturbation gradient approximation},
  author={Spall, James C},
  journal={IEEE transactions on automatic control},
  volume={37},
  number={3},
  pages={332--341},
  year={2002},
  publisher={IEEE}
}

@article{johnson2013accelerating,
  title={Accelerating stochastic gradient descent using predictive variance reduction},
  author={Johnson, Rie and Zhang, Tong},
  journal={Advances in neural information processing systems},
  volume={26},
  year={2013}
}

@article{liu2018zeroth,
  title={Zeroth-order stochastic variance reduction for nonconvex optimization},
  author={Liu, Sijia and Kailkhura, Bhavya and Chen, Pin-Yu and Ting, Paishun and Chang, Shiyu and Amini, Lisa},
  journal={Advances in neural information processing systems},
  volume={31},
  year={2018}
}

@inproceedings{nguyen2017sarah,
  title={SARAH: A novel method for machine learning problems using stochastic recursive gradient},
  author={Nguyen, Lam M and Liu, Jie and Scheinberg, Katya and Tak{\'a}{\v{c}}, Martin},
  booktitle={International conference on machine learning},
  pages={2613--2621},
  year={2017},
  organization={PMLR}
}

@article{sidford2023quantum,
  title={Quantum speedups for stochastic optimization},
  author={Sidford, Aaron and Zhang, Chenyi},
  journal={Advances in Neural Information Processing Systems},
  volume={36},
  pages={35300--35330},
  year={2023}
}

@article{schuld2019evaluating,
  title={Evaluating analytic gradients on quantum hardware},
  author={Schuld, Maria and Bergholm, Ville and Gogolin, Christian and Izaac, Josh and Killoran, Nathan},
  journal={Physical Review A},
  volume={99},
  number={3},
  pages={032331},
  year={2019},
  publisher={APS}
}

@article{mari2021estimating,
  title={Estimating the gradient and higher-order derivatives on quantum hardware},
  author={Mari, Andrea and Bromley, Thomas R and Killoran, Nathan},
  journal={Physical Review A},
  volume={103},
  number={1},
  pages={012405},
  year={2021},
  publisher={APS}
}

@article{larocca2025barren,
  title={Barren plateaus in variational quantum computing},
  author={Larocca, Martin and Thanasilp, Supanut and Wang, Samson and Sharma, Kunal and Biamonte, Jacob and Coles, Patrick J and Cincio, Lukasz and McClean, Jarrod R and Holmes, Zo{\"e} and Cerezo, Marco},
  journal={Nature Reviews Physics},
  volume={7},
  number={4},
  pages={174--189},
  year={2025},
  publisher={Nature Publishing Group UK London}
}

@article{preskill2018quantum,
  title={Quantum computing in the NISQ era and beyond},
  author={Preskill, John},
  journal={Quantum},
  volume={2},
  pages={79},
  year={2018},
  publisher={Verein zur F{\"o}rderung des Open Access Publizierens in den Quantenwissenschaften}
}

@article{peruzzo2014variational,
  title={A variational eigenvalue solver on a photonic quantum processor},
  author={Peruzzo, Alberto and McClean, Jarrod and Shadbolt, Peter and Yung, Man-Hong and Zhou, Xiao-Qi and Love, Peter J and Aspuru-Guzik, Al{\'a}n and O’brien, Jeremy L},
  journal={Nature communications},
  volume={5},
  number={1},
  pages={4213},
  year={2014},
  publisher={Nature Publishing Group UK London}
}

@article{farhi2014quantum,
  title={A quantum approximate optimization algorithm},
  author={Farhi, Edward and Goldstone, Jeffrey and Gutmann, Sam},
  journal={arXiv preprint arXiv:1411.4028},
  year={2014}
}

@article{kandala2017hardware,
  title={Hardware-efficient variational quantum eigensolver for small molecules and quantum magnets},
  author={Kandala, Abhinav and Mezzacapo, Antonio and Temme, Kristan and Takita, Maika and Brink, Markus and Chow, Jerry M and Gambetta, Jay M},
  journal={nature},
  volume={549},
  number={7671},
  pages={242--246},
  year={2017},
  publisher={Nature Publishing Group UK London}
}

@article{harrigan2021quantum,
  title={Quantum approximate optimization of non-planar graph problems on a planar superconducting processor},
  author={Harrigan, Matthew P and Sung, Kevin J and Neeley, Matthew and Satzinger, Kevin J and Arute, Frank and Arya, Kunal and Atalaya, Juan and Bardin, Joseph C and Barends, Rami and Boixo, Sergio and others},
  journal={Nature Physics},
  volume={17},
  number={3},
  pages={332--336},
  year={2021},
  publisher={Nature Publishing Group UK London}
}

@article{kubler2020adaptive,
  title={An adaptive optimizer for measurement-frugal variational algorithms},
  author={K{\"u}bler, Jonas M and Arrasmith, Andrew and Cincio, Lukasz and Coles, Patrick J},
  journal={Quantum},
  volume={4},
  pages={263},
  year={2020},
}

@article{nakanishi2020sequential,
  title={Sequential minimal optimization for quantum-classical hybrid algorithms},
  author={Nakanishi, Ken M and Fujii, Keisuke and Todo, Synge},
  journal={Physical Review Research},
  volume={2},
  number={4},
  pages={043158},
  year={2020},
}

@article{stokes2020quantum,
  title={Quantum natural gradient},
  author={Stokes, James and Izaac, Josh and Killoran, Nathan and Carleo, Giuseppe},
  journal={Quantum},
  volume={4},
  pages={269},
  year={2020},
}

@article{ghadimi2013stochastic,
  title={Stochastic first-and zeroth-order methods for nonconvex stochastic programming},
  author={Ghadimi, Saeed and Lan, Guanghui},
  journal={SIAM journal on optimization},
  volume={23},
  number={4},
  pages={2341--2368},
  year={2013},
  publisher={SIAM}
}

@article{sack2024large,
  title={Large-scale quantum approximate optimization on nonplanar graphs with machine learning noise mitigation},
  author={Sack, Stefan H and Egger, Daniel J},
  journal={Physical Review Research},
  volume={6},
  number={1},
  pages={013223},
  year={2024},
  publisher={APS}
}





\end{document}